\documentclass{article}
%To instruct arXiv for using pdfLatex (Must be in the first 5 lines to take effect -- no ps will be generated for download.)
\pdfoutput=1

\usepackage[utf8]{inputenc}
\usepackage[margin=1in]{geometry}
\usepackage{amsthm,amsmath,amsfonts}
\usepackage{mathtools}
\usepackage{thmtools}
\usepackage{xfrac}
\usepackage{physics}
\usepackage{xcolor}
\usepackage[backref=page,colorlinks,bookmarks=true]{hyperref}
\usepackage[capitalise,noabbrev,nameinlink]{cleveref}
\usepackage[shortlabels]{enumitem}
\usepackage{svg}
\usepackage{subcaption}
\usepackage{cancel}
\usepackage{array, booktabs, makecell}

\interfootnotelinepenalty=10000 % prevents tex from breaking footnotes across pages

\setenumerate[1]{wide = 0pt,labelwidth = 0em, itemsep = 0em, leftmargin=*}
\newlist{defaultenumerate}{enumerate}{3}
\setlist[defaultenumerate,1]{label=\arabic*.}%
\setlist[defaultenumerate,2]{label=\arabic*.}%
\setlist[defaultenumerate,3]{label=\arabic*.}%

% redefine formatting of backwards references
\renewcommand{\backref}[1]{}

\renewcommand{\backrefalt}[4]{%
\ifcase #1 %
\or
[p.\ #2]%
\else
[pp.\ #2]%
\fi}

% fix spacing of \left and \right delimeters
\let\originalleft\left
\let\originalright\right
\renewcommand{\left}{\mathopen{}\mathclose\bgroup\originalleft}
\renewcommand{\right}{\aftergroup\egroup\originalright}

% Make citations inside theorems/emph blocks upright
\makeatletter
\def\@cite#1#2{\textup{[{#1\if@tempswa , #2\fi}]}}
\makeatother

%labeled items in descriptions
\makeatletter
\let\orgdescriptionlabel\descriptionlabel
\renewcommand*{\descriptionlabel}[1]{%
  \let\orglabel\label
  \let\label\@gobble
  \phantomsection
  \edef\@currentlabel{#1\unskip}%
  \let\label\orglabel
  \orgdescriptionlabel{#1}%
}
\makeatother

\DeclarePairedDelimiter\ceil{\lceil}{\rceil}
\DeclarePairedDelimiter\floor{\lfloor}{\rfloor}

\newcommand{\oh}[1]{O\left(#1\right)}

\theoremstyle{plain}

\newtheorem{Lemma}{Lemma}
\newtheorem*{Problem*}{Problem}
\newtheorem*{Problems*}{Problems}
\newtheorem{Theorem}{Theorem}
\newtheorem{Fact}{Fact}
\newtheorem{Corollary}{Corollary}
\newtheorem{Definition}{Definition}
\newtheorem{Proposition}{Proposition}
\theoremstyle{definition}

\newtheorem*{Remark*}{Remark}
\newtheorem*{Remarks*}{Remarks}

\makeatletter
\def\namedlabel#1#2{\begingroup
    #2%
    \def\@currentlabel{#2}%
    \phantomsection\label{#1}\endgroup
}
\makeatother

 % oxford comma

\crefname{Lemma}{lemma}{lemmas}
\crefname{Lemma}{Lemma}{Lemmas}

\crefname{Corollary}{corollary}{corollaries}
\Crefname{Corollary}{Corollary}{Corollaries}

\crefname{remark}{remark}{remarks}
\Crefname{remark}{Remark}{Remarks}

\crefname{Proposition}{proposition}{propositions}
\Crefname{Proposition}{Proposition}{Propositions}

\crefname{table}{table}{tables}
\Crefname{table}{Table}{Tables}

\crefname{description}{description}{descriptions}
\Crefname{description}{Description}{Descriptions}

\crefname{strategy}{strategy}{strategies}
\Crefname{strategy}{Strategy}{Strategies}

\crefname{property}{property}{properties}
\Crefname{property}{Property}{Properties}

\DeclareMathOperator{\adv}{\mathrm{Adv}}
\DeclareMathOperator{\gammatwo}{\gamma_2}
\DeclareMathOperator{\twodots}{..}
\DeclareMathOperator{\OR}{\mathsf{OR}}
\DeclareMathOperator{\EQUAL}{\mathsf{EQUAL}}
\DeclareMathOperator{\EXACT}{\mathsf{EXACT}_{\text{2 of 3}}}

\DeclareMathOperator{\AND}{\mathsf{AND}}
\DeclareMathOperator{\ANDOR}{\mathsf{AND-OR}}

\DeclareMathOperator{\SWITCH}{\mathsf{SWITCH-CASE}}
\DeclareMathOperator{\aux}{\mathrm{aux}}

\DeclareMathOperator{\LCS}{\mathsf{LCS}}
\DeclareMathOperator{\LIS}{\mathsf{LIS}}
\DeclareMathOperator{\ED}{\mathsf{ED}}

\newcommand{\expect}{\mathbb{E}}
\newcommand{\calO}{\mathcal{O}}
\newcommand{\calR}{\mathcal{R}}
\newcommand{\mt}{\textsf{min-last}}
\newcommand{\mh}{\textsf{max-first}}

\newcommand{\textleft}{\text{left}}
\newcommand{\textright}{\text{right}}

\newcommand{\Otilde}{\tilde{O}}
\newcommand{\LCSC}[2]{\LCS^{\mathrm{composite}}_{#1, #2}}
\newcommand{\s}{\mathbf{s}}

\title{\bfseries Quantum divide and conquer}

\author{\normalsize 
Andrew M. Childs\footnote{University of Maryland. \texttt{amchilds@umd.edu}} \;\; 
Robin Kothari\footnote{Microsoft Quantum. \texttt{robin@robinkothari.com}} \;\; 
Matt Kovacs-Deak\footnote{University of Maryland. \texttt{kovacs@umd.edu}} \;\;
Aarthi Sundaram\footnote{Microsoft Quantum. \texttt{aarthi.sundaram@microsoft.com}} \;\;
Daochen Wang\footnote{University of Maryland. \texttt{wdaochen@gmail.com}}
}

\date{\vspace{-4ex}}

%%%%%%%%%%%%%%%%%%%%%%%%%%%%%%%%%%%%%%%%%%%%%%%%%%
\begin{document}

\maketitle

\begin{abstract}
    The divide-and-conquer framework, used extensively in classical algorithm design, recursively breaks a problem of size $n$ into smaller subproblems (say, $a$ copies of size $n/b$ each), along with some auxiliary work of cost $C^{\aux}(n)$, to give a recurrence relation
    $$C(n) \leq a \, C(n/b) + C^{\aux}(n)$$
    for the classical complexity $C(n)$.
    We describe a quantum divide-and-conquer framework that, in certain cases, yields an analogous recurrence relation
    $$C_Q(n) \leq \sqrt{a} \, C_Q(n/b) + O(C^{\aux}_Q(n))$$
    that characterizes the quantum query complexity.
    We apply this framework to obtain near-optimal quantum query complexities for various string problems, such as (i) recognizing regular languages; (ii) decision versions of String Rotation and String Suffix; and natural parameterized versions of (iii) Longest Increasing Subsequence and (iv)  Longest Common Subsequence.
\end{abstract}

%%%%%%%%%%%%%%%%%%%%%%%%%%%%%%%%%%%%%%%%%%%%%%%%%%
\section{Introduction}
Classically, divide and conquer is a basic algorithmic framework that applies widely to many problems (see, e.g., \cite{quicksort_hoare_1962, multiplication_karatsuba_1963, fourier_cooley_tukey_1965, matrix_strassen_1969, matrix_coppersmith_winograd_1990, algorithms_cormen_2009, matrix_williams_2021}). This technique solves a problem by recursively solving smaller instances of the same problem together with some auxiliary problem.  In this paper, we develop a quantum analog of this framework that can provide quantum query complexity speedups relative to classical computation.

To motivate this investigation, consider a simple application of classical divide and conquer. Consider the problem of computing the $\OR$ function on $n$ bits. The classical deterministic query complexity of this problem is well known to be $n$, since an algorithm must read all $n$ bits in the worst case to check if any of the bits equals $1$. This upper bound is trivial since any query problem on $n$ bits can be solved by querying all $n$ bits, but to illustrate the quantum idea, we first consider re-deriving this classical upper bound using divide and conquer.

It is easy to see that for an $n$-bit string $x$, $\OR(x)=1$ if and only if either $\OR(x_\textrm{left})=1$ or $\OR(x_\textrm{right})=1$, where $x_\textrm{left}$ and $x_\textrm{right}$ are the left and right halves of $x$. Thus we have divided the original problem into two smaller instances of the same problem, and given solutions to the two smaller instances, no further queries need to be made to solve the larger instance. Letting $C(n)$ denote the classical query complexity of this problem, this argument yields the recurrence relation
\begin{equation}
    C(n) \leq 2\, C(n/2),
\end{equation}
which, noting $C(1) = 1$, solves to $C(n)\leq n$, as desired.

Now notice that the last step that combines the solutions of the smaller instances to solve the larger instance is also an $\OR$ function, but on $2$ bits, since $\OR(x)=\OR(x_\textrm{left}) \vee \OR(x_\textrm{right})$. We know that the quantum query complexity of the $\OR$ function on $n$ bits is $O(\sqrt{n})$ due to Grover's algorithm \cite{search_grover_1996}, so we might be tempted to say that the $\OR$ of 2 bits should be faster to compute, potentially even quadratically faster. Thus we might like to write the following recurrence relation for the quantum query complexity $Q(n)$ of computing the $\OR$ function:
\begin{equation}\label{eq:scarequotes}
    Q(n) \leq \text{``$\sqrt{2}$''} \, Q(n/2).
\end{equation}
Here the quotes around the ``$\sqrt{2}$'' convey that this is not really justified. First, the complexity of Grover's algorithm is not exactly $\sqrt{n}$, but only $\sqrt{n}$ up to a constant factor. Even if it were exactly $\sqrt{n}$, Grover's algorithm is a bounded-error algorithm, so if used as a subroutine that calls itself many times, the error will become unbounded very quickly. And of course, it does not make sense to imagine that the query complexity is a non-integer.

Even though we have just argued that \Cref{eq:scarequotes} is questionable, if we solve this recurrence anyway, we find $Q(n) \leq \sqrt{n}$.\footnote{This may be familiar to readers who know the quantum adversary method. However, our divide-and-conquer framework generalizes this phenomenon to a scenario that cannot be captured using adversary machinery alone, as discussed below.} Up to a multiplicative constant, this is the correct answer for the quantum query complexity of the $\OR$ function on $n$ bits! 

This is not a coincidence. As we show in this paper, this can be seen as a simple instance of a quantum algorithmic framework that we call ``quantum divide and conquer.'' More generally, a classical divide-and-conquer algorithm typically yields a recurrence relation of the form
\begin{equation}\label{eq:classical_recurrence}
    C(n) \leq a\,C(n/b) + C^{\aux}(n),
\end{equation}
where $C^{\aux}(n)$ is the classical query complexity of the auxiliary problem.

The main conceptual takeaway of our paper is that, in many settings, the quantum query complexity equals (up to a multiplicative constant) the solution, $C_Q(n)$, 
of the analogous recurrence relation,
\begin{equation}\label{eq:quantum_recurrence}
    C_Q(n) \leq \sqrt{a} \, C_Q(n/b) + C^{\aux}_Q(n),
\end{equation}
where the quantum query complexity of the auxiliary problem is $O(C^{\aux}_Q(n))$. We show that this framework easily recovers and improves non-trivial quantum speedups that were previously known. Furthermore, we show it yields previously unknown quantum speedups that do not seem to be achievable by standard applications of techniques such as Grover search~\cite{search_grover_1996}, amplitude amplification and estimation~\cite{amplitude_brassard_2002}, and quantum walk search~\cite{qwalk_elem_distictness_ambainis_2007}. 

%=================================================
\subsection{Results}

We apply quantum divide and conquer to the following decision problems, which we have ordered in roughly increasing difficulty. In all of these  problems, we are given query access to a string $s\in\Sigma^a$ over some set $\Sigma$, i.e., query access to a function $f\colon \{1,\ldots,a\} \to \Sigma$, with $f(i) = s_i$.

\paragraph{Recognizing regular languages.} We recover the key algorithmic result in \cite{trichotomy_aaronson_2019} that is used to upper bound the quantum query complexity of recognizing star-free languages. (The latter result is, in turn, the key result used by \cite{trichotomy_aaronson_2019} to establish a quantum query complexity trichotomy for recognizing regular languages.) In particular, we show that $O(\sqrt{n\log n})$ quantum queries suffice to decide whether a string $x \in \{0,1,2\}^n$ contains a substring of the form $20^*2$, i.e., two $2$s on either side of an arbitrarily long string of $0$s. Having established our framework, the analysis is simpler than that of \cite{trichotomy_aaronson_2019}.

\paragraph{String Rotation and String Suffix.} Let $x\in \Sigma^n$ and $i\in \{1,\ldots,n\}$. In String Rotation, the problem is to decide whether the lexicographically minimal rotation of $x$ starts at $i$. In String Suffix, the problem is to decide whether the lexicographically minimal suffix of $x$ starts at $i$.
These problems are natural decision versions of problems studied in \cite{string_akmaljin_2022}. We obtain an $\tilde{O}(\sqrt{n})$ upper bound on the quantum query complexity for these problems, improving the previous best bound of $O(n^{1/2+o(1)})$, which follows from the algorithms of \cite{string_akmaljin_2022} for the search versions. Furthermore, the analysis is again simpler using our framework.

\paragraph{$k$-Increasing Subsequence ($k$-IS).} Given a string $x\in \Sigma^n$ over some ordered set $\Sigma$ and an integer $k\geq 1$, the $k$-IS problem asks us to decide whether $x$ has an increasing subsequence\footnote{Recall that a subsequence of a string is obtained by taking a subset of the elements without changing their order. Note that this is more general than a sub\emph{string}, in which the chosen elements must be consecutive.} of length at least $k$. We obtain a bound of $\Otilde(\sqrt{n})$ for any constant $k$. The $k$-IS problem is a natural parameterized version of Longest Increasing Subsequence (LIS), which is well-studied classically (see, e.g.,~\cite{lis_fredman_1975, lis_aldous_1999,lis_saks_2010,lcs_lis_rubinstein_2019}). We consider this version because the quantum query complexity of LIS is easily seen to be\footnote{This follows by reduction from a threshold function such as majority \cite{lower_bbcmw_2001}. Given a string $z \in \{0,1\}^n$, define $x \in \{0,1,\ldots,n\}^n$ such that $x_i = i z_i$ for all $i \in \{1,\ldots,n\}$. Then we can determine the Hamming weight $|z|$ of $z$ by classically querying $z_1$ and adding $z_1-1$ to the length of the LIS of $x$.} $\Omega(n)$, so we cannot hope for a quantum speedup without a bound on $k$. Somewhat surprisingly, any constant bound on $k$ allows for a quadratic speedup over classical algorithms.

\paragraph{$k$-Common Subsequence ($k$-CS).} Given two strings $x,y\in \Sigma^n$  over some set $\Sigma$ and integer $k\geq 1$, the $k$-CS problem asks us to decide whether $x$ and $y$ share a common subsequence of length at least $k$. We obtain a bound of $\Otilde(n^{2/3})$ for any constant $k$. The $k$-CS problem is a natural parametrized version of Longest Common Subsequence (LCS), which is well-studied classically (see, e.g., \cite{lcs_wagner_1974,lcs_landau_1998,algorithms_cormen_2009,lcs_asymmetric_andoni_2010,lcs_abboud_2015, lcs_lis_rubinstein_2019}). Again, we consider this version because the quantum query complexity of LCS is easily seen to be\footnote{This also follows by reduction from a threshold function such as majority. Given a string $z\in \{0,1\}^n$, define $x=1^n$. Then we can determine the Hamming weight $|z|$ of $z$ as the length of the longest common subsequence of $z$ and $x$.} $\Omega(n)$.

Note that LCS is closely connected to Edit Distance (ED), which asks us to compute the number of insertions, deletions, and substitutions required to convert $x$ into $y$. In fact, if substitutions were disallowed, the edit distance between $x$ and $y$ would equal $\ED(x,y) = 2n - 2\, \LCS(x,y)$ (a proof can be found in, e.g., \cite[Lemma 6]{lcs_akmal_2021}).

\paragraph{}All of our quantum query complexity upper bounds are tight up to logarithmic factors and represent quantum-over-classical speedups since the classical (randomized) query complexities of all problems considered are $\Omega(n)$.

%=================================================
\subsection{Techniques}
\label{sec:techniques}
To derive the quantum recurrence relation in \Cref{eq:quantum_recurrence}, we employ the quantum adversary method as an upper bound on quantum query complexity~\cite{reflections_reichardt_2011,stateconversion_lee_2011}. The adversary quantity $\adv(P(n))$ is a real number that can be associated with any query problem $P(n)$ of size $n$. For convenience, we write $\adv(n) \coloneqq \adv(P(n))$ when $P$ is clear from context. It is well-known that $\adv(n)$ is upper and lower bounded by the (two-sided bounded error) quantum query complexity, $Q(n)$, of $P(n)$ up to multiplicative constants. 

If a problem can be expressed as the composition of smaller problems, then the adversary quantity of the larger problem can be expressed in terms of those of the smaller problems~\cite{reflections_reichardt_2011,stateconversion_lee_2011}.
For example, suppose $P(n)$ is the $\OR$ of $a$ copies of $P(n/b)$ together with another problem $P^{\aux}(n)$, corresponding to the auxiliary work performed in a divide-and-conquer algorithm. Then the composition theorem implies
\begin{equation}
    \adv(n)^2 \leq a\, \adv(n/b)^2 + \adv(P^{\aux}(n))^2.
\end{equation}
(For a more precise statement, see~\Cref{sec:framework}.)
Taking square roots\footnote{Note that while taking square roots gives a closer analogy with the classical recurrence, in some cases we can get a slightly tighter bound by instead directly solving the recurrence for $\adv(n)^2$.} and using the fact that $\adv(P^{\aux}(n))$ is bounded by $O(Q^{\aux}(n))$, where $Q^{\aux}(n) \coloneqq Q(P^{\aux}(n))$, we obtain precisely \Cref{eq:quantum_recurrence}, i.e.,
\begin{equation}\label{eq:quantum_recurrence_copy}
    \adv(n) \leq \sqrt{a}\, \adv(n/b) + O(Q^{\aux}(n)).
\end{equation}
We then bound $Q^{\aux}(n)$ by constructing an explicit quantum algorithm for the auxiliary work and bounding its quantum query complexity. Finally, we solve \Cref{eq:quantum_recurrence_copy} either directly or by appealing to the Master Theorem~\cite{master_theorem_bentley_haken_saxe_1980} to obtain a bound on $\adv(n)$ and hence a bound on $Q(n)$. Note that this bound is insensitive (up to a multiplicative constant) to the constant implicit in $O(Q^{\aux}(n))$, whereas it is highly sensitive to the constant in front of $\adv(n/b)$.

We emphasize that the recurrence in \Cref{eq:quantum_recurrence_copy} involves \emph{both} the adversary quantity and the quantum query complexity. At a high level, this is why the bound on $Q(n)$ resulting from solving \Cref{eq:quantum_recurrence_copy} does not follow directly from adversary techniques alone nor from quantum algorithmic techniques alone. Indeed, when we solve \Cref{eq:quantum_recurrence_copy} as a recurrence relation, we mix adversary and algorithmic techniques at each step of the recursion. In principle, it is possible to use \Cref{eq:quantum_recurrence_copy} to construct an explicit quantum algorithm for $P(n)$ because there is a constructive, complexity-preserving, and two-way correspondence between span programs---whose complexity is characterized by the adversary quantity---and quantum algorithms~\cite{span_reichardt_2009,span_cornelissen_2020, span_jeffery_2022}. However, this correspondence is involved.

This quantum divide-and-conquer framework allows us to use \emph{classical} divide-and-conquer thinking to come up with a classical recurrence relation that we can easily convert to a corresponding quantum recurrence relation in the form of \Cref{eq:quantum_recurrence_copy}. Bounding $Q(n)$ then reduces to bounding $Q^{\aux}(n)$, which can be easier than the original problem. In our applications to $k$-IS and $k$-CS, we bound $Q^{\aux}(n)$ by using quantum algorithms for the $i<k$ versions of these problems, whose query complexities we can bound inductively. Note that the base cases ($k=1$) of these problems are trivial, being equivalent to search and (bipartite) element distinctness, respectively.

The form of \Cref{eq:quantum_recurrence_copy} depends on the way we divide and conquer. Strikingly, in our application to $k$-CS, we find that an optimal (up to logarithmic factors) quantum query complexity can be derived by breaking the problem into \emph{seven} parts (so that $b=7$) but not fewer. Classically, the number of parts we break the problem into makes no difference as any choice leads to an $O(n)$ classical query complexity.

In the discussion above, we have assumed that the $\OR$ function relates $P(n)$ to $P(n/b)$ and $P^{\aux}(n)$. However, this is only to simplify our exposition. In fact, we can use quantum divide and conquer for \emph{any} function relating $P(n)$ to $P(n/b)$ and $P^{\aux}(n)$. Functions that we use to obtain quantum speedups include combinations of $\OR$ and $\AND$. In our application of quantum divide and conquer to $k$-CS, the function is a combination of $\SWITCH$ and $\OR$. While $\SWITCH$ alone yields no quantum speedup, our analysis relies on an adversary composition theorem that we prove, which allows us to preserve the speedup yielded by $\OR$. We remark that there are  many other functions that could yield quantum speedups, but which we have not yet considered in applications. Examples include $\EQUAL$ (which decides if the subproblems all return the same answer or not) and $\EXACT$ (which decides if exactly $2$ of $3$ subproblems return $1$)---see \cite[Table 1]{formulas_reichardt_2012}---and their combinations with $\OR$, $\AND$, and $\SWITCH$. We leave the consideration of such functions for future work.

%=================================================
\subsection{Related work}

The technique of using adversary composition theorems (and their relatives) to establish upper bounds on quantum query complexity has been applied in several previous works, beginning with the setting of formula evaluation~\cite{unbalanced_reichardt_2009, formulas_reichardt_2012}. In \cite{adversaryupper_kimmel_2013}, an adversary composition theorem is used to bound the quantum query complexity of a function $f$ given a bound on that of $f$ composed with itself $d$ times. More generally, the adversary quantity has been used to upper bound the quantum query complexity of problems including $k$-distinctness~\cite{kdist_belovs_2012}, triangle finding~\cite{span1_belovs_2012,triangle_lee_2017}, undirected $st$-connectivity~\cite{st_belovs_2012}, and learning symmetric juntas~\cite{juntas_belovs_2014}. In contrast to our work, these prior results typically bound the adversary quantity using only tools from the adversary world, such as span programs, semidefinite programming, and composition theorems, without constructing quantum algorithms to use as subroutines.

The first two application areas of quantum divide and conquer that we consider (recognizing regular languages and String Rotation and String Suffix) arise in \cite{trichotomy_aaronson_2019,string_akmaljin_2022} as discussed above. Turning to $k$-IS and $k$-CS, we first note that, despite the importance of these problems, there have been no significant works on their quantum complexities. Directly using Grover search~\cite{search_grover_1996} over all possible positions of a $1$-witness gives trivial $O(n)$ bounds as soon as $k\geq 2$. Directly using quantum walk search~\cite{qwalk_elem_distictness_ambainis_2007} also yields highly sub-optimal bounds of $O(n^{k/(k+1)})$ for $k$-IS and $O(n^{2k/(2k+1)})$ for $k$-CS.

Classically, LIS and LCS are typically solved using dynamic programming. There are many recent works on quantum speedups for dynamic programming, e.g., \cite{dp_ambainis_2019, dp_abboud_2019, dp_glos_2021,2d_grid_ambainis_2020, dp_klevickis_2022}. The main idea of these works is to \emph{classically} query a part of the input and to repeatedly use the result together with a quantum algorithm to solve many overlapping subproblems. However, we found it difficult to apply this idea to $k$-IS and $k$-CS.

Given the generality of the results in \cite{trichotomy_aaronson_2019} and the fact that we can recover some of its results, one might ask if the converse holds, i.e., if \cite{trichotomy_aaronson_2019} can recover our results on $k$-IS and $k$-CS. Indeed, the results of \cite{trichotomy_aaronson_2019} do apply\footnote{For example, $k$-IS can be expressed in first-order logic as $\bigvee_{a_1 < a_2 < \cdots < a_k} \exists i_1, \ldots, i_k (i_1 < \cdots < i_k) \bigwedge_{j=1}^k \pi_{a_j} (i_j)$ where $\pi_a(i)$ indicates that the $i$th element is $a$. Expressibility in first-order logic is equivalent to being in a star-free language, which can be recognized with quantum query complexity $\tilde{\Theta}(\sqrt{n})$ by \cite{trichotomy_aaronson_2019}.} to $k$-IS and $k$-CS when the set size $|\Sigma|$ is \emph{constant}. However, under that assumption, these problems can be easily solved by Grover search~\cite{search_grover_1996} and quantum minimum finding~\cite{minimum_finding_durr_hoyer_1996} using $O(\sqrt{n})$ queries. Moreover, $k$-CS becomes qualitatively easier when $|\Sigma|$ is constant as its complexity drops to $O(\sqrt{n})$, down from the $\tilde{\Theta}(n^{2/3})$ bound that we establish when $|\Sigma|$ is unbounded. Note that $\tilde{\Theta}(n^{2/3})$ is not in the trichotomy of \cite{trichotomy_aaronson_2019}, i.e., not one of $\Theta(1)$, $\tilde{\Theta}(\sqrt{n})$, or $\Theta(n)$.

\paragraph{Organization.} This paper is organized as follows. We first introduce background material and describe how to use the quantum divide-and-conquer framework for certain scenarios in~\Cref{sec:framework}. Then we apply the framework to four string problems in~\Cref{sec:applns}, obtaining near-optimal quantum query complexities for each. In particular, we consider recognizing regular languages in \Cref{subsec:regular}, String Rotation and String Suffix in \Cref{subsec:stringrot}, $k$-Increasing Subsequence in \Cref{subsec:kis}, and $k$-Common Subsequence in \Cref{subsec:kcs}. Finally, we conclude by presenting some open problems in~\Cref{sec:future}.

%%%%%%%%%%%%%%%%%%%%%%%%%%%%%%%%%%%%%%%%%%%%%%%%%%
\section{Framework}\label{sec:framework}

In this section we introduce some notation and explain how to apply the quantum divide-and-conquer framework in specific scenarios.

Let $\mathbb{N}$ denote the positive integers. We reserve $m, n \in \mathbb{N}$ for positive integers and $\Sigma$ for a finite set. For $m\in \mathbb{N}$, we write $[m]\coloneqq \{1,\ldots,m\}$. For any two matrices $A$ and $B$ of the same size, $A \circ B$ denotes their entrywise product, also known as their Hadamard product. For any matrix $A$, let $(A)_{ij}$ denote the element in the $i$th row and $j$th column. For a function $f\colon D \to E$, let its Gram matrix $F$ be defined as $(F)_{xy} \coloneqq \delta_{f(x),f(y)}$ for all $x,y\in D$, where $\delta$ is the Kronecker delta function. For a length-$n$ string $x \in \Sigma^n$, we write $x[i]$ for the $i$th element of $x$ and $x[i \twodots j]$ for the substring of $x$ that ranges from its $i$th to $j$th elements (inclusive). We write $x(i\twodots j] \coloneqq x[i+1\twodots j]$ and $x[i\twodots j) \coloneqq x[i\twodots j-1]$. When $\Sigma$ is equipped with a total order, and $x\in \Sigma^m$ and $y\in \Sigma^n$, we write inequalities between $x$ and $y$ (e.g., $x\leq y$) to refer to inequalities with respect to the lexicographic order.

We write $Q(f)$ for $Q_{1/3}(f)$ (the quantum query complexity of $f$ with failure probability at most $1/3$).

A closely related quantity that is also used widely, and serves as both a lower and upper bound for the quantum query complexity of function computation, is the adversary quantity, $\adv (\cdot)$.

\begin{Definition}[Adversary quantity]\label{def:adv}
Let $D$, $E$, and $\Sigma$ be finite sets and $n\in \mathbb{N}$ with $D \subseteq \Sigma^n$. Let $f\colon D \to E$ be a function with Gram matrix $F$ and $\Delta = \{\Delta_1, \ldots, \Delta_n\}$ with $(\Delta_j)_{x, y \in D} = 1 - \delta_{x_j,y_j}$. Then the \emph{adversary quantity} for $f$ is
\begin{align*}
    \adv(f) := \max  \Big\{ \norm{\Gamma} : \forall j \in [n], \, \norm{\Gamma \circ \Delta_j} \leq 1 \Big\},
\end{align*}
where the maximization is over $|D| \times |D|$ real, symmetric matrices $\Gamma$ satisfying $\Gamma \circ F = 0$.
\end{Definition}

The following result connecting $\adv(\cdot)$ and $Q(\cdot)$ will be useful.

\begin{Theorem}[\cite{adversary_hoyer_2007,stateconversion_lee_2011}]\label{thm:adv-q}
Let $f\colon D \to E$ be a function. Then $Q(f) = \Theta(\adv(f))$.
\end{Theorem}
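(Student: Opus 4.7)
The plan is to prove the two directions $Q(f) = \Omega(\adv(f))$ and $Q(f) = O(\adv(f))$ separately, as they require quite different techniques. The lower bound is the classical adversary method of Høyer, Lee, and Špalek, while the upper bound uses the tight connection to span programs via state conversion.

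For the lower bound $Q(f) = \Omega(\adv(f))$, I would fix an optimal adversary matrix $\Gamma$ (so $\norm{\Gamma} = \adv(f)$ and $\norm{\Gamma \circ \Delta_j} \leq 1$ for all $j \in [n]$) and introduce a progress function tracking how a hypothetical $T$-query algorithm distinguishes inputs. Writing $\ket{\psi_x^t}$ for the algorithm's joint state (on input, query, and workspace registers) after $t$ queries on input $x$, define $W^t \coloneqq \sum_{x,y \in D} \Gamma_{xy} \braket{\psi_x^t}{\psi_y^t}$. I would then argue three things: (i) initially, $W^0 = \norm{\Gamma} \cdot (\text{const})$ by choosing the initial states to be identical and using the Perron-Frobenius eigenvector of $\Gamma$; (ii) at the end, boundedness of error forces $|W^T|$ to be small (roughly $W^0$ scaled down by a constant that depends only on the success probability $2/3$), since the final states on inputs with differing $f$-values must be nearly orthogonal; and (iii) each query changes $W^t$ by at most $2 \norm{\Gamma \circ \Delta_j} \leq 2$, because the only inner products that can change are those between inputs differing in the queried coordinate. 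Combining these yields $T = \Omega(\norm{\Gamma}) = \Omega(\adv(f))$.

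For the upper bound $Q(f) = O(\adv(f))$, I would invoke the state conversion framework of Lee, Mittal, Reichardt, Špalek, and Szegedy. The key idea is that the SDP dual to the adversary SDP in \Cref{def:adv} is feasibility-equivalent to the existence of a span program (or equivalently, a dual adversary solution) with witness size $\adv(f)$. Concretely, I would take an optimal $\Gamma$, pass to the dual SDP, extract a dual adversary solution of value $\adv(f)$, and then apply the conversion theorem that turns any dual adversary solution of value $W$ into a quantum algorithm making $O(W)$ queries with bounded error. This conversion proceeds by building a phase-estimation-based algorithm on a sequence of reflections constructed from the dual solution, as in Reichardt's span-program-to-algorithm compilation.

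The main obstacle is the upper bound direction: both the SDP duality step (which requires checking strong duality and producing an explicit dual solution from the primal adversary matrix) and the span-program-to-algorithm compilation (which involves phase estimation of a product of input-dependent reflections and a careful witness-size analysis) are substantial pieces of machinery. For the purposes of this paper, however, we only need the statement, so I would cite \cite{adversary_hoyer_2007} for the lower bound and \cite{stateconversion_lee_2011} for the upper bound rather than reproducing these arguments. The lower bound calculation is self-contained and short; the upper bound is where essentially all the technical depth lies.
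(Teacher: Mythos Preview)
The paper does not prove this theorem at all; it is stated as a black-box citation to \cite{adversary_hoyer_2007,stateconversion_lee_2011} and used without further comment. Your proposal to cite these references rather than reproduce the arguments is therefore exactly what the paper does, and your additional sketch of the two directions goes beyond what the paper provides.

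One minor point on your lower-bound sketch: invoking the ``Perron--Frobenius eigenvector of $\Gamma$'' is not quite right in the generality of \Cref{def:adv}, since $\Gamma$ is allowed to have negative entries (this is the negative-weight adversary). The correct object is simply a unit eigenvector of the symmetric matrix $\Gamma$ corresponding to an eigenvalue of absolute value $\norm{\Gamma}$; one weights the progress function by this eigenvector rather than by a Perron vector. The rest of the hybrid argument (initial value $\pm\norm{\Gamma}$, bounded change per query via $\norm{\Gamma\circ\Delta_j}\le 1$, and a final-state separation bound) is as you describe.
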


The adversary quantity for a composite function can be expressed in terms of the adversary quantities of its components. In this work, we specifically consider cases where (i) $g = f_1 \vee f_2$; (ii) $g = f_1 \wedge f_2$; or (iii) a $\SWITCH$ scenario on functions $f\colon D \to E$ and $\{g_s : D \to \{0, 1\} \mid s \in E\}$ defined as $h \coloneqq (\text{IF } f(x) = s) \text{ THEN } h(x) = g_{s}(x)$ for $s \in E$, $x \in D$.

\begin{restatable}[{Adversary composition for $\OR$ and $\AND$}]{Lemma}{lemAdversaryAndOr}
\label{lem:and-or}
Let $a,b \in \mathbb{N}$ and let $\Sigma$ be a finite set. Let $f_1 \colon \Sigma^a \to \{0,1\}$ and $f_2 \colon \Sigma^b \to \{0,1\}$. Let $g^\wedge, g^\vee \colon \Sigma^a \times \Sigma^b \to \{0,1\}$ be such that $g^\wedge(x,y) = f_1(x) \wedge f_2(y)$ and $g^\vee(x,y) = f_1(x) \vee f_2(y)$. Then $\adv(g^\wedge)^2 = \adv(g^\vee)^2 \leq \adv(f_1)^2 + \adv(f_2)^2$.
\end{restatable}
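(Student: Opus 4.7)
My plan is to work with the dual (bilinear) form of the adversary SDP, in which $\adv(f)$ is realized by two families of vectors $\{|\phi_{x,j}\rangle\}_{x \in f^{-1}(0), j}$ and $\{|\psi_{y,j}\rangle\}_{y \in f^{-1}(1), j}$ satisfying $\sum_{j:\, x_j \neq y_j} \langle\phi_{x,j}|\psi_{y,j}\rangle = 1$ for every $0$-input $x$ and $1$-input $y$, with objective $\max\!\bigl(\max_x \sqrt{\sum_j \|\phi_{x,j}\|^2},\, \max_y \sqrt{\sum_j \|\psi_{y,j}\|^2}\bigr) = \adv(f)$. I will build a feasible solution for $g^\vee$ from optimal ones for $f_1$ and $f_2$; the AND case then follows from De Morgan together with the output-negation invariance $\adv(f) = \adv(\neg f)$, which is immediate from \Cref{def:adv} since the Gram matrix $F$ depends only on the partition of the domain into level sets.

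Let $A_c = f_1^{-1}(c)$ and $B_c = f_2^{-1}(c)$. The $0$-inputs of $g^\vee$ form $A_0 \times B_0$, and the $1$-inputs split according to $(f_1(x'), f_2(y')) \in \{(1,0), (0,1), (1,1)\}$. Fix optimal $f_i$-duals with vectors $|\phi^{(i)}_{\cdot,j}\rangle$ on $f_i^{-1}(0)$ and $|\psi^{(i)}_{\cdot,j}\rangle$ on $f_i^{-1}(1)$. For $(x,y) \in A_0 \times B_0$ and any $1$-input $(x',y')$, define
\[
|\phi_{(x,y),j}\rangle = \begin{cases} |\phi^{(1)}_{x,j}\rangle & j \in [a], \\ |\phi^{(2)}_{y,j-a}\rangle & j \in \{a+1,\ldots,a+b\}, \end{cases} \qquad |\psi_{(x',y'),j}\rangle = \begin{cases} \lambda\,|\psi^{(1)}_{x',j}\rangle & j \in [a],\ f_1(x')=1, \\ \mu\,|\psi^{(2)}_{y',j-a}\rangle & j \in \{a+1,\ldots,a+b\},\ f_2(y')=1, \\ 0 & \text{otherwise,} \end{cases}
\]
taking $(\lambda,\mu) = (1,0)$, $(0,1)$, and $(1/2,1/2)$ in the three subcases respectively, and embedding the ambient Hilbert spaces of $|\phi^{(1)}\rangle, |\phi^{(2)}\rangle$ into a common direct-sum space so that inner products are well-defined at no cost.

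Feasibility splits by subcase. In $(1,0)$ the sum collapses to $\sum_{j:\, x_j \neq x'_j}\langle\phi^{(1)}_{x,j}|\psi^{(1)}_{x',j}\rangle = 1$ by the $f_1$-dual constraint ($f_1(x) = 0 \neq 1 = f_1(x')$); $(0,1)$ is symmetric; in $(1,1)$ the sum equals $\lambda + \mu = 1$. The objective satisfies
\begin{align*}
\|\Phi_{(x,y)}\|^2 &= \sum_{j \in [a]}\|\phi^{(1)}_{x,j}\|^2 + \sum_{j \in [b]}\|\phi^{(2)}_{y,j}\|^2 \leq \adv(f_1)^2 + \adv(f_2)^2,\\
\|\Psi_{(x',y')}\|^2 &\leq \lambda^2\,\adv(f_1)^2 + \mu^2\,\adv(f_2)^2 \leq \adv(f_1)^2 + \adv(f_2)^2,
\end{align*}
which yields $\adv(g^\vee)^2 \leq \adv(f_1)^2 + \adv(f_2)^2$. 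De Morgan gives $g^\wedge(x,y) = \neg(\neg f_1(x) \vee \neg f_2(y))$, hence $\adv(g^\wedge) = \adv(\neg f_1 \vee \neg f_2)$, and applying the OR bound with $\adv(\neg f_i) = \adv(f_i)$ establishes both the claimed equality $\adv(g^\wedge)^2 = \adv(g^\vee)^2$ and the upper bound simultaneously.

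The main point requiring care is the $(1,1)$ subcase: a naive construction that places both $\psi^{(1)}$ and $\psi^{(2)}$ contributions unscaled would make the feasibility sum equal to $2$ instead of $1$. The balancing condition $\lambda + \mu = 1$ restores feasibility exactly, while the symmetric choice $\lambda = \mu = 1/2$ controls $\|\Psi\|^2 \leq (\adv(f_1)^2 + \adv(f_2)^2)/4$, comfortably inside the target. Once this balance is in place the remaining case checks and norm computations are routine.
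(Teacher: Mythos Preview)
Your approach is essentially the paper's: build a feasible dual (span-program) witness for $g^\vee$ by combining optimal duals for $f_1$ and $f_2$, then handle $g^\wedge$ via De Morgan together with $\adv(\neg f)=\adv(f)$. The only cosmetic difference is that the paper works with the $\sqrt{A_0 A_1}$ form of the dual and first rescales each $f_i$-solution so that $A_1^i=1$, which lets it treat the $(1,1)$ case by using just one side's witness vectors rather than your balanced $(\tfrac12,\tfrac12)$ mix; both constructions are valid and give the same bound.

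One small overclaim to flag: your final sentence asserts that the De Morgan step ``establishes both the claimed equality $\adv(g^\wedge)^2=\adv(g^\vee)^2$ and the upper bound.'' It does not: you obtain $\adv(g^\wedge)=\adv(\neg f_1\vee\neg f_2)\le\sqrt{\adv(f_1)^2+\adv(f_2)^2}$, but $\neg f_1\vee\neg f_2$ is not $g^\vee$, so no equality between $\adv(g^\wedge)$ and $\adv(g^\vee)$ follows. The paper's proof has exactly the same feature (it only proves the common upper bound), and only that upper bound is used elsewhere in the paper.
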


\Cref{lem:and-or} follows from \cite{stateconversion_lee_2011_v1,reflections_reichardt_2011}. For completeness, we provide a proof in \Cref{app:adv-composition}. Note that the lemma still holds if $f_1$ and $f_2$ share variables.\footnote{For example, if $h\colon \Sigma^a \to \{0,1\}$ with $h(x) \coloneqq g(x,x) = f_1(x) \wedge f_2(x)$, then $\adv(h)^2 \leq \adv(g)^2 \leq \adv(f_1)^2 + \adv(f_2)^2$.}

\begin{restatable}[Adversary composition for $\SWITCH$]{Lemma}{lemAdversarySwitch}\label{lem:adv-switch}
Let $a \in \mathbb{N}$ and let $\Sigma, \Lambda$ be finite sets. Let $f \colon \Sigma^a \to \Lambda$. 
For $s\in \Lambda$, let $g_s \colon \Sigma^a \to \{0,1\}$. Define $h \colon \Sigma^a \to \{0,1\}$ by $h(x) = g_{f(x)}(x)$. Then
\begin{equation}
    \adv(h) \leq O(\adv(f)) + \max_{s\in \Lambda} \adv(g_s).
\end{equation}
\end{restatable}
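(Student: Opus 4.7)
The plan is to route through the quantum query complexity via \Cref{thm:adv-q} and construct a two-stage quantum algorithm for $h$, rather than attempt a direct span program or dual adversary construction. By \Cref{thm:adv-q}, there is a quantum algorithm $A_f$ computing $f$ with error at most $1/3$ using $O(\adv(f))$ queries, and for each $s \in \Lambda$ a quantum algorithm $A_{g_s}$ computing $g_s$ with error at most $1/3$ using $O(\adv(g_s))$ queries. Because the constant hidden in \Cref{thm:adv-q} is universal (independent of the function), a constant number of independent repetitions combined with majority voting for the Boolean $g_s$, or modal voting for the possibly non-Boolean $f$, boosts each algorithm to error at most $1/6$ with only a constant-factor query overhead; uniformly, the boosted $A_{g_s}$'s each use $O(\max_{s\in\Lambda} \adv(g_s))$ queries.

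Next I would compose these into an algorithm $A_h$ for $h$ as follows: run the boosted $A_f$ on $x$ to obtain a classical outcome $s^* \in \Lambda$, and then, conditioned on $s^*$, run the boosted $A_{g_{s^*}}$ on $x$ and output its answer. Classical branching on the measured $s^*$ is free in the quantum query model, and the oracle for $x$ remains available throughout. A union bound over the two possible failure events of the boosted subroutines gives overall error at most $1/3$, and the total query count is at most $O(\adv(f)) + O(\max_{s\in\Lambda} \adv(g_s))$. Applying \Cref{thm:adv-q} in the reverse direction then yields
\[
\adv(h) \;=\; O(Q(h)) \;\leq\; O(\adv(f)) + O\bigl(\max_{s\in \Lambda} \adv(g_s)\bigr),
\]
which matches the form of the lemma.

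I do not anticipate a serious obstacle; the only point requiring care is the error-reduction step for the possibly non-Boolean $f$, but a Chernoff argument shows that the mode over $O(1)$ independent repetitions of $A_f$ equals $f(x)$ with probability at least $5/6$, with the number of repetitions depending only on the target error and not on $|\Lambda|$. An alternative and potentially sharper route would be a direct span program construction concatenating an optimal span program for $f$ with the family of optimal span programs $\{P_{g_s}\}_{s \in \Lambda}$, which could tighten the multiplicative constants; but for the stated big-$O$ conclusion the algorithmic argument is already sufficient.
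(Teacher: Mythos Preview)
Your algorithmic argument is sound but proves a strictly weaker inequality than the one stated. Going through $Q(\cdot)$ and back via \Cref{thm:adv-q} gives
\[
\adv(h) \;\leq\; C\cdot\adv(f) + C\cdot\max_{s\in\Lambda}\adv(g_s)
\]
for some universal constant $C>1$ coming from the $\Theta$ in \Cref{thm:adv-q}. The lemma, however, asserts
\[
\adv(h) \;\leq\; O(\adv(f)) + \max_{s\in\Lambda}\adv(g_s),
\]
with \emph{no} hidden constant on the second summand. This distinction is not cosmetic: the whole point of the divide-and-conquer framework here (see the discussion after \Cref{eq:quantum_recurrence_copy} and the proof of \Cref{lemma:recurrence_of_kcs_adv}) is that the coefficient multiplying the recursive term must be controlled exactly, since it enters the base of the exponent in the Master Theorem. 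With your version, the recurrence in \Cref{lemma:recurrence_of_kcs_adv} would become $a_k(n)\leq C\sqrt{2m-1}\,a_k(n/m)+\ldots$, and the choice $m=7$ no longer forces $\log_m(C\sqrt{2m-1})<2/3$; one could compensate by taking $m$ large enough, but the lemma as stated is not established.

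The paper avoids this constant-factor loss by working directly with the filtered $\gamma_2$ norm rather than routing through $Q$. It bounds $\adv(h)\leq\gamma_2(J-H'\mid\Delta)$ for $h'(x)=(f(x),h(x))$, splits via the triangle inequality into $\gamma_2(J-F\mid\Delta)+\gamma_2(F-H'\mid\Delta)$, and then uses the block-diagonal structure of $F-H'$ (together with the direct-sum and Hadamard-product properties of $\gamma_2$) to show $\gamma_2(F-H'\mid\Delta)\leq\max_s\adv(g_s)$ with constant exactly $1$. The $O(\cdot)$ lands only on the $\adv(f)$ piece, where it is harmless. Your suggested ``alternative route'' via span programs is in spirit what is needed; the algorithmic shortcut, however, does not deliver the required precision.
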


See~\Cref{app:adv-composition} for a proof of \Cref{lem:adv-switch}.

A divide-and-conquer strategy for a problem of size $n$ typically involves dividing the problem into $a$ subproblems on smaller instances of size $n/b$ for $b > 1$ and combining the solutions by solving another auxiliary problem $P^{\aux}$. The classical cost for $P^{\aux}(n)$ is denoted $C^{\aux}(n)$. The classical cost of solving $P(n)$ is then described by the recurrence
\begin{align}
\label{eq:class_rec}
    C(n) \leq a \, C(n/b) + C^{\aux}(n).
\end{align}

The situation is more subtle in the quantum case. \Cref{eq:class_rec} bounds the cost of solving the size-$n$ problem by summing the costs of solving each subproblem and the auxiliary problem. However, our quantum recurrence relations come from adversary composition theorems, which impose restrictions on how the size-$n$ problem decomposes into subproblems and an auxiliary problem. 

We consider decision problems that correspond to the computation of a Boolean function $f\colon \Sigma^n \to \{0, 1\}$, where $\Sigma$ is a finite set, and we have oracle access to an input $x\in \Sigma^n$. While the most general form of quantum divide and conquer can compute $f$ by breaking it into sub-functions and an auxiliary function (corresponding to the subproblems and an auxiliary problem) in many different ways, here we focus on two strategies.

\begin{enumerate}[label=\textbf{Strategy \arabic*.},ref=\arabic*,leftmargin=*]
    \item \label[strategy]{itm:strat1} $f$ is computed as an $\ANDOR$ formula involving an auxiliary function $f^{\aux}\colon \Sigma^n \to \{0,1\}$ and sub-functions $\{f_{i}\colon \Sigma^{n/b} \to \{0,1\} \mid i\in [a]\}$, where $a, b \in \mathbb{N}$. 
    \item \label[strategy]{itm:strat2} $f$ is computed sequentially as follows: (1) Compute an auxiliary function $f^{\aux}\colon \Sigma^n \to \Lambda$, where $\Lambda$ is a finite set, to obtain some $s \in \Lambda$; (2) $s$ indicates a set of sub-functions $\bigl\{f_i^{(s)}: \Sigma^{n/b}\to \{0,1\} \mid i \in [a]\bigr\}$, where $a,b\in \mathbb{N}$, and a Boolean function $g^{(s)}\colon \{0,1\}^a \to \{0,1\}$, such that $f$ is computed as the function $g^{(s)} \circ \bigl(f_1^{(s)}, \ldots, f_a^{(s)}\bigr)\colon (\Sigma^{n/b})^a \to \{0,1\}$ defined by 
    \begin{equation}
        g^{(s)} \circ \bigl(f_1^{(s)}, \ldots, f_a^{(s)}\bigr)(x^{(1)}, \ldots, x^{(a)}) \coloneqq g^{(s)}(f_1^{(s)}(x^{(1)}), \ldots, f_a^{(s)}(x^{(a)})).
    \end{equation}
\end{enumerate}

Using the adversary composition theorems stated previously, the next corollary shows how to obtain quantum recurrence relations for $\adv(f)$ when $f$ is computed  according to \Cref{itm:strat1} or \Cref{itm:strat2}.

\begin{Corollary}[Quantum divide-and-conquer strategies]\label{cor:strategies}
Let $f$, $f^{\aux}$, $\{f_i \mid i \in [a]\}$, $\bigl\{f_i^{(s)} \mid i \in [a], \, s\in \Lambda\bigr\}$, and $\{g^{(s)} \mid s\in \Lambda\}$ be as defined in \Cref{itm:strat1,itm:strat2}. Then
    \begin{alignat}{3}
        &\adv(f)^2 &&\leq \sum_{i=1}^a \adv(f_i)^2 + O\left(Q(f^{\aux})\right)^2 &&\quad \text{for \Cref{itm:strat1}}, \label{eq:strat1}\\
        &\adv(f) &&\leq O\bigl(Q(f^{\aux})\bigr) + \max_{s \in \Lambda} \adv\bigl( g^{(s)} \circ \bigl(f_1^{(s)}, \ldots, f_a^{(s)}\bigr)\bigr) &&\quad \text{for \Cref{itm:strat2}}. \label{eq:strat2}
    \end{alignat}
\end{Corollary}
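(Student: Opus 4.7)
The proof is essentially a direct assembly of the three preceding black boxes: the adversary-quantum complexity equivalence (Theorem connecting $\adv$ and $Q$), the AND/OR composition lemma, and the SWITCH composition lemma. I would organize it by handling the two strategies in parallel since they unfold in the same pattern: rewrite $f$ in the stated compositional form, apply the appropriate adversary composition theorem, and convert $\adv(f^{\aux})$ to $Q(f^{\aux})$ via the adversary-quantum complexity equivalence.

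For \Cref{itm:strat1}, my plan is to induct on the size of the AND-OR formula that expresses $f$ in terms of $f^{\aux}$ and the $f_i$'s. The base case is a single gate, handled directly by \Cref{lem:and-or}, which gives $\adv(g^{\wedge})^2 = \adv(g^{\vee})^2 \leq \adv(f_1)^2 + \adv(f_2)^2$ even when the inputs share variables (per the footnote). The inductive step applies the same lemma at the root of the formula; since both AND and OR yield additive inequalities for $\adv(\cdot)^2$, a straightforward induction collapses the whole formula into the single bound $\adv(f)^2 \leq \adv(f^{\aux})^2 + \sum_{i=1}^{a} \adv(f_i)^2$. Finally, I invoke \Cref{thm:adv-q} to replace $\adv(f^{\aux})$ with $O(Q(f^{\aux}))$, producing the stated inequality \eqref{eq:strat1}. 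The only subtlety to be careful about is that $f^{\aux}$ may appear multiple times in the AND-OR formula, but since \Cref{lem:and-or} is oblivious to shared variables the bound is unaffected (we can always regard each occurrence as a copy of the same function and take the single $\adv(f^{\aux})^2$ term at the end).

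For \Cref{itm:strat2}, the computation of $f$ is literally the form covered by \Cref{lem:adv-switch}: on input $x$, one evaluates $s = f^{\aux}(x) \in \Lambda$, and then outputs $h_s(x) \coloneqq g^{(s)}\bigl(f_1^{(s)}(x^{(1)}), \ldots, f_a^{(s)}(x^{(a)})\bigr)$. Setting the family of ``switched'' functions $\{g_s\}_{s \in \Lambda}$ in \Cref{lem:adv-switch} to be precisely the family $\bigl\{g^{(s)} \circ (f_1^{(s)}, \ldots, f_a^{(s)})\bigr\}_{s \in \Lambda}$ gives
\begin{equation*}
    \adv(f) \leq O(\adv(f^{\aux})) + \max_{s \in \Lambda} \adv\bigl(g^{(s)} \circ (f_1^{(s)}, \ldots, f_a^{(s)})\bigr),
\end{equation*}
and one more application of \Cref{thm:adv-q} converts $\adv(f^{\aux})$ to $O(Q(f^{\aux}))$, giving \eqref{eq:strat2}.

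There is no real obstacle here; this is a bookkeeping corollary whose purpose is to package the two application scenarios in a form directly usable in \Cref{sec:applns}. The only place where one has to be mildly careful is recognizing that \Cref{lem:and-or} is indeed closed under iteration over an arbitrary AND-OR formula (not only a single gate), which is what licenses the additive telescoping in \Cref{itm:strat1}. In principle one could also prove a slightly tighter bound for \Cref{itm:strat1} by keeping $\adv(f^{\aux})^2$ instead of $O(Q(f^{\aux}))^2$, but converting to $Q(f^{\aux})$ is natural here because in applications we bound the auxiliary cost by exhibiting an explicit quantum algorithm rather than by computing the adversary value.
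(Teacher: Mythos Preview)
Your proposal is correct and matches the paper's proof almost verbatim: for \Cref{itm:strat1} the paper also just iterates \Cref{lem:and-or} over the formula and then invokes \Cref{thm:adv-q}, and for \Cref{itm:strat2} it defines $h_s \coloneqq g^{(s)}\circ(f_1^{(s)},\ldots,f_a^{(s)})$ and applies \Cref{lem:adv-switch} followed by \Cref{thm:adv-q}, exactly as you do. Your remark about multiple occurrences of $f^{\aux}$ is slightly over-engineered---iterating \Cref{lem:and-or} would produce one $\adv(f^{\aux})^2$ term per leaf, not a single term, but since the formula has constantly many leaves this is absorbed into the $O(\cdot)$ anyway (and in all the paper's applications each function appears exactly once).
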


\begin{proof}
For~\Cref{itm:strat1}, $f^{\aux}$ and $\{f_i \mid i\in [a]\}$ are all Boolean functions and $f$ is computed as an $\ANDOR$ formula of these functions. Therefore, we can directly apply \Cref{lem:and-or} to obtain
\begin{align}
\begin{aligned}
    \adv(f)^2 &\leq \adv(f_1)^2 + \cdots + \adv(f_a)^2 + \adv(f^{\aux})^2 \\
    &\leq \adv(f_1)^2 + \cdots + \adv(f_a)^2 + O(Q(f^{\aux}))^2 && \text{(using \Cref{thm:adv-q})}.
\end{aligned}
\end{align}

For~\Cref{itm:strat2}, $f^{\aux}\colon \Sigma^n \to \Lambda$ is first computed to give an $s\in \Lambda$. Then, $f$ is computed as $g^{(s)} \circ \bigl(f_1^{(s)} \ldots, f_a^{(s)}\bigr)$. Therefore, $f(x)$ can be expressed using $\SWITCH$ as
\begin{equation}
    f(x) = h_{f^{\aux}(x)}(x), \quad \text{where
    $h_s \coloneqq \bigl(g^{(s)} \circ \bigl(f_1^{(s)}, \ldots, f_a^{(s)}\bigr)\bigr)$ for all $s\in \Lambda$}.
\end{equation}
Therefore, we can directly apply \Cref{lem:adv-switch} to obtain
\begin{align}
    \adv(f) & \leq O(\adv(f^{\aux})) + \max_{s \in \Lambda} \adv\bigl(g^{(s)} \circ \bigl(f_1^{(s)}, \ldots, f_a^{(s)}\bigr)\bigr) \\
    & \leq O(Q(f^{\aux})) + \max_{s \in \Lambda} \adv\bigl(g^{(s)} \circ \bigl(f_1^{(s)}, \ldots, f_a^{(s)}\bigr)\bigr) &&\text{(using \Cref{thm:adv-q})}.
\end{align}
The corollary follows.
\end{proof}

While adversary composition theorems play a key role in decomposing the adversary quantity for $f$, one novel aspect of our framework lies in the switching from the adversary quantity $\adv(f^{\aux})$ to the quantum query complexity $Q(f^{\aux})$ when handling the auxiliary function. This switching means our framework employs both the adversary and quantum algorithms toolboxes.

We recall a few well-known quantum query complexity bounds that are used in subsequent sections, usually to bound the quantum query complexity of the auxiliary function $f^{\aux}$.

\begin{Theorem}[Quantum query complexity bounds]\label{thm:algorithms}
\label{thm:query-bounds}
Let $m, n \in \mathbb{N}$ with $m \leq n$ and consider functions $f\colon \Sigma \to \{0, 1\}$ and $g\colon D \to E$ for finite sets $\Sigma, D, E$.
\begin{itemize}[itemsep=0pt]
\item \emph{Unstructured search.} For the function $f$ with $|\Sigma| = n$, finding an $x$ such that  $f(x) = 1$ has quantum query complexity $O(\sqrt{n})$ \cite{search_grover_1996}.
\item \emph{Minimum finding.} Given an unsorted list $x = \{x_1, \ldots, x_n\}$, the quantum query complexity of finding the minimum (or maximum) element in the list is $O(\sqrt{n})$ \cite{minimum_finding_durr_hoyer_1996}.
\item \emph{String matching.} Given two strings $x, y$ with $|x| = n, |y| = m$, determining if $y$ is a substring of $x$ has quantum query complexity $\tilde{O}(\sqrt{m} + \sqrt{n})$ \cite{exact_string_matching_ramesh_vinay_2003}.
\item \emph{String comparison.} Given two strings $x, y \in \Sigma^n$ for an ordered set $\Sigma$, determining whether $x < y$ has quantum query complexity $O(\sqrt{n})$. This is because the problem reduces to minimum finding.
\item \emph{Bipartite element distinctness.} Given two strings $x, y$ such that $|x| = n, |y| = m$, and $m\leq n$, finding $i \in [n]$ and $j \in [m]$ such that $x_i = y_j$ has quantum query complexity $O(n^{2/3})$ \cite{qwalk_elem_distictness_ambainis_2007}. 
\end{itemize}
\end{Theorem}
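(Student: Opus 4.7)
The plan is to dispatch the five bullets independently. Four of them are direct citations of prior work and require no proof beyond invocation: unstructured search follows from Grover's algorithm~\cite{search_grover_1996}; minimum finding from the algorithm of~\cite{minimum_finding_durr_hoyer_1996} (which wraps Grover inside an adaptive threshold schedule); string matching from the quantum algorithm of~\cite{exact_string_matching_ramesh_vinay_2003}; and bipartite element distinctness from the quantum walk of~\cite{qwalk_elem_distictness_ambainis_2007} (the bipartite case is a routine modification of the standard one, using a walk over pairs of subsets of $[n]$ and $[m]$ of appropriate sizes). Thus the only substantive work is the string comparison bullet, which the theorem statement explicitly signals will be derived by reduction to minimum finding.

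For string comparison, given $x,y \in \Sigma^n$ over a totally ordered alphabet, my key observation is that $x < y$ iff the smallest index $i \in [n]$ at which $x[i] \neq y[i]$ satisfies $x[i] < y[i]$, and $x = y$ iff no such index exists. Hence it suffices to compute this leftmost index of disagreement and then perform a single symbol comparison. Concretely, I would define values $v_i := i$ when $x[i] \neq y[i]$ and $v_i := n+1$ otherwise, and apply quantum minimum finding to $(v_1,\ldots,v_n)$. Each evaluation of $v_i$ costs two input queries (one to $x$ and one to $y$), so the $O(\sqrt{n})$ bound on minimum finding yields $O(\sqrt{n})$ queries overall. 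A single further pair of queries at the returned index decides the comparison, with the convention that a returned minimum of $n+1$ certifies $x = y$ and therefore $x \not< y$.

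The hard part will be essentially nothing: the only subtlety is the bounded-error semantics, but since the minimum-finding subroutine already succeeds with constant probability and the surrounding reduction is deterministic, the overall algorithm has bounded error, matching the convention $Q(\cdot) = Q_{1/3}(\cdot)$ adopted earlier in the paper.
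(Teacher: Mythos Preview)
Your proposal is correct and matches the paper's treatment: the paper does not give a proof at all, stating the first, second, third, and fifth bullets as citations and justifying string comparison only with the one-line remark ``This is because the problem reduces to minimum finding,'' which is exactly the reduction you spell out. Your added detail (defining $v_i$ and handling the bounded-error semantics) is sound and simply makes explicit what the paper leaves implicit.
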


Finally, once we have the quantum recurrence relation for a problem along with a bound on the quantum query complexity of $f^{\aux}$, we use the Master Theorem to solve the recurrence to bound $\adv(f)$, and thereby $Q(f)$. This is just like in classical divide and conquer. 

We state the Master Theorem below for convenience.
\begin{Theorem}[Master Theorem \cite{master_theorem_bentley_haken_saxe_1980}]\label{lemma:master_theorem}
Let $A\colon \mathbb{N} \to \mathbb{R}$, $a,b,c>0$, and $p\geq 0$. Suppose $A$ satisfies the recurrence
\begin{equation}
    A(n) = a \, A(n/b) + O(n^c\log^{p} n).
\end{equation}
Then
\begin{align}
    A(n) = \begin{cases}
  O(n^{\log_b{a}}) & \text{if $\log_b{a} > c$}, \\
  O(n^{\log_b{a}} \log^{p+1} n) & \text{if $\log_b{a} = c$}, \\ 
  O(n^{c} \log^{p} n) & \text{if $\log_b{a} < c$}.
\end{cases}
\end{align}
\end{Theorem}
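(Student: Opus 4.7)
The plan is the standard recursion-tree argument. First I would reduce to the case $n = b^k$ for some integer $k$, since the asymptotic rate is insensitive to rounding $n/b$ to $\floor{n/b}$ or $\ceil{n/b}$---this is standard and uses monotonicity of $A$. Unrolling the recurrence, the tree has depth $k = \log_b n$; at depth $i \in \{0, 1, \ldots, k-1\}$ there are $a^i$ subproblems, each of size $n/b^i$, contributing auxiliary work of order $(n/b^i)^c \log^p(n/b^i)$. The leaves at depth $k$ contribute $a^k \cdot A(1) = O(n^{\log_b a})$ in total. Summing yields
\[
A(n) = O\Bigl( n^{\log_b a} + n^c \sum_{i=0}^{k-1} r^i \log^p(n/b^i) \Bigr), \qquad r := a/b^c,
\]
so the three cases in the theorem are exactly governed by whether $r > 1$, $r = 1$, or $r < 1$, i.e., by whether $\log_b a$ is greater than, equal to, or less than $c$.

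For the case analysis I would substitute $j = k - i$ and use $\log(n/b^i) = j \log b$. In Case 1 ($r > 1$), the sum becomes $r^k (\log b)^p \sum_{j=1}^{k} j^p r^{-j}$, which is $\Theta(r^k)$ because $\sum_{j \geq 1} j^p r^{-j}$ converges; multiplying by $n^c$ gives $n^c r^k = a^k = n^{\log_b a}$, matching the leaf contribution. In Case 2 ($r = 1$), the sum is $(\log b)^p \sum_{j=1}^{k} j^p = \Theta(k^{p+1}) = \Theta(\log^{p+1} n)$, using the elementary estimate $\sum_{j=1}^k j^p = \Theta(k^{p+1})$; multiplied by $n^c = n^{\log_b a}$, this gives $O(n^{\log_b a} \log^{p+1} n)$. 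In Case 3 ($r < 1$), the geometric factor dominates the log factor: $\sum_{i=0}^{k-1} r^i \log^p(n/b^i) \leq \log^p n \cdot \sum_{i \geq 0} r^i = O(\log^p n)$, yielding $O(n^c \log^p n)$, which absorbs the leaf contribution since $\log_b a < c$.

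The main obstacle is technical rather than conceptual: carefully handling the logarithmic factors in Case 2 (the $\sum j^p = \Theta(k^{p+1})$ estimate) and rigorously justifying the reduction to powers of $b$ when the problem size does not divide evenly. Both are entirely standard and require only elementary analysis; the core insight is simply identifying which level of the recursion tree---root, internal, or leaves---dominates the sum in each regime, which is determined by the sign of $\log_b a - c$.
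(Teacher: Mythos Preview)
The paper does not prove this theorem; it is stated with a citation to \cite{master_theorem_bentley_haken_saxe_1980} and used as a black box. Your recursion-tree argument is the standard and correct proof, so there is nothing to compare against.
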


%%%%%%%%%%%%%%%%%%%%%%%%%%%%%%%%%%%%%%%%%%%%%%%%%%
\section{Applications}
\label{sec:applns}

In this section we apply the divide-and-conquer strategies described previously to various string problems. Note that we divide a problem on strings of length $n$ into subproblems on strings of length $n/b$ for $b > 1$. Without loss of generality, we assume that $n = b^r$ for some $r \in \mathbb{N}$ so that $n/b^k \in \mathbb{N}$ for all $k\in [r]$.\footnote{When $n\neq b^r$, we can divide into subproblems of sizes $\ceil{n/b}$ or $\floor{n/b}$ without affecting the validity of our arguments.}

%=================================================
\subsection{Recognizing regular languages}
\label{subsec:regular}

We first consider the problem of recognizing regular languages to demonstrate how quantum divide and conquer allows one to prove \cite[Theorem 18]{trichotomy_aaronson_2019}, without needing the intricacies of the original proof. In the following, we prove a special case of this result, mentioned in the abstract of \cite{trichotomy_aaronson_2019}, that captures its essence. We do not prove \cite[Theorem 18]{trichotomy_aaronson_2019} in its full generality only because that would require introducing significant terminology related to regular languages but would not offer further insight. Let $\Sigma = \{0,1,2\}$.
\begin{Problem*}[Recognizing $\Sigma^*20^*2\Sigma^*$]
Given query access to a string $x\in \Sigma^n$, decide if $x\in \Sigma^*20^*2\Sigma^*$, i.e., if $x$ contains a substring with two $2$s on either side of an arbitrarily long string of $0$s.
\end{Problem*}

\begin{Theorem}
The quantum query complexity of Recognizing $\Sigma^*20^*2\Sigma^*$ is $O(\sqrt{n\log(n)})$.
\end{Theorem}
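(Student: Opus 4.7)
The plan is to apply \Cref{itm:strat1} from \Cref{cor:strategies} with $a = b = 2$, splitting $x \in \Sigma^n$ into its left and right halves $x_L = x[1 \twodots n/2]$ and $x_R = x(n/2 \twodots n]$. Let $f \colon \Sigma^n \to \{0,1\}$ denote the indicator of membership in $\Sigma^* 2 0^* 2 \Sigma^*$. Any witnessing substring $20^*2$ either lies entirely inside $x_L$, lies entirely inside $x_R$, or straddles the midpoint, so I decompose
\begin{equation*}
    f(x) = f(x_L) \vee f(x_R) \vee f^{\aux}(x),
\end{equation*}
where $f^{\aux}(x)$ is the indicator that a crossing occurrence of $20^*2$ exists.

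The crucial observation for the auxiliary function is that a crossing witness exists if and only if the last non-zero symbol of $x_L$ equals $2$ \emph{and} the first non-zero symbol of $x_R$ equals $2$. I would therefore write $f^{\aux}(x) = A(x_L) \wedge B(x_R)$, where $A$ asserts the condition on $x_L$ and $B$ the condition on $x_R$. Each predicate can be decided by applying quantum maximum/minimum finding over the positions whose symbol is non-zero, then using one additional query to verify that the extremal position carries a $2$ (returning $0$ if the non-zero set is empty). By the minimum-finding bound in \Cref{thm:query-bounds}, $Q(A), Q(B) = O(\sqrt{n})$, and hence $Q(f^{\aux}) = O(\sqrt{n})$.

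Plugging into \Cref{eq:strat1} and using \Cref{thm:adv-q} yields the quantum recurrence
\begin{equation*}
    \adv(n)^2 \;\leq\; 2\,\adv(n/2)^2 + O\bigl(Q(f^{\aux})\bigr)^2 \;=\; 2\,\adv(n/2)^2 + O(n),
\end{equation*}
which lands in the critical case $\log_b a = c$ of \Cref{lemma:master_theorem} with $a = b = 2$, $c = 1$, $p = 0$. The Master Theorem then gives $\adv(n)^2 = O(n\log n)$, hence $\adv(n) = O(\sqrt{n\log n})$; a final application of \Cref{thm:adv-q} upgrades this to $Q(f) = O(\sqrt{n\log n})$, as claimed.

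The step I expect to take the most care is pinning down the right form of the auxiliary predicate. A naive attempt that searches over all pairs of $2$s spanning the midpoint either drives the auxiliary cost too high to preserve the target bound, or forces the two halves of the recursion to share variables in a way that breaks the clean $\OR$ decomposition required by \Cref{lem:and-or}. Collapsing the crossing condition into two independent, one-sided ``extremal non-zero equals $2$'' statements is precisely what both keeps $Q(f^{\aux})$ at $O(\sqrt{n})$ and respects the structure of \Cref{itm:strat1}, so that the only logarithmic factor in the final bound is the single $\log n$ contributed by the balanced case of the Master Theorem.
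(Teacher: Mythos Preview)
Your proposal is correct and follows essentially the same approach as the paper: split at the midpoint, write $f$ as an $\OR$ of the two recursive subproblems and a crossing predicate, bound the crossing predicate by $O(\sqrt{n})$ via extremal-index searches, and solve the resulting recurrence $a(n)^2 \le 2a(n/2)^2 + O(n)$ with the Master Theorem. Your formulation of the crossing condition (``last non-zero of $x_L$ is $2$'' $\wedge$ ``first non-zero of $x_R$ is $2$'') is a minor but pleasant variant of the paper's (``find last $2$ in $x_L$, first $2$ in $x_R$, then Grover-verify the gap is all $0$s''), saving one search but otherwise identical in spirit and cost.
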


\begin{proof}
We upper bound the quantum query complexity of the function $f_n\colon \{0,1,2\}^n \to \{0,1\}$ defined by $f_n(x) = 1$ iff $x\in \Sigma^* 2 0^* 2 \Sigma^*$. 

Clearly, $x$ contains the expression $2 0^* 2$ if and only if the expression is contained (i) entirely in the left half of the string; or (ii) entirely in the right half of the string; or (iii) it is partly contained in the left half and partly in the right half of the string. Therefore, we have
\begin{equation}\label{eq:regular_recurrence}
    f_n(x) = f_{n/2}(x_\textleft) \vee f_{n/2}(x_\textright) \vee g_n(x),
\end{equation}
where $g_n\colon \Sigma^n \to \{0,1\}$ is defined by
\begin{equation}
    g_n(x) = \begin{cases}
    1 & \text{if $x[i \twodots j]$ is of the form $20^*2$ for some $i \leq n/2 < j$,}\\
    0 & \text{otherwise.}
    \end{cases}
\end{equation}

Clearly, $g_n$ can be decided by the following algorithm using $O(\sqrt{n})$ queries:
\begin{defaultenumerate}
    \item Grover search for the maximal index $i \leq n/2$ such that $x_i=2$ and the minimal index $j > n/2$ such that $x_j = 2$. If either fails to exist, output $0$.
    \item Decide whether $x[i+1 \twodots j-1]$ equals $0^{j-i-1}$ by Grover search. Output $1$ if yes and $0$ if no.
\end{defaultenumerate}

Therefore, writing $a(n) \coloneqq \adv(f_n)$, observing that \Cref{eq:regular_recurrence} follows~\Cref{itm:strat1} and using \Cref{cor:strategies} gives $a(n)^2 \leq 2a(n/2)^2 + O(n)$, which solves to $a(n)^2 = O(n\log(n))$ by the Master Theorem (\cref{lemma:master_theorem}). Hence, $Q(f_n) = O(a(n)) = O(\sqrt{n\log(n)})$ by \Cref{thm:adv-q}.
\end{proof}

%=================================================
\subsection{String Rotation and String Suffix}
\label{subsec:stringrot}
Let $\Sigma$ be equipped with a total order in this subsection. We consider the following problems.

\begin{Problems*}[String Rotation and String Suffix] Let $n \in \mathbb{N}$ and $i\in [n]$. Given query access to a string $x\in \Sigma^n$, we define the following problems. 
\begin{enumerate}
    \item Minimal String Rotation. Decide if $x[i\twodots n]x[1 \twodots i-1] \leq x[j\twodots n]x[1 \twodots j-1]$ for all $j\in [n]$.
    \item Minimal Suffix. Decide if $x[i\twodots n] < x[j\twodots n]$ for all $j\in [n] \setminus \{i\}$.
\end{enumerate}
\end{Problems*} 

\begin{Remark*}
These problems are decision versions of the problems studied in \cite{string_akmaljin_2022}, which gives algorithms that output the positions of the minimal string rotation and minimal suffix.
\end{Remark*}

As observed in \cite{string_akmaljin_2022}, both problems reduce to the following problem.

\begin{Problem*}[Minimal Length-$l$ Substring] Let $n, l \in \mathbb{N}$ with $l\leq n$. Given query access to strings $x\in \Sigma^n$ and $y\in \Sigma^l$, decide if all length-$l$ substrings of $x$ are lexicographically at least $y$.
\end{Problem*}

We prove the following lemma by quantum divide and conquer.

\begin{Lemma}\label{lem:minimal_length_l}
The quantum query complexity of Minimal Length-$l$ Substring with $l=n/2$ is $\tilde{O}(\sqrt{n})$.
\end{Lemma}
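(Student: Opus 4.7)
The plan is to apply \Cref{itm:strat1} of the quantum divide-and-conquer framework with an AND decomposition that halves the length of $y$ at each step. I generalize the function to $f_{m,l}(x,y)$ on $x \in \Sigma^m$, $y \in \Sigma^l$ (with $m \geq l$), which equals $1$ iff every length-$l$ substring of $x$ is lexicographically at least $y$. The lemma statement is the special case $m=n$, $l=n/2$.

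Writing $y = y_L \cdot y_R$ with $|y_L| = |y_R| = l/2$, the standard lexicographic identity $z \ge y \iff z[1\twodots l/2] > y_L \vee (z[1\twodots l/2] = y_L \wedge z[l/2+1\twodots l] \ge y_R)$ yields
\[
  f_{m,l}(x,y) \;=\; g_1(x,y_L) \,\wedge\, g_2(x,y_L,y_R),
\]
where $g_1(x,y_L) := f_{m-l/2,\,l/2}(x[1\twodots m-l/2], y_L)$ is a smaller instance of the same problem, and $g_2(x,y_L,y_R)$ asserts that for every $i \in [1, m-l+1]$ with $x[i\twodots i+l/2-1] = y_L$, we have $x[i+l/2\twodots i+l-1] \ge y_R$. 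This matches \Cref{itm:strat1}, so \Cref{cor:strategies} combined with \Cref{thm:adv-q} gives $\adv(f_{m,l})^2 \le \adv(g_1)^2 + O(Q(g_2))^2$.

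Next, I would bound $Q(g_2) = \tilde O(\sqrt m)$ using a dedicated quantum algorithm that combines quantum string matching from \Cref{thm:query-bounds} (to locate occurrences of $y_L$ in $x$) with quantum string comparison (to test the $y_R$-condition at each such occurrence). Amplitude amplification, together with the classical periodicity structure of pattern occurrences, should allow us to avoid the naive $O(\sqrt{m\,l})$ bound that a Grover search with expensive per-position checks would incur.

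Finally, I unwind the recurrence. Each step halves $l$ and reduces $m$ by $l/2$, so after $O(\log l)$ levels I reach the base case $l=1$, where $f_{m,1}(x,y) = \bigwedge_{i}(x_i \ge y_1)$ is decidable in $O(\sqrt m) = O(\sqrt n)$ queries by Grover search. Summing contributions yields $\adv(f_{n,n/2})^2 \le O(n) + O(\log n) \cdot \tilde O(n) = \tilde O(n)$, and therefore $Q(f_{n,n/2}) = \tilde O(\sqrt n)$ by \Cref{thm:adv-q}. The main obstacle I anticipate is establishing the $\tilde O(\sqrt m)$ bound on $Q(g_2)$: although a single occurrence of $y_L$ can be located in $\tilde O(\sqrt m)$ queries, the $y_R$-condition must be verified at \emph{every} occurrence, and when $y_L$ has a small period the matches can be densely packed in $x$, demanding a careful amortization to prevent a $\sqrt{l}$ blowup.
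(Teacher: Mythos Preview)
Your decomposition $f_{m,l} = g_1 \wedge g_2$ is correct, and the recursion unrolls as you say. The approach is close in spirit to the paper's but differs in how the recursion is organized. The paper keeps the invariant $l = m/2$ throughout by splitting $x$ into \emph{two} overlapping halves of length $n/2$ and recursing on each with $y[1\twodots n/4]$, yielding $a(n)^2 \le 2\,a(n/2)^2 + \tilde O(n)$. You instead keep $x$ essentially intact and only halve $l$, producing a single recursive call but letting the ratio $l/m$ shrink to zero over the $O(\log n)$ levels.

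This difference shows up exactly where you anticipated, in the auxiliary function. In the paper's setup the starting positions of the length-$l$ substrings lie in $[1, n/2{+}1]$, which is covered by just two windows of length $n/2$; invoking \cite[Lemma~4.8]{string_akmaljin_2022} reduces $g_n$ to four string comparisons, so $Q(g_n)=\tilde O(\sqrt n)$ with no further search. In your setup the starting positions range over $[1, m{-}l{+}1]$ with $m\approx n$ and $l$ possibly tiny, so you must cover them with $\Theta(m/l)$ overlapping windows of length $l$. Within each window the same periodicity argument (occurrences of $y_L$ in a length-$l$ interval form an arithmetic progression, and the lexicographically smallest continuation is attained at an extremal occurrence) reduces the check to $O(1)$ string comparisons of cost $\tilde O(\sqrt l)$; Grover search over the windows then gives $Q(g_2)=\tilde O(\sqrt{m/l}\cdot\sqrt{l})=\tilde O(\sqrt m)$, as you need. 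So your route works once this extra Grover layer is made explicit, but the paper's choice to maintain $l=m/2$ is what lets it avoid that layer and appeal to the periodicity lemma in its cleanest form.
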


\begin{proof}
We upper bound the quantum query complexity of the function $f_n \colon \Sigma^n\times \Sigma^{n/2} \to \{0,1\}$ defined by $f_n(x,y) = 1$ iff all length-$(n/2)$ substrings of $x$ are lexicographically at least $y$. 

Observe that $f_n(x,y) = 1$ if and only if (a) all length-$(n/4)$ substrings contained in $x[1 \twodots \sfrac{3n}{4}]$ are lexicographically at least $y[1 \twodots \sfrac{n}{4}]$; and (b) all length-$(n/2)$ substrings that start with $y[1 \twodots \sfrac{n}{4}]$ are lexicographically at least $y$. Also observe that all length-$(n/4)$ substrings contained in  $x[1 \twodots \sfrac{3n}{4}]$ are either contained in $x[1 \twodots \sfrac{n}{2}]$ or contained in $x[\sfrac{n}{4}+1 \twodots \sfrac{3n}{4}]$. Therefore, we deduce that
\begin{align}\label{eq:lminimal_recurrence}
    f_n(x,y) &= \Bigl(f_{n/2}(x[1\twodots \sfrac{n}{2}], y[1 \twodots \sfrac{n}{4}]) \wedge f_{n/2}(x[\sfrac{n}{4}+1 \twodots \sfrac{3n}{4}], y[1 \twodots \sfrac{n}{4}])\Bigr) \wedge g_n(x,y),
\end{align}
where $g_n\colon \Sigma^n \times \Sigma^{n/2} \to \{0,1\}$ is defined by
\begin{equation}
\begin{aligned}
    g_n(x,y) &=
    \begin{cases}
    1 &\begin{aligned}[t]
    &\text{if all length-$(n/2)$ substrings in $x$ starting with $y[1 \twodots \sfrac{n}{4}]$}
    \\
    &\text{are lexicographically at least $y$ (or do not exist),}
    \end{aligned}
    \\
    0 &\text{otherwise.}
    \end{cases}
\end{aligned}
\end{equation}

We can show that $Q(g_n) = \tilde{O}(\sqrt{n})$ by considering an algorithm $\mathcal{A}$ described as follows.
\begin{defaultenumerate}
    \item Use the quantum string matching algorithm (see \Cref{thm:algorithms}) to find the \emph{first} and \emph{last} positions where $y[1\twodots \sfrac{n}{4}]$ starts in $x[1\twodots 1 + \sfrac{n}{2}) = x[1\twodots \sfrac{n}{2}]$. Call these two positions $u_1$ and $v_1$, respectively. This takes $\tilde{O}(\sqrt{n})$ queries. Similarly, use the quantum string matching algorithm to find the first and last positions where $y[1\twodots \sfrac{n}{4}]$ starts in $x[\sfrac{n}{4}+1\twodots 1 + \sfrac{3n}{4}) = x[\sfrac{n}{4}+1\twodots \sfrac{3n}{4}]$. Call these two positions $u_2$ and $v_2$, respectively. This takes another $\tilde{O}(\sqrt{n})$ queries.
    \item Use the quantum string comparison algorithm (see \Cref{thm:algorithms}) to decide whether all the length-$(n/2)$ substrings in $x$ starting at one of the positions in $\{u_1,v_1,u_2,v_2\}$ is lexicographically at least $y$. If yes, then output $1$; otherwise, output $0$. This takes $\tilde{O}(\sqrt{n})$ queries.
\end{defaultenumerate}
Clearly, $\mathcal{A}$ uses $\tilde{O}(\sqrt{n})$ queries. The correctness of $\mathcal{A}$ follows from \cite[Lemma 4.8]{string_akmaljin_2022}.\footnote{Set $``l" = n/2$, $``k"=n/4$, and $``a" \in \{1,n/4\}$ in \cite[Lemma 4.8]{string_akmaljin_2022}. We also remark that \cite[Lemma 4.8]{string_akmaljin_2022} is stated with ``\ldots let $J$ denote the set of answers in the Minimal Length-$k$ Substrings problem on the input string $s[a\twodots a+2k)$.'' By examining the proof, it can be seen that the lemma still holds with the quoted sentence replaced by ``\ldots let $y$ denote a fixed length-$k$ substring, let $J$ denote the set of starting points of $y$ in the input string $s[a\twodots a+2k)$.'' This is because \cite[Lemma 2.3]{string_akmaljin_2022} can still be applied in the latter case.}

Now, setting $a(n) \coloneqq \adv(f_n)$, observing that $Q(g_n) = \tilde{O}(\sqrt{n})$, and using~\Cref{cor:strategies} for~\Cref{itm:strat1} gives $a(n)^2 = 2a(n/2)^2 + \tilde{O}(n)$, which solves to $a(n)^2 = \tilde{O}(n)$ by the Master Theorem (\Cref{lemma:master_theorem}). Hence, $Q(f_n) = O(a(n)) = \tilde{O}(\sqrt{n})$ by \Cref{thm:adv-q}.
\end{proof}

\begin{Remark*} The analysis in \cite[Theorem 4.1]{string_akmaljin_2022} for Minimal Length-$l$ Substring would yield a looser bound of $O(n^{1/2+o(1)})$ for its quantum query complexity. That analysis is more involved because it splits the problem into a non-constant number of parts, $m$. This is because quantum algorithms are recursively applied to obtain a recurrence of the form $Q(n) \leq \tilde{O}(\sqrt{m}) \, Q(n/m) + \tilde{O}(\sqrt{mn})$, which, due to the $\tilde{O}(\sqrt{m})$ coefficient in front of the $Q(n/m)$, requires $m$ to be non-constant to solve to $Q(n) = O(n^{1/2+o(1)})$. In general, it can be difficult to split a problem into a non-constant number of parts by only relying on quantum algorithmic techniques (e.g., consider an $\ANDOR$ formula), so such an approach may not yield optimal upper bounds (even up to an $n^{o(1)}$ factor). We note that our analysis above is a quantum analogue of the \emph{classical} divide-and-conquer analysis in \cite[Start of Section~4.2]{string_akmaljin_2022}.
\end{Remark*}

\Cref{thm:rotation_suffix} below follows from \Cref{lem:minimal_length_l} and the chain of reductions described in \cite[Propositions 4.2--4.5 and Theorem 4.6]{string_akmaljin_2022}. We sketch a proof for completeness.

\begin{Theorem}\label{thm:rotation_suffix}
The quantum query complexities of Minimal String Rotation and Minimal Suffix are $\tilde{O}(\sqrt{n})$.
\end{Theorem}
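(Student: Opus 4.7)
The plan is to establish the theorem by reducing both problems, with only polylogarithmic overhead, to a constant (or at most $O(\log n)$) number of instances of Minimal Length-$l$ Substring at length $l = \Theta(n)$, together with a handful of standard quantum string subroutines (string comparison, string matching, and minimum finding from \Cref{thm:algorithms}), each of which costs $\tilde{O}(\sqrt{n})$ queries. Once such a reduction is in place, \Cref{lem:minimal_length_l} immediately yields the claimed $\tilde{O}(\sqrt{n})$ bound. The reductions themselves are those of \cite[Propositions~4.2--4.5]{string_akmaljin_2022}; below I sketch their two main ingredients.

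For Minimal String Rotation at candidate index $i$, I would first double the input to $x' \coloneqq xx \in \Sigma^{2n}$, so that the rotation of $x$ starting at $j$ is exactly $x'[j \twodots j+n-1]$ for every $j \in [n]$. The rotation at $i$ is lex-minimal iff every length-$n$ substring of $x'[1 \twodots 2n-1]$ is $\geq x'[i \twodots i+n-1]$, which is a single instance of Minimal Length-$l$ Substring with $l = n$ on a length-$2n$ input. This sits at the ratio $l/|x'| = 1/2$ handled by \Cref{lem:minimal_length_l}, so it costs $\tilde{O}(\sqrt{n})$ queries to $x$ (a query to $x'$ is simulated by one query to $x$).

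For Minimal Suffix at candidate index $i$, I would split competing suffixes by length, calling $x[j \twodots n]$ \emph{long} if $n-j+1 \geq n/2$ and \emph{short} otherwise. If $i \leq n/2$, then $x[i \twodots n]$ itself is long; set $y \coloneqq x[i \twodots i + n/2 - 1]$. A single call to Minimal Length-$l$ Substring on $(x, y)$ certifies that every length-$(n/2)$ substring of $x$, and in particular every length-$(n/2)$ prefix of a long suffix, is $\geq y$, which (together with $O(1)$ quantum string comparisons to resolve ties between long suffixes sharing that prefix) rules out any long suffix beating $x[i \twodots n]$. Short suffixes are handled recursively: compute the minimum suffix of $x[n/2+1 \twodots n]$ and compare it to $x[i \twodots n]$ via quantum string comparison. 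This yields the recurrence $T(n) \leq T(n/2) + \tilde{O}(\sqrt{n})$, which solves to $T(n) = \tilde{O}(\sqrt{n})$. The case $i > n/2$ reduces directly to the recursive subcall on $x[n/2+1 \twodots n]$.

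The main obstacle is not the query count but the correctness bookkeeping: precisely specifying which family of competing rotations/suffixes each Minimal Length-$l$ Substring call certifies, handling boundary indices near the end of $x$ (where a length-$(n/2)$ substring does not fit), and resolving equal-prefix-different-length ties between suffixes. These issues are routine but fiddly; they are worked out in \cite[Section~4]{string_akmaljin_2022}, so I would defer to that reference for the details and simply plug in \Cref{lem:minimal_length_l} to improve their final bound from $O(n^{1/2+o(1)})$ to $\tilde{O}(\sqrt{n})$.
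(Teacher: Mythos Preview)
Your reduction for Minimal String Rotation is exactly the paper's: double $x$ to $xx$ and run a single Minimal Length-$l$ Substring instance at $l=n$, which lands precisely at the ratio handled by \Cref{lem:minimal_length_l}.

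For Minimal Suffix, however, the paper's route is considerably simpler than yours. Rather than splitting into long and short suffixes and recursing, the paper pads with fresh sentinels: introduce a symbol $0$ smaller than every element of $\Sigma$ and a symbol $1$ larger than every element, and evaluate $f_{2n}(1x0^{n-1},\, x[i\twodots n]0^{i-1})$. The length-$n$ substrings of $1x0^{n-1}$ starting at positions $2,\dots,n+1$ are exactly the suffixes of $x$ padded with $0$s to length $n$, and zero-padding preserves lexicographic order between suffixes of different lengths (a proper prefix stays smaller). This collapses the entire problem into \emph{one} call to Minimal Length-$l$ Substring, with no recursion and no tie-breaking.

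Your recursive scheme is plausibly repairable by appeal to \cite{string_akmaljin_2022}, but as sketched it has real gaps. In the case $i > n/2$ you say the problem ``reduces directly to the recursive subcall on $x[n/2+1\twodots n]$,'' but that subcall only compares the candidate against short suffixes; you still have to certify it beats every long suffix $x[j\twodots n]$ with $j \le n/2$, and you have not said how. In the case $i \le n/2$, the claim that $O(1)$ string comparisons resolve ties among long suffixes sharing the prefix $y$ is not right either: there can be $\Theta(n)$ such starting positions, and handling them efficiently needs the period-based argument (\cite[Lemma~4.8]{string_akmaljin_2022}) that you invoke only by reference. Since you defer correctness to \cite{string_akmaljin_2022} anyway, you would be better served adopting the padding reduction, which sidesteps all of this.
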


\begin{proof}[Proof sketch]
Let $f_{2n}$ be defined as in the proof of \Cref{lem:minimal_length_l}. The theorem follows from two observations:
\begin{enumerate}
    \item Minimal String Rotation can be computed by $f_{2n}(xx, x[i \twodots n]x[1 \twodots i-1])$.
    \item Minimal Suffix can be computed by $f_{2n}(1x0^{n-1}, x[i \twodots n]0^{i-1})$, where $0$ is defined to be smaller than all elements in $\Sigma$ and $1$ is defined to be larger than all elements in $\Sigma$.\qedhere
\end{enumerate}
\end{proof}

\begin{Remark*}
The quantum query complexities of Maximal String Rotation and Maximal Suffix (defined in the obvious way) are also $\tilde{O}(\sqrt{n})$ because they  easily reduce to Minimal String Rotation and Minimal Suffix.
\end{Remark*}

%=================================================
\subsection{\texorpdfstring{$k$}{k}-Increasing Subsequence}
\label{subsec:kis}

Let $\Sigma$ be equipped with a total order and fix $k$ to be a constant in this subsection. We consider the following problem, which is a natural parametrized version of Longest Increasing Subsequence.

\begin{Problem*}[$k$-IS] Let $n\in \mathbb{N}$. Given query access to $x\in \Sigma^n$, decide if $x$ has a $k$-IS, i.e., if there exist integers $i_1 < i_2 < \cdots < i_k$ such that $x[i_1] < x[i_2] < \cdots < x[i_k]$.
\end{Problem*}

To solve $k$-IS, we solve a variant of it that we call $k$-Increasing Subsequence* ($k$-IS*), defined as follows. We consider  $k$-IS* because it is more susceptible to recursion.

\begin{Problem*}[$k$-IS*] Let $n\in \mathbb{N}$ and let $*$ denote an element outside $\Sigma$. Given query access to $x\in (\Sigma\cup \{*\})^n$, decide if $x$ has a $k$-IS*, i.e., if there exist integers $i_1 < i_2 < \cdots <i_k$ such that $x[i_1] < x[i_2] < \cdots < x[i_k]$ and $x[i_l]\neq *$ for all $l\in [k]$.
\end{Problem*}

We now prove the main theorem of this subsection.

\begin{Theorem}
The quantum query complexities of \textup{$k$-IS} and \textup{$k$-IS*} are both $O(\sqrt{n}\log^{3(k-1)/2}(n))$.
\end{Theorem}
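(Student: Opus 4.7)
The plan is to induct on $k$, proving the bound for the more general $k$-IS* problem; the $k$-IS bound then follows since $k$-IS is just $k$-IS* on inputs with no $*$ entries. The base case $k = 1$ is immediate: $1$-IS* is the OR of the bits $\{x[i] \neq *\}$, solved by Grover's search in $O(\sqrt{n})$ queries. For the inductive step, writing $T(k-1, m) \coloneqq Q((k-1)\text{-IS*}, m) = O(\sqrt{m} \log^{3(k-2)/2}(m))$ by the hypothesis, I would split $x$ into halves $L = x[1 \twodots n/2]$ and $R = x[n/2+1 \twodots n]$ and use the identity
\[
    f_n(x) = f_{n/2}(L) \vee f_{n/2}(R) \vee g_n(x),
\]
where $f_m$ denotes $k$-IS* on length-$m$ inputs and $g_n$ detects a $k$-IS* that uses at least one element from each half. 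This fits \Cref{itm:strat1}, so \Cref{cor:strategies} yields $\adv(f_n)^2 \leq 2\adv(f_{n/2})^2 + O(Q(g_n))^2$. The entire argument then reduces to showing $Q(g_n) = O(\sqrt{n} \log^{(3k-4)/2}(n)) = O(T(k-1,n/2)\log n)$: feeding this into \Cref{lemma:master_theorem} with $a = b = 2$, $c = 1$, $p = 3k-4$ gives $\adv(f_n)^2 = O(n \log^{3(k-1)}(n))$, and \Cref{thm:adv-q} closes the induction with $Q(f_n) = O(\sqrt{n} \log^{3(k-1)/2}(n))$.

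To bound $Q(g_n)$, I would decompose $g_n = \bigvee_{j=1}^{k-1} g_n^{(j)}$, where $g_n^{(j)}$ checks for a straddling $k$-IS* with exactly $j$ elements in $L$. Such an IS exists iff $M_j(L) < m_{k-j}(R)$, where $M_j(L)$ is the smallest achievable maximum of a $j$-IS* in $L$ and $m_{k-j}(R)$ is the largest achievable minimum of a $(k-j)$-IS* in $R$. I would handle each $j$ via \Cref{itm:strat2}: let $f^\aux$ compute $M_j(L)$ by binary-searching over a value threshold $v$, where each probe calls the inductive $j$-IS* subroutine on $L$ to answer ``does $L$ have a $j$-IS* with max $\leq v$?''; given the returned $s = M_j(L)$, let $g^{(s)}$ apply the inductive $(k-j)$-IS* subroutine to $R$ with all values $\leq s$ masked as $*$. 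Every IS* call costs $O(T(k-1,n/2))$, and with $O(\log n)$ binary-search probes the \Cref{itm:strat2} clause of \Cref{cor:strategies} gives $\adv(g_n^{(j)}) = O(\log n \cdot T(k-1, n/2))$; the outer OR over the $O(1)$ values of $j$ preserves this via \Cref{lem:and-or}, yielding the target $Q(g_n)$.

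The delicate step will be realizing the aux for intermediate splits $2 \leq j \leq k-2$, which appear once $k \geq 4$. For the extreme splits $j \in \{1, k-1\}$, one side of the IS has length $1$, so $f^\aux$ degenerates to quantum min/max finding on a half (cost $O(\sqrt{n})$) and a single $(k-1)$-IS* call on the other side suffices---no binary search is needed. For intermediate $j$, both halves must simultaneously produce length-$\geq 2$ IS pieces, and the naive Grover over candidate ``crossing positions'' paired with a $(k-1)$-IS* checker would cost $\sqrt{n}\cdot T(k-1,n/2) = \tilde\Theta(n)$, too slow by a factor of $\sqrt{n}$. Replacing Grover by a binary search on the value threshold---over the $\leq n$ distinct values actually appearing in $x$, accessed via a quantum selection primitive if $|\Sigma|$ is unbounded---is exactly what trades the $\sqrt{n}$ factor for a $\log n$ factor, and verifying that this substitution preserves the adversary-bound composition cleanly is the main technical hurdle I anticipate.
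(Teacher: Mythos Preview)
This is essentially the paper's proof: same induction on $k$, same halving via \Cref{itm:strat1}, and the auxiliary cross term is bounded by computing your $M_j(L)$---the paper's $\mt_{j,n/2}$---using $O(\log n)$ calls to $j$-IS* for $j<k$, though the paper packages this as a direct quantum algorithm for $g_n^{(j)}$ (compute $\mt_j(L)$ and the symmetric $\mh_{k-j}(R)$, then compare) rather than routing through \Cref{itm:strat2}. The value-threshold search you flag as the delicate step is precisely what the paper resolves with a randomized-search lemma generalizing quantum minimum finding: Grover-sample a value $u$ from the current range, use a masked $j$-IS* call to decide whether $\mt_j$ lies above, below, or at $u$, and narrow the range---this costs $O(\log n)$ such calls regardless of $|\Sigma|$, so your anticipated hurdle clears.
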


\begin{proof}
Clearly, $k$-IS reduces to $k$-IS* without using additional queries. Therefore, it suffices to prove the theorem for $k$-IS*. We upper bound the quantum query complexity of the function $\LIS_{k,n}\colon \Sigma^n \to \{0,1\}$ defined by $\LIS_{k,n}(x) = 1$ iff $x$ contains a $k$-IS*.

We first introduce some definitions. Define $a_k(n)\coloneqq \adv(\LIS_{k,n})$. Define $\LIS_{(i,j),n}\colon \Sigma^n \to \{0,1\}$ by $\LIS_{(i,j),n}(x) = 1$ if and only if $x$ contains an $(i+j)$-IS* consisting of an $i$-IS*, $u_1 < u_2 < \cdots < u_i$, and a $j$-IS*,  $v_1 < v_2 < \cdots < v_j$, such that $u_i \leq n/2 < v_1$. For $j\in \mathbb{N}$, define $\mt_{j,n}\colon \Sigma^n \to \Sigma$ by $\mt_{j,n}(x) = \min_l x[l]$, where the minimization is over those $l\in [n]$ that are the last index of some $j$-IS* in $x$. Similarly, define $\mh_{j,n}\colon \Sigma^n \to \Sigma$ by $\mh_{j,n}(x) = \max_l x[l]$, where the maximization is over those $l\in [n]$ that are the first index of some $j$-IS* in $x$.

Observe that the increasing subsequence can be (i) contained entirely in the left half of $x$; or (ii) contained entirely in the right half of $x$; or (iii) partially contained in both halves with $i$ elements in the left and $k-i$ elements in the right for some $i \in [k-1]$. Therefore, we have
\begin{equation}
    \LIS_{k,n}(x) =  \LIS_{k,n/2}(x_\textleft) \vee \LIS_{k,n/2}(x_\textright) \vee \bigvee_{i=1}^{k-1} \LIS_{(i,k-i),n}(x).
\end{equation}

Therefore, by \Cref{cor:strategies} for~\Cref{itm:strat1}, we obtain the quantum divide and conquer recurrence
\begin{equation}\label{eq:lis-recurrence}
    a_k(n)^2 \leq 2 a_k(n/2)^2 + \sum_{i=1}^{k-1} O(Q(\LIS_{(i,k-i),n})^2).
\end{equation}

Now,
\begin{equation}\label{eq:lis-combine}
    Q(\LIS_{(i,k-i),n}) \leq Q(\mt_{i,n/2}) +  Q(\mh_{k-i,n/2}),
\end{equation}
because $\LIS_{(i,k-i),n}(x)$ can be computed by computing $\mt_{i,n/2}(x_\textleft)$ and $\mh_{k-i,n/2}(x_\textright)$ and checking that $\mt_{i,n/2}(x_\textleft)<\mh_{k-i,n/2}(x_\textright)$.

Moreover, for any $j\in \mathbb{N}$, $\mt_{j,n}$ and $\mh_{j,n}$ can be computed by a ``randomized search'' that uses $O(\log(n))$ computations of $\LIS_{j,n}$ and Grover search, which is similar to the approach taken for quantum minimum finding~\cite{minimum_finding_durr_hoyer_1996}---see~\Cref{app:randomized-reduction} for a proof. Therefore, for all $j,n\in \mathbb{N}$, we can use \cref{thm:adv-q} to obtain
\begin{equation}\label{eq:lis-search}
    Q(\mt_{j,n}), Q(\mh_{j,n}) \leq O((a_j(n) + \sqrt{n}) \log(n)).
\end{equation}

Substituting the combination of \Cref{eq:lis-combine} and \Cref{eq:lis-search} into \Cref{eq:lis-recurrence} shows that, for $k\geq 2$, we have
\begin{equation}\label{eq:lis-recurrence-2}
    a_k(n)^2 \leq 2 a_k(n/2)^2 + O( a_{k-1}(n)^2\log^2(n)),
\end{equation}
where we used the fact that $k$ is constant and $\Omega(\sqrt{n}) \leq a_i(n) \leq O(a_{i+1}(n))$ for all $i\geq 1$, which follows by applying \Cref{thm:adv-q} to the observations that (i) $\LIS_{i,n}(x)= \LIS_{i+1,n+1}(0x)$ for all $x$, where $0\neq *$ is some element smaller than all elements in $\Sigma$, and (ii) computing $\LIS_{1,n}$ is equivalent to searching for a $*$.

Finally, we prove $a_k(n) = O(\sqrt{n}\log^{3(k-1)/2}(n))$ by induction on $k\geq 1$. The base case ($k=1$) follows by applying \Cref{thm:adv-q} to the observation that computing $\LIS_{1,n}$ is equivalent to searching for a $*$. For $k>1$, \Cref{eq:lis-recurrence-2} and the inductive hypothesis for the $(k-1)$-th case give
\begin{equation}\label{eq:lis-recurrence-3}
    a_k(n)^2 \leq 2 a_k(n/2)^2 + O(n\log^{3k-4}(n)),
\end{equation}
which solves to $a_k(n) = O(\sqrt{n}\log^{3(k-1)/2}(n))$, as required. The result follows from \Cref{thm:adv-q}.
\end{proof}

%=================================================
\subsection{\texorpdfstring{$k$}{k}-Common Subsequence}
\label{subsec:kcs}

As our last application, we consider a parameterized version of the Longest Common Subsequence (LCS) problem. We fix $k$ to be a constant in this subsection.

\begin{Problem*}[$k$-CS] Let $n\in \mathbb{N}$. Given query access to strings $x, y\in \Sigma^n$, decide if they share a $k$-common subsequence ($k$-CS), i.e., if there exist integers $i_1<i_2< \cdots <i_k$ and $j_1<j_2< \cdots <j_k$ with $x[i_l]=y[j_l]$ for all $l \in [k]$.
\end{Problem*}

The main theorem of this subsection is the following.

\begin{restatable}{Theorem}{theoremKCSMainClaim}\label{thm:kcs_main_claim}
The quantum query complexity of $k$-CS is $O(n^{2/3} \log^{k-1}(n))$.
\end{restatable}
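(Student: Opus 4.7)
The plan is to induct on $k$. The base case $k = 1$ asks whether $x$ and $y$ share any common character, which is exactly bipartite element distinctness, so $Q(1\text{-CS}) = O(n^{2/3})$ by \Cref{thm:query-bounds}.

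For the inductive step, assume $Q(i\text{-CS}) = O(n^{2/3}\log^{i-1}(n))$ for every $1 \leq i < k$. I will apply the quantum divide-and-conquer framework by splitting both $x$ and $y$ at appropriate positions and analyzing how the matching positions of a candidate $k$-CS can distribute across the resulting sub-rectangles. The key constraint is that the matches must be increasing in both the $x$-index and the $y$-index, which severely restricts the valid configurations. Each valid configuration is either a full $k$-CS concentrated in a single sub-rectangle (triggering the recursion on a shorter instance), or a split into several sub-CS problems whose counts sum to $k$ and are each strictly less than $k$ (so that the inductive hypothesis applies).

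Using \Cref{itm:strat1} for the disjunction over configurations, combined with \Cref{itm:strat2} to enumerate transition positions without paying a naive $\OR$ blow-up in the recursive coefficient, I obtain a recurrence of the form
\begin{equation*}
    a_k(n)^2 \leq a \cdot a_k(n/b)^2 + O\bigl(Q^{\aux}(n)\bigr)^2
\end{equation*}
with $b = 7$ (per the paper's remark) and $a$ chosen so that $\log_b a < 4/3$. The auxiliary work invokes inductive subroutines on $i$-CS for $i < k$, with an $O(\log n)$ overhead coming from searching for transition points between sub-rectangles (e.g., by a binary-search or minimum-finding wrapper over an $i$-CS subroutine). This yields $Q^{\aux}(n) = O(n^{2/3}\log^{k-1}(n))$, and plugging into \Cref{lemma:master_theorem} gives $a_k(n)^2 = O(n^{4/3}\log^{2(k-1)}(n))$, so $a_k(n) = O(n^{2/3}\log^{k-1}(n))$. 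The theorem then follows from \Cref{thm:adv-q}.

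The main obstacle is identifying the precise decomposition achieving $b = 7$ with $a \leq 7^{4/3}$. A naive halving of both $x$ and $y$ produces four single-rectangle $k$-CS subproblems on strings of length $n/2$, yielding $a_k(n)^2 \leq 4\, a_k(n/2)^2 + \cdots$, which solves only to $a_k(n) = O(n)$ by \Cref{lemma:master_theorem} and is far too weak. The $\SWITCH$ composition of \Cref{lem:adv-switch} is therefore essential: it lets us enumerate transition positions (and possibly the value of a distinguished ``middle'' character of the purported $k$-CS) inside the auxiliary function, contributing only an additive $Q^{\aux}$ term rather than a disjunction that would inflate the multiplicative coefficient. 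Showing that, with this trick, the problem really does decompose so that $b = 7$ suffices while $a$ remains below $7^{4/3}$---and that all configurations outside the recursive ones can be expressed as $i$-CS problems for $i < k$ on appropriately truncated substrings of $x$ and $y$---is where the bulk of the work lies.
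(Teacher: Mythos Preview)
Your outline has the right skeleton---induction on $k$, base case via element distinctness, a $7$-way split, auxiliary cost handled by the inductive hypothesis plus a $\log n$ factor, and the use of $\SWITCH$ to keep the recursive coefficient small. But you explicitly leave open the central question (``identifying the precise decomposition achieving $b=7$ with $a\le 7^{4/3}$''), and your guess about what $\SWITCH$ should compute (``transition positions'' or a ``middle character'') is not the mechanism that works.

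The missing idea is this. Split each of $x,y$ into $m=7$ blocks, so there are $m^2$ sub-rectangles $(i,j)$. The $\SWITCH$ computes, for every $(i,j)$, whether $x^{(i)}$ and $y^{(j)}$ share \emph{any} collision---this $m^2$-bit signature costs $O(n^{2/3})$ by $m^2=O(1)$ calls to element distinctness. Given the signature $\s$, call a sub-rectangle $(i,j)$ \emph{critical} if $\s_{i,j}=1$ but no compatible sub-rectangle $(i',j')$ (meaning $i'<i,j'<j$ or $i'>i,j'>j$) has $\s_{i',j'}=1$. If a simple $k$-CS lives entirely in a non-critical sub-rectangle, it can be combined with a collision in a compatible relevant sub-rectangle to form a \emph{composite} $(k{+}1)$-CS, so such cases are absorbed into the composite branch. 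Hence only critical sub-rectangles require full $k$-CS recursion. The key combinatorial fact is that distinct critical sub-rectangles must have distinct ``slopes'' $i-j\in\{-(m{-}1),\ldots,m{-}1\}$, so there are at most $2m-1$ of them. This yields
\[
a_k(n)\le \sqrt{2m-1}\,a_k(n/m)+O\bigl(a_{k-1}(n)\log n\bigr)+O(n^{2/3}),
\]
and the Master Theorem applies once $\log_m\sqrt{2m-1}<2/3$, i.e., $2m-1<m^{4/3}$; for $m=7$ one has $13<7^{4/3}\approx 13.39$. The composite branch (a $k$-CS spanning at least two sub-rectangles) is handled separately by showing its leftmost edge has a degree-$1$ endpoint, so one can binary-search for the split index and reduce to $k_1$-CS and $(k-k_1)$-CS with $k_1<k$, giving the $O(a_{k-1}(n)\log n)$ term. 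Without the signature/critical-subproblem idea, there is no apparent way to get the recursive coefficient down from $m^2$ (or even $m$) to $2m-1$, which is exactly the gap your proposal acknowledges.
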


Define $\LCS_{k,n}\colon \Sigma^n\times \Sigma^n \to \{0,1\}$ by $\LCS_{k,n}(x,y) = 1$ iff $x, y \in \Sigma^n$ share a $k$-CS. When $k = 1$ this problem is equivalent to the (bipartite) Element Distinctness problem \cite{collision_aaronson_shi_2004, qwalk_elem_distictness_ambainis_2007}. Aaronson and Shi showed that the quantum query complexity of Element Distinctness is at least $\Omega(n^{2/3})$ \cite{collision_aaronson_shi_2004}, while Ambainis proved the matching upper bound of $O(n^{2/3})$ \cite{qwalk_elem_distictness_ambainis_2007}. Since Element Distinctness reduces to $k$-CS (by prefixing $k-1$ common elements to the input strings of $k$-CS), we see that our theorem gives a tight bound on the quantum query complexity of $k$-CS up to logarithmic factors.

We start by introducing some terminology. For $x,y \in \Sigma^n$, we say that $(i,j) \in [n] \times [n]$ is a collision of $x$ and $y$ iff $x[i] = y[j]$ (see \Cref{fig:collision_types} for examples). 

\begin{figure}[ht!]
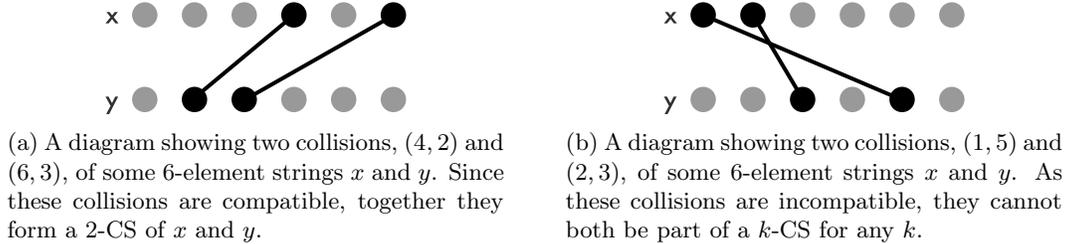

\centering
\begin{subfigure}[t]{0.4\textwidth}
    \centering
    \includesvg[scale=0.5]{figures/compatible_collisions.svg}
    \caption{A diagram showing two collisions, $(4,2)$ and $(6,3)$, of some $6$-element strings $x$ and $y$. Since these collisions are compatible, together they form a $2$-CS of $x$ and $y$.}
    \label{subfig:compatible_collisions_a}
\end{subfigure}
\qquad
\begin{subfigure}[t]{0.4\textwidth}
    \centering
    \includesvg[scale=0.5]{figures/crossing_collisions.svg}
    \caption{A diagram showing two collisions, $(1,5)$ and $(2,3)$, of some $6$-element strings $x$ and $y$. As these collisions are incompatible, they cannot both be part of a $k$-CS for any $k$.}
    \label{subfig:compatible_collisions_b}
\end{subfigure}
\caption{Diagrams illustrating collisions of $6$-element strings.}
\label{fig:collision_types}
\end{figure} 

We fix some constant $m \in \mathbb{N}$ which we call the splitting factor. Given inputs $x,y\in\Sigma^n$, we consider splitting $x$ and $y$ each into $m$ substrings of size $n/m$. We denote these substrings by
\begin{align}
    x^{(i)} \coloneqq x\left(\frac{i-1}{m}n\twodots \frac{i}{m}n\right] \quad \text{and} \quad y^{(i)} \coloneqq x\left(\frac{i-1}{m}n\twodots \frac{i}{m}n\right] \quad \text{for all $i \in [m]$}.
\end{align}

\begin{Definition}[Subproblems and signatures]
By a \emph{subproblem}, we refer to a tuple $(i,j)\in [m]\times[m]$. By the \emph{$(i,j)$-subproblem of $(x,y)\in \Sigma^n \times \Sigma^n$}, we refer to the tuple $(x^{(i)}, y^{(j)})$. To each $x,y\in \Sigma^n$ we associate an $m^2$-bit signature $\mathbf{s} \in \mathcal{S} \coloneqq \{0,1\}^{m^2}$ such that $\s_{i,j} = 1$ iff there is a collision between $x^{(i)}$ and $y^{(j)}$. Let $\sigma_n \colon \Sigma^n \times \Sigma^n \to \mathcal{S}$ be the function that, given input $(x,y)$, returns its signature.
\end{Definition}

It is clear that the signature can be computed by using Ambainis's algorithm for Element Distinctness~\cite{qwalk_elem_distictness_ambainis_2007} $m^2$ times, once for each subproblem. Since $m$ is a constant, we have the following.

\begin{Lemma}\label{lemma:adv_signature}
The quantum query complexity of $\sigma_n$ satisfies $Q(\sigma_n) = O(n^{2/3})$.
\end{Lemma}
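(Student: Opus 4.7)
The plan is to reduce $\sigma_n$ to $m^2$ independent instances of bipartite Element Distinctness, one for each subproblem $(i,j) \in [m]\times[m]$. By definition, the $(i,j)$-th bit of the signature is exactly the indicator of whether there is a collision between $x^{(i)}$ and $y^{(j)}$, i.e., whether the strings $x^{(i)}, y^{(j)} \in \Sigma^{n/m}$ share any common element. So it suffices to decide, for each of the $m^2$ pairs, whether they share an element, and assemble the answers into the signature.

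For each pair $(i,j)$, I would invoke Ambainis's bipartite Element Distinctness algorithm (cited in \Cref{thm:query-bounds}) on the pair $(x^{(i)}, y^{(j)})$. Since each of these strings has length $n/m$, a single such call uses $O((n/m)^{2/3}) = O(n^{2/3})$ quantum queries, where the final equality absorbs the constant $m^{-2/3}$ into the $O$-notation. Summing over all $m^2$ subproblems, the total number of queries is $O(m^2 \cdot n^{2/3}) = O(n^{2/3})$, again using that $m$ is a fixed constant.

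The one subtlety is that Element Distinctness is a bounded-error quantum algorithm, while $\sigma_n$ must output all $m^2$ bits correctly with probability at least $2/3$. To handle this, I would first amplify the success probability of each individual Element Distinctness call from $2/3$ to $1 - 1/(3m^2)$ via standard majority-vote amplification, which multiplies the query cost of each call by $O(\log(m^2)) = O(1)$ since $m$ is constant. A union bound over the $m^2$ subproblems then guarantees overall failure probability at most $1/3$. This constant-factor overhead does not affect the asymptotic bound, so the total quantum query complexity of $\sigma_n$ remains $O(n^{2/3})$.

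I do not anticipate a significant obstacle: because $m$ is constant throughout \Cref{subsec:kcs}, both the number of calls and the amplification overhead are $O(1)$, and the reduction to Element Distinctness is immediate from the definition of the signature. The only thing one must verify is that $x^{(i)}$ and $y^{(j)}$ can be queried at the same cost as $x$ and $y$ (which they can, since accessing position within a substring is just an index shift on the original oracle).
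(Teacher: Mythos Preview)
Your proposal is correct and matches the paper's argument exactly: the paper simply observes that the signature is computed by running Ambainis's Element Distinctness algorithm $m^2$ times, once per subproblem, and that $m$ is constant. Your added remarks on error amplification and oracle access are fine but more detailed than what the paper provides.
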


\begin{Definition}[Compatible, relevant, and critical subproblems]\
\begin{enumerate}[(i),topsep=4pt]
    \item We say that two subproblems $(i_1, j_1)$ and $(i_2, j_2)$ are \emph{compatible} if ($i_1 < i_2$ and $j_1 < j_2$) or ($i_1 > i_2$ and $j_1 > j_2$). 
    \item Given a signature $\s\in \mathcal{S}$, we say that the $(i,j)$ subproblem is \emph{$\s$-relevant} if $\s_{i,j} = 1$.
    \item Given a signature $\s\in \mathcal{S}$, we say that an $\s$-relevant subproblem $(i,j)$ is \emph{$\s$-critical} if none of the subproblems compatible with $(i,j)$ are $\s$-relevant.  
\end{enumerate}
\end{Definition}

Observe that if subproblems $(i_1, j_1)$ and $(i_2, j_2)$ are compatible then, for any $x, y\in \Sigma^n$, the union of any $k_1$-CS of the $(i_1,j_1)$-subproblem of $(x,y)$ and any $k_2$-CS of the $(i_2,j_2)$-subproblem of $(x,y)$ is a $(k_1 + k_2)$-CS of $(x,y)$.
A key intuition of our approach is the following.
Given $x,y\in \Sigma^n$, to decide whether $x,y$ share a $k$-CS using divide and conquer, we might want to consider recursing on the $(i_1, j_1)$-subproblem of $(x,y)$. Assume that we have obtained the signature $\s$ of $(x,y)$, and we find that  subproblem $(i_1, j_1)$ is $\s$-relevant but not $\s$-critical, i.e., there exists an $\s$-relevant subproblem $(i_2, j_2)$ which is compatible with $(i_1, j_1)$. In this setting, finding even a $(k-1)$-CS of the $(i_1, j_1)$-subproblem of $(x,y)$ would suffice to conclude that $\LCS_{k,n}(x,y) = 1$, since the collision in the $(i_2, j_2)$-subproblem of $(x,y)$ can be combined with a $(k-1)$-CS in the $(i_1, j_1)$-subproblem of $(x,y)$ to give a $k$-CS of $(x,y)$. This suggests that we should only recurse on those subproblems that are critical according to the signature of $(x,y)$. 

The above discussion motivates the following definition.

\begin{Definition}[Simple and composite $k$-CS]\label{def:simple_composite}
Let $x, y \in \Sigma^n$. We say that a $k$-CS of $(x, y)$ is \emph{simple} if it is also a $k$-CS of $(x^{(i)},y^{(j)})$ for some $i,j \in [m]$. A $k$-CS of $(x,y)$ which is not simple is called \emph{composite}.
\end{Definition}

\begin{figure}[ht!]
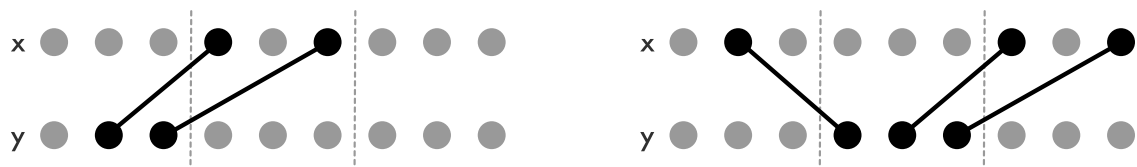

\centering
\begin{subfigure}[t]{0.45\textwidth}
    \includesvg[scale=0.55]{figures/simple_witness.svg}
    \caption{A diagram illustrating a $2$-CS of some $9$-element strings $x$ and $y$. When the splitting factor is $3$, this $2$-CS is a simple witness of $\LCS_{2,9}(x,y) = 1$, since it is also a $2$-CS of the $(2,1)$-subproblem of $(x,y)$.} 
    \label{subfig:simple_witness}
\end{subfigure}
\hspace{2em}
\begin{subfigure}[t]{0.45\textwidth}
    \includesvg[scale=0.55]{figures/composite_witness.svg}
    \caption{A diagram illustrating a $3$-CS of some $9$-element strings $x$ and $y$. When the splitting factor is $3$, this $3$-CS is a composite witness of $\LCS_{3,9}(x,y) = 1$.}
    \label{subfig:composite_witness}
\end{subfigure}
\caption{Diagrams illustrating simple and composite witnesses for different choices of $k$ with respect to a splitting factor of $m = 3$.}
\label{fig:simple_and_composite_witnesses}
\end{figure}

We define $\LCSC{k}{n}\colon\Sigma^n\times \Sigma^n \to \{0,1\}$ by $\LCSC{k}{n}(x,y) = 1$ iff $x$ and $y$ share a composite $k$-CS (see \Cref{fig:simple_and_composite_witnesses}). Given a signature $\s \in \mathcal{S}$, define $h_{k,n}^{(\s)} \colon \Sigma^n \times \Sigma^n \to \{0,1\}$ by $h_{k,n}^{(\s)}(x,y) = 1$ iff there exists an $\s$-critical subproblem $(i, j)$ such that the $(i,j)$-subproblem of $(x,y)$ has a $k$-CS.

By the above discussion, we have the following proposition.

\begin{Proposition}\label{lemma:kcs_case_switch}
For all $x,y\in \Sigma^n$, we have
\begin{align}\label{eq:kcs_decomposition}
   \LCS_{k,n}(x,y) = \LCSC{k}{n}(x,y) \lor h_{k,n}^{(\sigma_n(x,y))}(x,y).
\end{align}
\end{Proposition}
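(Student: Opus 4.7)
The plan is to prove both containments of the Boolean identity separately; fix $x, y \in \Sigma^n$ and set $\s \coloneqq \sigma_n(x,y)$ throughout.

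The reverse direction ($\Leftarrow$) is immediate from the definitions. If $\LCSC{k}{n}(x,y) = 1$, then by \Cref{def:simple_composite} the pair $(x,y)$ has a (composite) $k$-CS, so $\LCS_{k,n}(x,y) = 1$. If instead $h_{k,n}^{(\s)}(x,y) = 1$, then some $\s$-critical subproblem $(i,j)$ admits a $k$-CS within $(x^{(i)}, y^{(j)})$, and any such $k$-CS lifts to a $k$-CS of $(x,y)$ by the definition of $x^{(i)}$ and $y^{(j)}$.

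For the forward direction ($\Rightarrow$), assume $\LCS_{k,n}(x,y) = 1$ and fix any $k$-CS $W$ of $(x,y)$. If $W$ is composite, then $\LCSC{k}{n}(x,y) = 1$ and we are done. Otherwise $W$ is simple and sits entirely inside some subproblem $(i,j)$, which forces $\s_{i,j} = 1$, i.e. that $(i,j)$ is $\s$-relevant. If $(i,j)$ is itself $\s$-critical, then $h_{k,n}^{(\s)}(x,y) = 1$ and we are again done. In the remaining case $(i,j)$ is $\s$-relevant but not $\s$-critical, so by definition there is a subproblem $(i', j')$ that is both compatible with $(i,j)$ and $\s$-relevant, and hence contains some collision $(p^*, q^*)$.

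The key step is now to convert this configuration into a composite $k$-CS, exactly as the paragraph preceding the proposition indicates. WLOG $i < i'$ and $j < j'$ (the opposite case is symmetric). Because the half-open intervals $(\tfrac{i-1}{m}n, \tfrac{i}{m}n]$ and $(\tfrac{i'-1}{m}n, \tfrac{i'}{m}n]$ are disjoint with the first lying strictly to the left of the second, every collision in the $(i,j)$-subproblem has both coordinates strictly smaller than $(p^*, q^*)$. I then discard one element of $W$ (say the first) and adjoin $(p^*, q^*)$, producing a list of $k$ collisions of $(x,y)$ that is still strictly increasing in both coordinates, is partly contained in $(i,j)$ and partly in $(i',j')$, and is therefore composite by \Cref{def:simple_composite}. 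Hence $\LCSC{k}{n}(x,y) = 1$, completing the proof. The construction uses $k \geq 2$ so that at least one collision of $W$ survives the deletion; the case $k=1$ is handled directly by bipartite element distinctness and is not invoked in the recursion. There is no serious technical obstacle --- the only subtlety is the strict separation of index ranges between compatible subproblems, which is purely a matter of the half-open convention fixed in the definitions of $x^{(i)}$ and $y^{(j)}$.
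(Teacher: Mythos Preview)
Your proof is correct and follows essentially the same approach as the paper's: both arguments hinge on the observation that if a simple $k$-CS lives in a relevant but non-critical subproblem $(i,j)$, then combining (all but one of) its collisions with a collision from a compatible relevant subproblem $(i',j')$ yields a composite $k$-CS. The paper frames this contrapositively (assuming no composite $k$-CS exists forces $(i,j)$ to be $\s$-critical), whereas you give the direct construction, but the content is identical --- and your remark that the construction requires $k\geq 2$ is apt, since the paper's proof carries the same implicit restriction and the proposition is only invoked for $k\geq 2$.
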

\begin{proof}
If $x$ and $y$ do not share a $k$-CS, then both sides of the equation evaluate to false. If $x$ and $y$ share a composite $k$-CS then both sides of the equation evaluate to true. 

Consider the remaining case when $x$ and $y$ share a simple $k$-CS but no composite $k$-CS. The left-hand side of \Cref{eq:kcs_decomposition} evaluates to true.  Let $\s \coloneqq \sigma_n(x,y)$ be the signature of $(x,y)$. Then, the right-hand side of \Cref{eq:kcs_decomposition} evaluates to $h_{k,n}^{(\s)}(x,y)$. Since $x$ and $y$ share a simple $k$-CS, there exists a $k$-CS of $(x,y)$ which is also a $k$-CS of $(x^{(i)}, y^{(j)})$ for some $\s$-relevant subproblem $(i, j)$. Then $(i,j)$ must also be $\s$-critical, since otherwise there would be an $\s$-relevant subproblem $(i', j')$ compatible with $(i,j)$, which implies that the $(i,j)$- and $(i', j')$-subproblems of $(x,y)$ would together yield a composite $k$-CS of $(x,y)$, contradicting the case we are in. Therefore, $h_{k,n}^{(\mathbf{s})}(x,y)$ also evaluates to true as required.
\end{proof}

Let $a_k(n)$ denote the adversary quantity of $\LCS_{k,n}$. As $Q(\LCS_{k,n}) = \Theta(a_k(n))$ (\cref{thm:adv-q}), we can prove \Cref{thm:kcs_main_claim} by  upper bounding the adversary quantity $a_k(n)$ instead of the quantum query complexity $Q(\LCS_{k,n})$. As a first step we establish upper bounds on the adversary quantities and quantum query complexities of the functions on the right-hand side of \Cref{eq:kcs_decomposition}.

\begin{Lemma}\label{lem:upper_bound_adv_hkns}
For any signature $\mathbf{s}\in \mathcal{S}$, the adversary quantity of the function $h_{k,n}^{(\s)}$ satisfies
\begin{align}
    \adv(h_{k,n}^{(\mathbf{s})}) \leq \sqrt{2m-1} \, a_k(n/m).
\end{align}
\end{Lemma}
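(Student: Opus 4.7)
The plan is to write $h_{k,n}^{(\s)}$ as a disjunction over $\s$-critical subproblems and then invoke the $\OR$ composition bound from \Cref{lem:and-or}. Letting $C_\s \subseteq [m]\times[m]$ denote the set of $\s$-critical subproblems, the definition of $h_{k,n}^{(\s)}$ gives immediately
\begin{equation*}
h_{k,n}^{(\s)}(x,y) \;=\; \bigvee_{(i,j)\in C_\s} \LCS_{k,n/m}\bigl(x^{(i)}, y^{(j)}\bigr).
\end{equation*}
Each disjunct depends only on the blocks $(x^{(i)}, y^{(j)})$ of the full input, so (viewed as a function of $(x,y)$) its adversary quantity equals that of $\LCS_{k,n/m}$, namely $a_k(n/m)$.

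The heart of the argument is a purely combinatorial bound $|C_\s| \leq 2m-1$. By the definition of criticality, any two distinct subproblems in $C_\s$ are incompatible, so $C_\s$ is an antichain in the strict product order $\prec$ on $[m]^2$ defined by $(i_1,j_1)\prec(i_2,j_2)$ iff $i_1<i_2$ and $j_1<j_2$. I would then partition $[m]^2$ into the $2m-1$ diagonals $D_c \coloneqq \{(i,j)\in[m]^2 : i - j = c\}$ for $c \in \{-(m-1),\ldots,m-1\}$. Each $D_c$ is totally ordered under $\prec$ (consecutive elements differ by $(1,1)$), so $D_c$ is a chain, and an antichain can contain at most one element of each chain. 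This gives the desired bound $|C_\s| \leq 2m-1$ in one line.

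Finally, iterating \Cref{lem:and-or} across the $\OR$ of at most $2m-1$ functions each of adversary quantity at most $a_k(n/m)$ yields $\adv(h_{k,n}^{(\s)})^2 \leq (2m-1)\,a_k(n/m)^2$, and taking square roots gives the statement. Note that although different disjuncts may share variables when their row or column indices coincide, the footnote to \Cref{lem:and-or} assures us that the composition bound still applies. The only substantive step is the antichain count; everything else is a direct application of the adversary composition theorem, so I expect the chain-partition argument to be the main (and only real) obstacle.
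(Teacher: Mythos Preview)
Your proposal is correct and essentially identical to the paper's own proof: the paper also writes $h_{k,n}^{(\s)}$ as an $\OR$ over $\s$-critical subproblems, applies \Cref{lem:and-or}, and bounds the number of critical subproblems by $2m-1$ via the same diagonal/slope argument (observing that subproblems with equal $i-j$ are compatible, so at most one per value of $i-j$ can be critical). Your antichain/chain phrasing is just a slightly more structured way of stating exactly the paper's slope-counting observation.
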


\begin{Lemma}\label{lem:upper_bound_adv_fknc} The quantum query complexity of $\LCSC{k}{n}$ satisfies
\begin{align}
    Q(\LCSC{k}{n}) \leq O(a_{k-1}(n)\log{n}).
\end{align}
\end{Lemma}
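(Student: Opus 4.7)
The plan is to construct a quantum algorithm for $\LCSC{k}{n}$ with query complexity $O(a_{k-1}(n)\log n)$, from which the lemma follows immediately. The heart of the argument is a combinatorial characterization: any composite $k$-CS $(i_1,j_1),\ldots,(i_k,j_k)$ has collisions in at least two distinct blocks, so letting $l\in[k-1]$ be the smallest index such that the $l$-th and $(l{+}1)$-th collisions lie in different blocks, those two blocks must differ in their $I$- or $J$-coordinate. If their $I$-coordinates differ, setting $p = nI_l/m$ (the $x$-block boundary between them) and $q = j_l$ makes the first $l$ collisions form an $l$-CS of $(x[1 \twodots p], y[1 \twodots q])$ and the last $k-l$ collisions form a $(k-l)$-CS of $(x[p{+}1 \twodots n], y[q{+}1 \twodots n])$; if their $J$-coordinates differ, symmetrically $q = nJ_l/m$ and $p = i_l$ works. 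Conversely, any split in which $p$ or $q$ is a block boundary forces the two pieces to occupy disjoint blocks, yielding a composite $k$-CS upon assembly.

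The algorithm therefore enumerates the $O(1)$ choices of $(I,l) \in [m-1]\times[k-1]$ for $x$-boundary splits and $(J,l) \in [m-1]\times[k-1]$ for $y$-boundary splits, and takes the logical OR of the individual ``does a valid split exist?'' subroutines. For fixed $(I,l)$, define $f(q) \coloneqq \mathbf{1}[(x[1 \twodots nI/m], y[1 \twodots q])\text{ contains an }l\text{-CS}]$ and $g(q) \coloneqq \mathbf{1}[(x[nI/m{+}1 \twodots n], y[q{+}1 \twodots n])\text{ contains a }(k{-}l)\text{-CS}]$; $f$ is monotone non-decreasing and $g$ is monotone non-increasing in $q$, so a valid $q$ exists iff $q_{\min} \coloneqq \min\{q : f(q)=1\} \leq q_{\max} \coloneqq \max\{q : g(q)=1\}$. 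I locate $q_{\min}$ and $q_{\max}$ via classical binary search with $O(\log n)$ iterations, invoking at each step the quantum algorithm for $\LCS_{l,n'}$ or $\LCS_{k-l,n'}$ (noting $l, k{-}l \leq k{-}1$ and $n' \leq n$) at a cost of $O(a_{k-1}(n))$ queries per call, amplified so that all $O(\log n)$ calls within a single binary search succeed with constant probability. The symmetric $(J,l)$ case is handled identically with the roles of $p$ and $q$ swapped.

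Summing over the constant number of $(I,l)$ and $(J,l)$ cases yields total query complexity $O(a_{k-1}(n)\log n)$. The main obstacle is verifying that the characterization captures \emph{every} composite $k$-CS, including the tricky case where two consecutive collisions share one of their block coordinates (say $I_l = I_{l+1}$, $J_l < J_{l+1}$): for such configurations, splitting purely at block boundaries in \emph{both} coordinates would miss the witness, which is precisely why one coordinate of the split must be allowed to slide internally within a block while the other sits at a boundary. Once this is in place, the reduction to $(k{-}1)$-CS is a direct consequence of the monotonicity of $f$ and $g$ combined with the ``flat'' cost of $\LCS$ on restricted sub-inputs.
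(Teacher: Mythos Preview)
Your characterization and algorithm are correct, and the route is genuinely different from the paper's, though the two share the same high-level shape (split a composite witness into a prefix and suffix, fix one coordinate at a block boundary, binary-search the other, and appeal to $(k{-}1)$-CS on each piece).

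The paper's combinatorial step is more elaborate: it associates to each composite $k$-CS $S$ a weighted bipartite subgraph $G_S \subseteq K_{m,m}$, observes that the edges of $G_S$ admit a planar embedding and hence a unique leftmost edge $(u_{i^*},v_{j^*})$, and then argues that at least one endpoint of this edge has degree $1$ in $G_S$. That degree-$1$ property is what guarantees that after the $k_1$ collisions in the leftmost block, the remaining collisions cleanly avoid $y^{(j^*)}$ (respectively $x^{(i^*)}$), so the split can place the $y$-boundary (respectively $x$-boundary) at the block edge and binary-search over a position inside $x^{(i^*)}$ (respectively $y^{(j^*)}$). Your argument bypasses the graph machinery entirely: you look at the first index $l$ where consecutive collisions change block, case on whether the $I$- or $J$-coordinate jumps, and take that jumping coordinate to sit at its block boundary. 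This is cleaner and more direct; the paper's approach, on the other hand, restricts the binary search to a range of size $n/m$ rather than $n$ (immaterial asymptotically) and makes explicit the enumeration over both block indices $(i,j)$ of the leftmost edge.

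One minor gap: your phrase ``amplified so that all $O(\log n)$ calls within a single binary search succeed with constant probability'' describes the naive approach of boosting each call to error $O(1/\log n)$, which costs an extra $O(\log\log n)$ factor per call and yields only $O(a_{k-1}(n)\log n\,\log\log n)$. The paper removes this factor by invoking the noisy-binary-search result of Feige, Raghavan, Peleg, and Upfal~\cite{noisybinary_feige_1994}, which tolerates a constant error rate per comparison without amplification. You should either cite that result or accept the weaker bound (which would still give $\tilde{O}(n^{2/3})$ for the main theorem, but not the lemma as stated).
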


We defer the proofs of \Cref{lem:upper_bound_adv_hkns,lem:upper_bound_adv_fknc} for now, and prove the following lemma assuming these results. 

\begin{Lemma}\label{lemma:recurrence_of_kcs_adv}
For $k\geq 2$, the adversary quantity $a_{k}(n)$ of $\LCS_{k,n}$ satisfies the recurrence
\begin{align}
    a_k(n) \leq \sqrt{2m-1}\ a_k(n/m) + O(a_{k-1}(n) \log{n}).
\end{align}
\end{Lemma}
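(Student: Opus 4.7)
The plan is to apply \Cref{itm:strat2} (the $\SWITCH$ strategy) to the decomposition of $\LCS_{k,n}$ provided by \Cref{lemma:kcs_case_switch}, with the auxiliary function $\sigma_n\colon \Sigma^n\times\Sigma^n \to \mathcal{S}$ playing the role of $f^{\aux}$. Once the signature $\s = \sigma_n(x,y)$ is obtained, the remaining computation is the $\OR$ of the two Boolean functions $\LCSC{k}{n}$ and $h_{k,n}^{(\s)}$, which falls under \Cref{itm:strat1} (via \Cref{lem:and-or}). So the overall structure is a $\SWITCH$ whose inner functions, for each value of $\s$, are $\OR$ compositions of $\LCSC{k}{n}$ and $h_{k,n}^{(\s)}$.

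Concretely, I would first apply \Cref{cor:strategies} for \Cref{itm:strat2} to get
\begin{equation*}
    a_k(n) = \adv(\LCS_{k,n}) \leq O(Q(\sigma_n)) + \max_{\s \in \mathcal{S}} \adv\bigl(\LCSC{k}{n} \vee h_{k,n}^{(\s)}\bigr).
\end{equation*}
Next, I would invoke \Cref{lem:and-or} on the inner $\OR$ and then take square roots using $\sqrt{u^2+v^2}\leq u+v$ for non-negative $u,v$, obtaining
\begin{equation*}
    \adv\bigl(\LCSC{k}{n} \vee h_{k,n}^{(\s)}\bigr) \leq \adv(\LCSC{k}{n}) + \adv(h_{k,n}^{(\s)}).
\end{equation*}
Plugging in the bounds \Cref{lemma:adv_signature,lem:upper_bound_adv_hkns,lem:upper_bound_adv_fknc} (and using $\adv(\LCSC{k}{n})\leq O(Q(\LCSC{k}{n}))$ from \Cref{thm:adv-q}) then yields
\begin{equation*}
    a_k(n) \leq O(n^{2/3}) + \sqrt{2m-1}\, a_k(n/m) + O(a_{k-1}(n)\log n).
\end{equation*}

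The only remaining step is to absorb the stray $O(n^{2/3})$ term. For this I would use that $1$-CS is (bipartite) Element Distinctness, so $a_1(n) = \Theta(n^{2/3})$ by \cite{collision_aaronson_shi_2004,qwalk_elem_distictness_ambainis_2007} and \Cref{thm:adv-q}; combined with the monotonicity $a_{k-1}(n) \geq a_1(n)$ (which follows by padding a $1$-CS instance with $k-2$ common prefix symbols and applying \Cref{thm:adv-q}), we get $n^{2/3} = O(a_{k-1}(n))$ for all $k\geq 2$, so the $O(n^{2/3})$ term is dominated by the $O(a_{k-1}(n)\log n)$ term, completing the claimed recurrence.

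I don't expect a serious obstacle here: the work is conceptual—identifying the decomposition as a $\SWITCH$ with an $\OR$ inside, so that both \Cref{lem:adv-switch} and \Cref{lem:and-or} can be composed via \Cref{cor:strategies}—and the rest is bookkeeping. The most delicate point is confirming that $\LCSC{k}{n}$, which is independent of $\s$, is properly treated as part of the inner $\OR$ for every branch of the switch rather than as a separate top-level term; but this is fine because \Cref{cor:strategies} allows us to define, for each $s\in\Lambda$, an inner composition $g^{(s)}\circ(f_1^{(s)},\ldots,f_a^{(s)})$ whose constituent functions need not depend on $s$.
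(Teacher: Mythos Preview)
Your proposal is correct and follows essentially the same approach as the paper: both proofs combine \Cref{itm:strat1,itm:strat2} via \Cref{cor:strategies} on the decomposition from \Cref{lemma:kcs_case_switch}, then invoke \Cref{lemma:adv_signature,lem:upper_bound_adv_hkns,lem:upper_bound_adv_fknc} and absorb the $O(n^{2/3})$ term using $a_{k-1}(n)=\Omega(n^{2/3})$. The only cosmetic difference is the nesting order---the paper treats the expression as an $\OR$ whose second branch is a $\SWITCH$, whereas you treat it as a $\SWITCH$ whose branches are $\OR$s with a common first argument---but both orderings yield the same bound.
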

\begin{proof}
Recall that by \Cref{lemma:kcs_case_switch},
\begin{align}
\LCS_{k,n}(x,y) = \LCSC{k}{n}(x,y) \lor h_{k,n}^{(\sigma_n(x,y))}(x,y).
\end{align}
This Boolean expression is an $\OR$ where one of the components, $h_{k,n}$, is a $\SWITCH$ conditioned on the value of the signature function $\sigma_n(x,y)$. This involves both of our strategies discussed in~\Cref{sec:framework}. Therefore, by applying \Cref{cor:strategies},
\begin{align}
    a_k(n) = \adv(\LCS_{k,n}) &\leq O(Q(\LCSC{k}{n})) +
    O(Q(\sigma_n)) +
    \max_{\mathbf{s}\in\mathcal{S}}\{\adv(h_{k,n}^{(\mathbf{s})})\}. \\ 
    \intertext{By \Cref{lemma:adv_signature,lem:upper_bound_adv_hkns,lem:upper_bound_adv_fknc},} 
    a_k(n) &\leq O(a_{k-1}(n)\log{n}) +  O(n^{2/3}) + \sqrt{2m-1} \, a_{k}(n/m) \\
    &\leq \sqrt{2m-1} \, a_{k}(n/m) + O(a_{k-1}(n)\log{n}),
\end{align}
where the last line follows since $a_1(n) \in \Theta(n^{2/3})$, and $a_{k-1}(n) = \Omega(n^{2/3})$ since Element Distinctness reduces to $(k-1)$-CS for $k\geq 2$ by prefixing $k-2$ common elements to the input strings for $(k-1)$-CS. 
\end{proof}

The above recurrence allows us to prove our main theorem.
\begin{proof}[Proof of \Cref{thm:kcs_main_claim}]
We proceed by induction on $k\geq 1$. In the base case ($k=1$), the theorem holds since $1$-CS is equivalent to bipartite Element Distinctness, which has quantum query complexity $O(n^{2/3})$ \cite{qwalk_elem_distictness_ambainis_2007}. Assume that $k\geq 2$. By \cref{lemma:recurrence_of_kcs_adv} we have the following recurrence for the adversary quantity of $\LCS_{k,n}$:
\begin{align}
    a_k(n) &\leq \sqrt{2m-1} \, a_{k}(n/m) + O(a_{k-1}(n)\log{n}) \\ 
    &\leq \sqrt{2m-1} \, a_k(n/m) + O(n^{2/3}\log^{k-1}(n))
\end{align}
where the second inequality follows by our inductive hypothesis, $a_{k-1}(n) \leq O(n^{2/3} \log^{k-2}(n))$. The result follows from the Master Theorem (\Cref{lemma:master_theorem}) provided that $\log_m(\sqrt{2m-1}) < 2/3$. We achieve this by choosing $m = 7$.
\end{proof}

\begin{figure}[ht!]
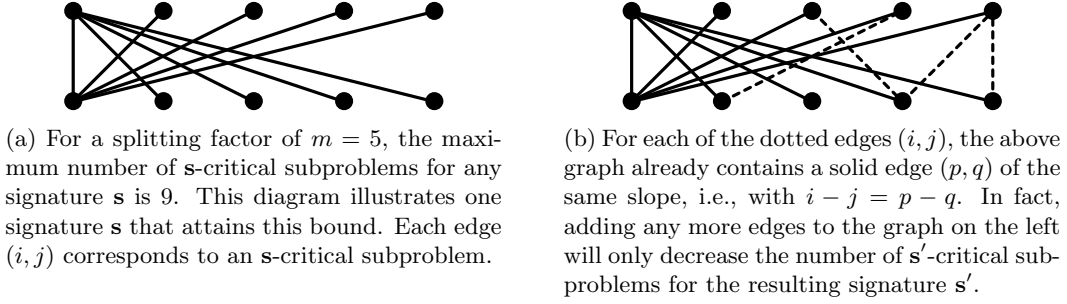

\centering
\begin{subfigure}[t]{0.4\textwidth}
    \centering
    \includesvg[scale=1.6]{figures/graph_with_maximal_number_of_critical_edges.svg}
    \caption{For a splitting factor of $m = 5$, the maximum number of $\s$-critical subproblems for any signature $\s$ is 9. This diagram illustrates one signature $\s$ that attains this bound. Each edge $(i,j)$ corresponds to an $\s$-critical subproblem.}
    \label{subfig:assoc_graph_extremal_example}
\end{subfigure}
\qquad
\begin{subfigure}[t]{0.4\textwidth}
    \centering
    \includesvg[scale=1.6]{figures/graph_with_maximal_number_of_critical_edges2.svg}
    \caption{For each of the dotted edges $(i,j)$, the above graph already contains a solid edge $(p,q)$ of the same slope, i.e., with $i-j = p-q$. In fact, adding any more edges to the graph on the left will only decrease the number of $\s'$-critical subproblems for the resulting signature $\s'$.}
    \label{subfig:assoc_graph_extremal_example2}
\end{subfigure}
\caption{Diagrams illustrating an example of a signature $\s$ for which the number of $\s$-critical subproblems is maximal. Here, the splitting factor is $m = 5$ and we represent a signature $\s\in \mathcal{S} = \{0,1\}^{m^2}$ by the subgraph of $K_{m,m}$ that has edge $(i,j)$ iff $\s_{i,j} = 1$.}
\label{fig:extremal}
\end{figure}

We now prove \Cref{lem:upper_bound_adv_hkns,lem:upper_bound_adv_fknc}.

\begin{proof}[Proof of \Cref{lem:upper_bound_adv_hkns}.] Fix an arbitrary signature $\s\in \mathcal{S}$. Recall that the function $h_{k,n}^{(\s)}(x,y)$ indicates whether there exists an $\s$-critical subproblem $(i,j)$ such that the $(i,j)$-subproblem of $(x,y)$ has a $k$-CS.
Let $R \coloneqq \{(i_l, j_l): l\in[r]\}$ denote the set of $\s$-critical subproblems. Then
\begin{align}
    h_{k,n}^{(\s)}(x,y) = \bigvee_{l\in [r]} \LCS_{k, n/m}(x^{(i_l)}, y^{(j_l)}).
\end{align}
By the adversary composition lemma for $\OR$ (\Cref{lem:and-or}),
\begin{align}
    \adv(h_{k,n}^{(\s)}) = \sqrt{r} \, a_k(n/m).
\end{align}

To conclude, it suffices to prove that $r \leq 2m-1$. To this end, to each subproblem $(i,j)$ we associate the (signed) slope $i-j$. Observe that there are only $2m-1$ possible slope values, namely $\{0, \pm1, \ldots, \pm (m-1)\}$. Now notice that if two distinct subproblems $(i,j)$ and $(i',j')$ have the same associated slope (i.e., $i-j = i'-j'$), then these subproblems are compatible (see \Cref{fig:extremal}). Therefore, for each slope, there can be at most one subproblem in $R$ since the subproblems in $R$ are $\s$-critical. Therefore, $r \leq 2m-1$.
\end{proof}

\begin{proof}[Proof of \Cref{lem:upper_bound_adv_fknc}.]
Label the vertices in the two parts of the complete bipartite graph $K_{m,m}$ by $u_i$ and $v_i$, respectively, for $i\in[m]$.

To any $k$-common subsequence $S$ of some $x$ and $y$, we assign a subgraph $G_S$ of $K_{m,m}$ that is obtained by removing all edges $(i,j)$ of $K_{m,m}$ for which $S$ does not contain a collision of $x^{(i)}$ and $y^{(j)}$. Then the graph $G_S$ has at least 2 edges if and only if $S$ is a composite $k$-CS; otherwise $G_S$ has a single edge only. We assign positive integer weights to the edges of $G_S$ by letting the weight of the edge $(u_i,v_j)$ be the number of collisions between the substrings $x^{(i)}$ and $y^{(j)}$ that are contained in $S$. Therefore, the total weight of $G_S$ is $k$ for any $k$-CS considered.

\begin{figure}[ht!]
\centering
\begin{subfigure}{0.9\textwidth}
    \centering
    \includesvg[scale=0.55]{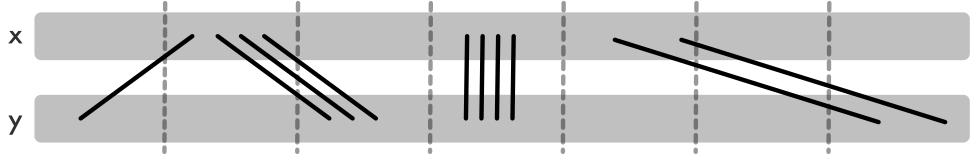}
     \caption{A schematic diagram illustrating a $10$-CS of some strings $x$ and $y$. The dotted lines break up the strings $x$ and $y$ into $m=7$ substrings of equal size.}
    \label{subfig:schematic_diagram_of_kcs}
\end{subfigure}\\
\vspace{1em}
\begin{subfigure}{0.9\textwidth}
    \centering
    \includesvg[scale=1.6]{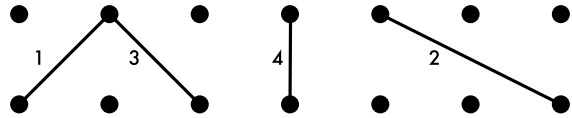}
    \caption{A diagram showing the graph $G_S$ associated with the $10$-CS seen in the above diagram, assuming a splitting factor of $m=7$.}
    \label{subfig:graph_rep_of_kcs}
\end{subfigure}
\caption{Diagrams illustrating a $10$-CS $S$ of some pair of strings, as well as the associated subgraph $G_S$ of the complete bipartite graph $K_{m,m}$. Here, the splitting factor is $m = 7$.}
\label{fig:kcs_graph_representation}
\end{figure}

Placing the vertices of $G_S$ on a square grid (say, $u_i$ on $(i,1)$ and $v_j$ on $(j,0)$ of the Euclidean plane) and drawing the edges as the line segments between the corresponding vertices, we see that by the definition of a $k$-CS, the edges of $G_S$ meet only at vertices in this drawing (i.e., we have a planar embedding, see \Cref{subfig:graph_rep_of_kcs}). This means that we can make the edge set of $G_S$ a totally ordered set, by letting $(u_{i},v_{j}) \preceq (u_{i'},v_{j'})$ iff $i \leq i'$ and $j \leq j'$. Therefore, a unique minimal (i.e., leftmost) edge exists. Let $(u_{i^{*}},v_{j^{*}})$ be this edge. We claim that at least one of $u_{i^*}$ and $v_{j^*}$ has degree 1. To see this, we can assume without loss of generality that $i^* \leq j^*$, as an analogous argument works for the other case. Suppose now that $u_{i^*}$ has a neighbor other than $v_{j^*}$. Denoting this other neighbor by $v_{j'}$, the minimality of the edge $(u_{i^*},v_{j^*})$ implies $j^{*} < j'$. Consider the vertex $v_{j^*}$, and suppose that $v_{j^*}$ has another neighbor $u_{i'}$ for some $i'$ with $i^* < i'$. Observe that in this case the edges $(u_{i^*}, v_{j'})$ and $(u_{i'}, v_{j^*})$ would be crossing in our drawing, contradicting our observation that there exists a planar embedding of $G_S$ (see \Cref{subfig:graph_rep_of_kcs} for an illustration). 

To summarize, to any $k$-CS $S$ we associate an edge-weighted undirected subgraph $G_S$ of $K_{m,m}$ that satisfies the following properties:
\begin{enumerate}[(i)]
    \item \label[property]{itm:weight} The weight of the edge $(u_i,v_j)$ in $G_S$ is the number of collisions of $x^{(i)}$ and $y^{(j)}$ that $S$ contains.
    \item \label[property]{itm:2edges} The graph $G_S$ has at least 2 edges if and only if $S$ is a composite $k$-CS.
    \item \label[property]{itm:uniqueleft} There is a unique leftmost edge $(u_{i^*},v_{j^*})$ of $G_S$. 
    \item \label[property]{itm:degreeone} The leftmost edge of $G_S$ is such that at least one of the vertices it is incident on has degree 1. 
\end{enumerate}

Let $f^{(i, j)}_{k,n}(x,y)$ denote the Boolean function that indicates if $x$ and $y$ have a composite $k$-CS for which $(i,j)$ is the leftmost edge of the associated graph $G_S$. By \Cref{itm:uniqueleft}, 
\begin{align}
    \LCSC{k}{n}(x,y) = \bigvee_{i,j\in[m]} f^{(i,j)}_{k,n}(x,y).
\end{align}
Therefore, an algorithm that decides $f^{(i,j)}_{k,n}(x,y)$ for every choice of $i,j \in [m]$ can decide $\LCSC{k}{n}(x,y)$. To conclude, we give a quantum algorithm that decides $f_{k,n}^{(i,j)}$ using $O(a_{k-1}(n) \log{n})$ queries, for any $i, j \in [m]$. As there are only $O(1)$ choices of $i,j\in[m]$ to consider (since $m$ is constant), an algorithm that iterates over each of these cases has the same asymptotic complexity.

Let $S$ be a witness of $f_{k,n}^{(i,j)}(x,y) = 1$ for some $i, j \in [m]$ and consider the associated graph $G_S$.
By \Cref{itm:degreeone} above, at least one of the vertices $u_{i}$ and $v_{j}$ has degree $1$ in $G_S$. Suppose that $v_j$ has degree $1$ (the other case can be treated analogously). Then by the properties of $G_S$ we observed earlier, we see that $S$ has the following structure (see \Cref{subfig:schematic_diagram_of_kcs} for an example):
\begin{enumerate}
    \item By \Cref{itm:2edges}, $G_S$ has at least 2 edges. Therefore by \Cref{itm:weight}, the weight  of $(u_i, v_j)$ is some $k_1 < k$.  
    Therefore, $S$ contains $k_1 \in [k-1]$ collisions of $x^{(i)}[1\twodots p]$ and $y^{(j)}$, for some index $p \in [n/m]$. 
    \item By \Cref{itm:degreeone}, and our assumption that $v_{j}$ has degree 1, $S$ does not contain any collisions between $y^{(j)}$ and any substring $x^{(i')}$ other than $x^{(i)}$. Therefore, none of the remaining $k-k_1$ collisions of $S$ involve $y^{(j)}$. In other words, the remaining $k-k_1$ collisions of $S$ are collisions of $x((\frac{i-1}{m}n + p)\twodots n]$ and $y(\frac{j}{m}n\twodots n]$.
\end{enumerate}
The following quantum query algorithm decides if a witness $S$ of the above structure exists, and returns false otherwise, using $O(a_{k-1}(n) \log{n})$ queries:
\begin{enumerate}
    \item By combining a binary search over the indices of $x^{(i)}$ with a query-optimal quantum algorithm for $k_1$-CS, identify the minimal $p \in [n/m]$ for which $x^{(i)}[1\twodots p]$ and $y^{(j)}$ have a $k_1$-CS. If no such $p$ has been found, return false. This can be done using $O(a_{k_1}(n) \log{n})$ queries.\footnote{Since the quantum algorithm for $k_1$-CS can be erroneous with bounded probability, one might naively expect an additional $\log\log(n)$ factor due to the need for error reduction since the algorithm for $k_1$-CS is called $O(\log(n))$ times in the binary search. However, \cite{noisybinary_feige_1994} shows that the $\log\log(n)$ factor can be removed by performing a variant of binary search.}
    \item Using a query-optimal algorithm for $(k-k_1)$-CS, decide if $x((\frac{i-1}{m}n+p)\twodots n]$ and $y(\frac{j}{m}n\twodots n]$ have a $(k-k_1)$-CS, and return the output. This can be done using $O(a_{k-k_1}(n))$ queries.
\end{enumerate}
As $1 \leq k_1\leq k-1$, and $a_l(\cdot)$ is a non-decreasing\footnote{Recall that $(l-1)$-CS reduces to $l$-CS by prefixing a common element to the input strings of $l$-CS.} function of $l$, the total query complexity of the above algorithm is
\begin{align}
    \oh{a_{k_1}(n)\log{n} + a_{k-k_1}(n)} &\leq \oh{a_{k-1}(n) \log{n}}.
\end{align}

To decide $f_{k,n}^{(i,j)}$, we run the above algorithm for every possible choice of $k_1 \in [k-1]$. Since we arbitrarily assumed it was $v_j$ that had degree 1, we run an analogous algorithm assuming that $u_i$ has degree 1. Overall this does not affect the asymptotic query complexity of deciding $f_{k,n}^{(i,j)}$, which is
\begin{equation}
    Q(f_{k,n}^{(i,j)}) \leq 2(k-1) \cdot O(a_{k-1}(n) \log{n}) \leq O(a_{k-1}(n) \log{n}).
\end{equation}
The lemma follows.
\end{proof}

%=================================================

\section{Open problems}
\label{sec:future}

We conclude by describing some open problems related to possible extensions of this work.

\begin{itemize}
    \item Can we apply quantum divide and conquer to search problems? For example, is there a quantum divide-and-conquer algorithm for minimum finding?
    \item Can we find applications of quantum divide and conquer using generalizations of the two strategies presented in \Cref{sec:framework} (i.e., using combining functions other than $\ANDOR$ formulas and $\SWITCH$, such as those discussed at the end of \Cref{sec:techniques})?
    \item Can we obtain super-quadratic speedups using quantum divide and conquer?
\end{itemize}

%=================================================

\section*{Acknowledgements}

We thank Saeed Seddighin for telling us about the problem of determining the quantum query complexity of $k$-Increasing Subsequence, which became the starting point for this work.

We acknowledge support from the Army Research Office (grant W911NF-20-1-0015); the Department of Energy, Office of Science, Office of Advanced Scientific Computing Research, Accelerated Research in Quantum Computing program; and the National Science Foundation (grants CCF-1813814 and DMR-1747426).

%=================================================

\bibliography{references}
\bibliographystyle{alphaurl}

\appendix
%%%%%%%%%%%%%%%%%%%%%%%%%%%%%%%%%%%%%%%%%%%%%%%%%%
\section{Appendix}\label{sec:appendix}

% %=================================================

\subsection{Adversary composition lemmas}\label{app:adv-composition}

In this appendix, we prove two adversary compositions lemmas: \Cref{lem:and-or} in \Cref{app:and-or} and \Cref{lem:adv-switch} in \Cref{app:adv-switch}. Before proving these lemmas, we first introduce a quantity closely related to $\adv$, the filtered $\gammatwo$ norm, and state some facts about $\adv$ and $\gammatwo$ that are used in our proofs.

For finite sets $D_1$ and $D_2$, we write $A\in \mathbb{C}^{D_1\times D_2}$ to mean that $A$ is a complex $|D_1| \times |D_2|$ matrix with rows indexed by $D_1$ and columns indexed by $D_2$. 
\begin{Definition}[Filtered $\gammatwo$ norm] \label{def:gamma}
Let $D_1$ and $D_2$ be finite sets and $n\in \mathbb{N}$. Let $A \in \mathbb{C}^{D_1\times D_2}$ and $Z = \{Z_i \in \mathbb{C}^{D_1 \times D_2} \mid i \in [n]\}$. Define $\gammatwo (A | Z)$ by
\begin{alignat}{2}
    &\gammatwo(A | Z)\coloneqq  &&\min_{\substack{d\in \mathbb{N},\\[1pt]\ket{u_{xj}},\ket{v_{yj}}\in \mathbb{C}^d}}  \; \max \left\{ \max_{x\in D_1} \sum_{j=1}^n \norm{\ket{u_{xj}}}^2, \max_{y\in D_2} \sum_{j=1}^n \norm{\ket{v_{yj}}}^2 \right\} \\
    &\ \textup{subject to:} &&\quad \forall x \in D_1, y \in D_2, \quad \sum_{j:x_j \neq y_j} (Z_j)_{xy}  \braket{u_{xj}}{v_{xj}} = A_{xy}. \label{eq:gamma2_last}
\end{alignat}  
\end{Definition}

We can view $\adv(\cdot)$ as a special case of the filtered $\gammatwo$ norm. Let $f\colon D \to E$ be a function, where $D\subseteq \mathcal{D}^n$ and $E$ are finite sets. Let $F$ be the Gram matrix of $f$. Let $\Delta \coloneqq \{\Delta_1, \ldots, \Delta_n\}$ be a set of $|D|\times |D|$ matrices defined such that $(\Delta_j)_{x, y \in D} = 1 - \delta_{x_j,y_j}$ for all $j\in [n]$. Then it can be shown that $\adv(f) = \gammatwo(J - F | \Delta')$, where $\Delta' \coloneqq \{\Delta_j \circ (J - F) \mid j \in [n]\}$ and $J$ is the all-ones matrix of the same size as $F$. 

We now state five facts about $\adv$ and $\gammatwo$ that are used in our proofs. The symbols $\Delta$ and $J$ refer to the set of matrices defined above (appropriately sized) and the all-ones matrix (appropriately sized), respectively.

\begin{Fact}[{\cite[Theorem 6.2]{span_reichardt_2009}}]\label{fact:adv-optimization} Let $f\colon D \to \{0, 1\}$ be a Boolean function, where $D \subseteq \mathcal{D}^n$ is a finite set. Then
\begin{alignat}{2}
    &\adv(f) = &&\min_{\substack{d\in \mathbb{N},\\[1pt]\ket{u_{xj}}\in \mathbb{C}^d}} \sqrt{A_0\cdot A_1}
    \\
    &\ \textup{subject to:} &&\quad
    \forall b\in \{0,1\}, \quad A_b = \max_{x\in f^{-1}(b)} \sum_{j=1}^n \norm{\ket{u_{xj}}}^2.
    \\
    & &&\quad \forall x,y\in D \textup{ with } f(x)\neq f(y), \quad \sum_{j \in [n]: \, x_j \neq y_j}  \braket{u_{xj}}{u_{yj}} = 1.
\end{alignat}
\end{Fact}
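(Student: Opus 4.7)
The plan is to deduce this characterization from the identity $\adv(f) = \gammatwo(J - F \mid \Delta')$ (where $\Delta' = \{\Delta_j \circ (J-F)\}_{j\in[n]}$) by exploiting the bipartite block structure of $J - F$ that arises for Boolean $f$. First I would unfold \Cref{def:gamma}: we need families of vectors $\{\ket{u_{xj}}\}_{x\in D, j\in[n]}$ and $\{\ket{v_{yj}}\}_{y\in D, j\in[n]}$ such that $\sum_{j: x_j\neq y_j} \braket{u_{xj}}{v_{yj}} = (J-F)_{xy}$ for all $x,y$, minimizing $\max\bigl\{\max_x \sum_j \norm{\ket{u_{xj}}}^2,\, \max_y \sum_j \norm{\ket{v_{yj}}}^2\bigr\}$. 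Since $(J-F)_{xy} = 1$ exactly when $f(x)\neq f(y)$ and $0$ otherwise, the constraint is $1$ on the ``cross'' block and $0$ on the ``diagonal'' blocks.

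Second, I would exploit the Boolean structure to collapse the two vector families into one. For any feasible solution we may replace $\ket{v_{yj}}$ by $\ket{u_{yj}}$ for $y\in f^{-1}(1)$ and symmetrically use $\ket{u_{xj}}$ for $x\in f^{-1}(0)$, without loss of generality: for pairs with $f(x)=f(y)$ the constraint is vacuous (the right-hand side is zero), so we are free to choose the ``same-class'' inner products arbitrarily (in particular, to set all $\ket{v_{xj}} = \ket{u_{xj}}$ for $x \in f^{-1}(0)$, and similarly on the other side). This reduces the feasibility condition to
\begin{equation*}
\forall x,y\in D \text{ with } f(x)\neq f(y), \qquad \sum_{j: x_j\neq y_j} \braket{u_{xj}}{u_{yj}} = 1,
\end{equation*}
which is exactly the constraint in the stated optimization.

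Third, I would convert the $\max\{A_0, A_1\}$ objective (which is what the $\gammatwo$ definition gives after the above identification, namely $A_b = \max_{x\in f^{-1}(b)} \sum_j \norm{\ket{u_{xj}}}^2$) into $\sqrt{A_0 \cdot A_1}$ by a scaling argument. Given any feasible $\{\ket{u_{xj}}\}$, for any $\lambda>0$ the rescaled solution $\ket{u_{xj}} \mapsto \lambda \ket{u_{xj}}$ for $x\in f^{-1}(0)$ and $\ket{u_{xj}} \mapsto \lambda^{-1}\ket{u_{xj}}$ for $x\in f^{-1}(1)$ preserves every inner product $\braket{u_{xj}}{u_{yj}}$ between $f^{-1}(0)$ and $f^{-1}(1)$ (the only constrained ones), so it remains feasible. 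The new objective value is $\max\{\lambda^2 A_0, \lambda^{-2} A_1\}$, minimized at $\lambda^2 = \sqrt{A_1/A_0}$ with value $\sqrt{A_0 A_1}$. This shows the $\gammatwo$ minimum equals the minimum of $\sqrt{A_0 A_1}$ over the described feasible set, and conversely any solution achieving $\sqrt{A_0 A_1}$ can be rescaled to one whose $\max\{A_0,A_1\}$ equals $\sqrt{A_0 A_1}$.

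The main technical obstacle is justifying the reduction to a single family $\ket{u_{xj}}$: one has to check that restricting $\ket{u_{xj}}$ (resp.\ $\ket{v_{yj}}$) to matter only on $f^{-1}(0)$ (resp.\ $f^{-1}(1)$) does not lose optimality, and that the symmetrization combining the two families is consistent. Once the equality $\adv(f) = \gammatwo(J-F \mid \Delta')$ is invoked (which is a restatement of the known SDP-duality connection between the adversary bound and the filtered $\gammatwo$ norm established in \cite{stateconversion_lee_2011}), the remaining steps are the bipartite decoupling and the one-line scaling argument above.
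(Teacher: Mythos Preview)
The paper does not prove this statement; it is stated as a Fact with a citation to \cite[Theorem~6.2]{span_reichardt_2009} and then used as a black box in the proof of \Cref{lem:and-or}. So there is no proof in the paper to compare your attempt against.

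That said, your route via the identity $\adv(f) = \gammatwo(J-F \mid \Delta')$ is the natural one, and the scaling argument in your third step converting $\max\{A_0,A_1\}$ into $\sqrt{A_0 A_1}$ is exactly right. Your second step, however, is phrased in a way that does not quite work as written: you say ``replace $\ket{v_{yj}}$ by $\ket{u_{yj}}$ for $y\in f^{-1}(1)$,'' but $\ket{u_{yj}}$ for $y\in f^{-1}(1)$ is itself a constrained vector (it is coupled to $\ket{v_{zj}}$ for $z\in f^{-1}(0)$), so you cannot simply overwrite one family with the other and appeal to vacuous same-class constraints. The correct observation is that the constraint set decouples into two \emph{independent} systems---one pairing $\{\ket{u_{xj}}\}_{f(x)=0}$ with $\{\ket{v_{yj}}\}_{f(y)=1}$, the other pairing $\{\ket{u_{xj}}\}_{f(x)=1}$ with $\{\ket{v_{yj}}\}_{f(y)=0}$---and because $J-F$ and the $\Delta_j$ are real symmetric, the two systems are conjugate-transposes of one another and have the same optimum. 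Taking either system and relabelling its two halves as a single family yields the formulation in the Fact. You already flag this reduction as ``the main technical obstacle''; your sketch just needs the decoupling and symmetry stated explicitly rather than the overwrite described.
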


\begin{Fact}[{\cite[Theorem 3.4]{stateconversion_lee_2011}}]\label{fact:gamma-bool}
Let $f\colon D \to E$ be a function, where $D\subseteq \mathcal{D}^n$ and $E$ are finite sets. Let $F$ be the Gram matrix of $f$. Then 
\begin{equation}
\adv(f) \leq \gammatwo(J - F | \Delta) \leq 2(1 - 1/|E|) \adv(f).
\end{equation}
\end{Fact}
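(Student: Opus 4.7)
The plan is to prove the two inequalities separately, using the identity $\adv(f) = \gammatwo(J-F|\Delta')$ (with $\Delta'_j = \Delta_j \circ (J-F)$) noted in the paragraph preceding the statement.

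For the lower bound $\adv(f) \leq \gammatwo(J-F|\Delta)$, the observation is that $\gammatwo(J-F|\Delta)$ is obtained from $\gammatwo(J-F|\Delta')$ by strengthening the constraints while keeping the objective unchanged. Indeed, $(\Delta'_j)_{xy} = (\Delta_j)_{xy}$ exactly when $f(x) \neq f(y)$, so for such pairs both programs impose the same equation $\sum_{j: x_j \neq y_j} \braket{u_{xj}}{v_{yj}} = 1$. For pairs with $f(x) = f(y)$, however, $\gammatwo(J-F|\Delta)$ additionally imposes $\sum_{j: x_j \neq y_j} \braket{u_{xj}}{v_{yj}} = 0$, whereas $\gammatwo(J-F|\Delta')$ imposes no constraint at all (since $(\Delta'_j)_{xy} = 0$ in this case, making both sides of \Cref{eq:gamma2_last} identically zero). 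Thus every feasible solution of $\gammatwo(J-F|\Delta)$ is also feasible for $\gammatwo(J-F|\Delta')$ with the same objective value, yielding $\adv(f) = \gammatwo(J-F|\Delta') \leq \gammatwo(J-F|\Delta)$.

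For the upper bound $\gammatwo(J-F|\Delta) \leq 2(1-1/|E|)\adv(f)$, I would start from an optimal $\gammatwo(J-F|\Delta') = \adv(f)$ solution, giving vectors $\ket{u_{xj}}, \ket{v_{yj}}$ with $\sum_{j: x_j \neq y_j} \braket{u_{xj}}{v_{yj}} = 1$ whenever $f(x) \neq f(y)$ and $\sum_j \|\ket{u_{xj}}\|^2, \sum_j \|\ket{v_{yj}}\|^2 \leq \adv(f)$. The task is to modify these vectors so that the inner-product sum \emph{also} vanishes on pairs with $f(x) = f(y)$, at a modest multiplicative cost. The main tool is a regular simplex: let $\{\ket{\hat\phi_s}\}_{s \in E} \subset \mathbb{C}^{|E|-1}$ be unit vectors with $\braket{\hat\phi_s}{\hat\phi_{s'}} = -1/(|E|-1)$ for $s \neq s'$, and let $\ket{0}$ be an auxiliary unit vector in an extra dimension orthogonal to all $\ket{\hat\phi_s}$. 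For a real $a > 0$ to be determined, I would set
\[
    \ket{u'_{xj}} = a \bigl(i \ket{u_{xj}} \otimes \ket{\hat\phi_{f(x)}} + \ket{u_{xj}} \otimes \ket{0}\bigr), \quad \ket{v'_{yj}} = a \bigl(-i \ket{v_{yj}} \otimes \ket{\hat\phi_{f(y)}} + \ket{v_{yj}} \otimes \ket{0}\bigr).
\]
The $\pm i$ asymmetry ensures that the cross terms cancel and $\braket{u'_{xj}}{v'_{yj}} = a^2 \braket{u_{xj}}{v_{yj}}\bigl(1 - \braket{\hat\phi_{f(x)}}{\hat\phi_{f(y)}}\bigr)$, which automatically vanishes for $f(x) = f(y)$ and equals $a^2 \tfrac{|E|}{|E|-1}\braket{u_{xj}}{v_{yj}}$ for $f(x) \neq f(y)$. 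Choosing $a^2 = (|E|-1)/|E|$ then makes both constraint families of $\gammatwo(J-F|\Delta)$ exact. Since $\ket{\hat\phi_{f(x)}} \perp \ket{0}$, we have $\|\ket{u'_{xj}}\|^2 = 2 a^2 \|\ket{u_{xj}}\|^2$, so the new objective is at most $2a^2 \adv(f) = 2(1-1/|E|)\adv(f)$, as required.

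The main subtlety is designing the construction so that both constraint families are realised simultaneously with a real scalar $a$. Naive attempts, such as tensoring by $\ket{\hat\phi_{f(x)}}$ alone or taking a purely real linear combination with $\ket{0}$, force a sign mismatch: one can secure either the diagonal cancellation or the correct off-diagonal normalisation, but not both without $a^2 < 0$. Introducing the buffer $\ket{0}$ together with the complex phases $\pm i$ on the simplex summands flips precisely this sign, producing the factor $1 - \braket{\hat\phi_{f(x)}}{\hat\phi_{f(y)}}$ needed to resolve both constraint families at once; the $2(1-1/|E|)$ loss then falls out directly from the resulting value of $2a^2$.
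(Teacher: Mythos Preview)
The paper does not itself prove this statement; it is recorded as a ``Fact'' with a citation to \cite[Theorem 3.4]{stateconversion_lee_2011}, so there is no in-paper proof to compare against. Your argument is correct and is essentially the standard one from the cited reference: the lower bound follows because passing from $\Delta'$ to $\Delta$ only adds the constraints $\sum_{j:x_j\neq y_j}\braket{u_{xj}}{v_{yj}}=0$ on pairs with $f(x)=f(y)$, and the upper bound follows from the regular-simplex tensoring trick, where your use of the auxiliary direction $\ket{0}$ together with the $\pm i$ phases is exactly what produces the factor $1-\braket{\hat\phi_{f(x)}}{\hat\phi_{f(y)}}$ and hence the constant $2(1-1/|E|)$.
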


\begin{Fact}[Triangle inequality~{\cite[Lemma A.2, item 4]{stateconversion_lee_2011}}]
\label{fact:gamma-triangle}
Let $D_1, D_2$ be finite sets and $n\in \mathbb{N}$. Let $A, B \in \mathbb{C}^{D_1 \times D_2}$ and $Y_j \in \mathbb{C}^{D_1 \times D_2}$ for all $j\in [n]$. Let $Z \coloneqq \{Z_j \mid j\in [n]\}$. Then
\begin{equation}
    \gammatwo(A + B | Z) \leq \gammatwo(A | Z) + \gammatwo(B | Z).
\end{equation}
\end{Fact}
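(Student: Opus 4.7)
The plan is to prove the triangle inequality by directly constructing a feasible decomposition for $A+B$ out of near-optimal decompositions for $A$ and for $B$ individually, using the standard ``direct sum'' (concatenation) trick. Fix $\varepsilon > 0$ and pick dimensions $d_A, d_B \in \mathbb{N}$ together with vectors $\{\ket{u^A_{xj}}, \ket{v^A_{yj}} \in \mathbb{C}^{d_A}\}$ and $\{\ket{u^B_{xj}}, \ket{v^B_{yj}} \in \mathbb{C}^{d_B}\}$ that are feasible for $\gammatwo(A \mid Z)$ and $\gammatwo(B \mid Z)$ respectively, and that attain the objective up to $\varepsilon$. I would then define
\begin{equation}
    \ket{u_{xj}} \coloneqq \ket{u^A_{xj}} \oplus \ket{u^B_{xj}}, \qquad \ket{v_{yj}} \coloneqq \ket{v^A_{yj}} \oplus \ket{v^B_{yj}}
\end{equation}
in $\mathbb{C}^{d_A + d_B}$, so that inner products split as $\braket{u_{xj}}{v_{yj}} = \braket{u^A_{xj}}{v^A_{yj}} + \braket{u^B_{xj}}{v^B_{yj}}$ and squared norms split as $\|\ket{u_{xj}}\|^2 = \|\ket{u^A_{xj}}\|^2 + \|\ket{u^B_{xj}}\|^2$, and analogously for the $v$ vectors.

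Next I would verify that these concatenated vectors are feasible for the optimization defining $\gammatwo(A+B \mid Z)$. For every $x \in D_1$, $y\in D_2$, summing the inner-product identity over $j$ with $x_j \neq y_j$ and multiplying by $(Z_j)_{xy}$ yields
\begin{equation}
    \sum_{j:\, x_j \neq y_j} (Z_j)_{xy} \braket{u_{xj}}{v_{yj}} \;=\; A_{xy} + B_{xy},
\end{equation}
so the constraint in \Cref{eq:gamma2_last} is met. For the objective, I use the elementary inequality $\max_x (f(x) + g(x)) \leq \max_x f(x) + \max_x g(x)$ to obtain
\begin{equation}
    \max_{x \in D_1} \sum_{j=1}^n \|\ket{u_{xj}}\|^2 \;\leq\; \max_{x} \sum_j \|\ket{u^A_{xj}}\|^2 + \max_{x} \sum_j \|\ket{u^B_{xj}}\|^2 \;\leq\; \gammatwo(A \mid Z) + \gammatwo(B \mid Z) + 2\varepsilon,
\end{equation}
and the identical bound for the $v$ side. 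Therefore the maximum of the two row/column sums is at most $\gammatwo(A \mid Z) + \gammatwo(B \mid Z) + 2\varepsilon$, which upper bounds $\gammatwo(A+B \mid Z)$. Letting $\varepsilon \to 0$ gives the claim.

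There is essentially no obstacle: the only thing to be careful about is that the minimization in \Cref{def:gamma} allows arbitrary dimension $d$, so concatenating the two optimal solutions (which may live in different-dimensional spaces) is legitimate. One should also note that the $(Z_j)_{xy}$ filter is the \emph{same} on both sides, so pulling it out of the sum is immediate and no scaling of the vectors is needed. If one prefers to avoid the $\varepsilon$ argument, one can replace ``near-optimal'' by ``optimal'' because the feasible set is closed under scaling and the infimum is attained whenever it is finite; either way the proof is a one-line direct-sum construction followed by the scalar inequality $\max(f+g) \leq \max f + \max g$.
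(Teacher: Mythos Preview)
Your proof is correct; the direct-sum construction you describe is exactly the standard argument for the triangle inequality of the filtered $\gammatwo$ norm. Note, however, that the paper does not give its own proof of this statement at all: \Cref{fact:gamma-triangle} is merely stated and attributed to \cite[Lemma A.2, item 4]{stateconversion_lee_2011}, so there is no in-paper proof to compare against. Your argument is essentially the one found in that reference.
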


\begin{Fact}[{\cite[Lemma A.2, item 10]{stateconversion_lee_2011}}]\label{fact:gamma-hadamard}
Let $D_1, D_2$ be finite sets and $n\in \mathbb{N}$. Let $A, B \in \mathbb{C}^{D_1\times D_2}$ and $Y_j, Z_j \in \mathbb{C}^{D_1\times D_2}$ for all $j\in [n]$. Let $Y \coloneqq \{Y_j \mid j\in [n]\}$ and $Z \coloneqq \{Z_j \mid j \in [n]\}$. Then
\begin{equation}
\gammatwo (A \circ B \mid \{Z_j \circ B \mid j\in [n]\}) \leq \gammatwo(A \mid Z) \leq \gammatwo(A \mid \{Z_j \circ B \mid j \in [n]\}) \cdot \gammatwo(B).
\end{equation}
\end{Fact}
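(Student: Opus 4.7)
The plan is to prove the two inequalities separately, both directly from the definition of the filtered $\gammatwo$ norm given in \Cref{def:gamma}. The left-hand inequality is the easier direction: I would start from any feasible pair of vector families $\{\ket{u_{xj}}\}, \{\ket{v_{yj}}\}$ achieving $\gammatwo(A \mid Z)$ and show these same vectors certify the bound on the left. Concretely, the feasibility condition reads $A_{xy} = \sum_{j:x_j \neq y_j} (Z_j)_{xy} \braket{u_{xj}}{v_{yj}}$, and multiplying both sides by the scalar $B_{xy}$ gives $(A\circ B)_{xy} = \sum_{j:x_j \neq y_j} (Z_j \circ B)_{xy} \braket{u_{xj}}{v_{yj}}$, which is exactly the feasibility condition for the program defining $\gammatwo(A\circ B \mid \{Z_j \circ B\})$. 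Since the objective value (which depends only on the norms of the $\ket{u_{xj}},\ket{v_{yj}}$) is unchanged, passing to the infimum gives the first inequality.

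For the right-hand inequality I would use a tensor-product construction. Let $\{\ket{u_{xj}}\}, \{\ket{v_{yj}}\}$ achieve $\gammatwo(A \mid \{Z_j \circ B\})$ and let $\{\ket{p_x}\}, \{\ket{q_y}\}$ be a Gram decomposition of $B$, i.e.\ $B_{xy} = \braket{p_x}{q_y}$, normalized (by rescaling one family by a scalar and the other by its inverse) so that $\max_x \norm{\ket{p_x}}^2 = \max_y \norm{\ket{q_y}}^2 = \gammatwo(B)$. Define
\begin{equation}
    \ket{\tilde u_{xj}} \coloneqq \ket{u_{xj}} \otimes \ket{p_x}, \qquad \ket{\tilde v_{yj}} \coloneqq \ket{v_{yj}} \otimes \ket{q_y}.
\end{equation}
Then $\braket{\tilde u_{xj}}{\tilde v_{yj}} = \braket{u_{xj}}{v_{yj}} \cdot B_{xy}$, so
\begin{equation}
    \sum_{j:x_j \neq y_j} (Z_j)_{xy} \braket{\tilde u_{xj}}{\tilde v_{yj}} = \sum_{j:x_j \neq y_j} (Z_j \circ B)_{xy} \braket{u_{xj}}{v_{yj}} = A_{xy},
\end{equation}
so $\{\ket{\tilde u_{xj}}\}, \{\ket{\tilde v_{yj}}\}$ are feasible for the program defining $\gammatwo(A \mid Z)$. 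The objective becomes
\begin{equation}
    \sum_{j=1}^n \norm{\ket{\tilde u_{xj}}}^2 = \norm{\ket{p_x}}^2 \sum_{j=1}^n \norm{\ket{u_{xj}}}^2 \leq \gammatwo(B) \cdot \gammatwo(A \mid \{Z_j \circ B\}),
\end{equation}
and symmetrically for the $\ket{\tilde v_{yj}}$, so taking the maximum and then the infimum yields the second inequality.

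The argument is essentially routine once the tensor-product idea is in hand; the main thing to be careful about is the normalization of the Gram decomposition of $B$. Without the equalization $\max_x \norm{\ket{p_x}}^2 = \max_y \norm{\ket{q_y}}^2$, one only obtains a bound of the form $\max\{\norm{\ket{p_x}}^2,\norm{\ket{q_y}}^2\}\cdot \gammatwo(A \mid \{Z_j\circ B\})$, which does not immediately match $\gammatwo(B)$ unless the two factors of the Gram decomposition are balanced. A secondary subtlety is verifying that restricting the sum to $\{j : x_j\neq y_j\}$ in the definition of $\gammatwo$ does not create a mismatch between the two programs; but this is automatic because the factor $B_{xy}$ depends only on $(x,y)$ and not on $j$, so the $j$-support of $(Z_j\circ B)_{xy}$ agrees with that of $(Z_j)_{xy}$ for every fixed $(x,y)$.
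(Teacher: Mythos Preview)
The paper does not give its own proof of this statement: it is quoted as a Fact with a citation to \cite[Lemma A.2, item 10]{stateconversion_lee_2011}, so there is no in-paper argument to compare against. Your proposal is correct and is, in fact, essentially the standard proof of this property of the filtered $\gammatwo$ norm: the first inequality by reusing the same witness vectors after multiplying the constraint through by $B_{xy}$, and the second by tensoring with a balanced Gram factorization of $B$. Your remarks on the two subtleties (balancing the factorization so that $\max_x\|\ket{p_x}\|^2=\max_y\|\ket{q_y}\|^2=\gammatwo(B)$, and the $j$-independence of $B_{xy}$ ensuring the summation index set is unchanged) are exactly the points one needs to check.
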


\begin{Fact}[{Direct-sum property \cite[Lemma A.2, item 12]{stateconversion_lee_2011}}]\label{fact:gamma-sum}
Let $C_1, C_2, D_1, D_2$ be finite sets and $n\in \mathbb{N}$. Let $A \in \mathbb{C}^{C_1\times C_2}$ and $Y_j \in \mathbb{C}^{C_1\times C_2}$ for all $j\in [n]$. Let $B \in \mathbb{C}^{D_1\times D_2}$ and $Z_j \in \mathbb{C}^{D_1\times D_2}$ for all $j\in [n]$. Let $Y \coloneqq \{Y_j \mid j\in [n]\}$ and $Z \coloneqq \{Z_j \mid j \in [n]\}$. Then
\begin{equation}
\gammatwo(A \oplus B \mid \{Y_j \oplus Z_j \mid j \in [n]\}) = \max \{\gammatwo(A | Y), \gammatwo (B | Z)\}.
\end{equation}
\end{Fact}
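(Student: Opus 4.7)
The plan is to prove the equality by establishing the two inequalities separately, working directly from the variational \Cref{def:gamma}. Throughout, I view the row indices $C_1 \sqcup D_1$ and column indices $C_2 \sqcup D_2$ of $A \oplus B$ as disjoint unions of the original index sets, so that $A \oplus B$ and each $Y_j \oplus Z_j$ have the prescribed blocks on the diagonal and zeros off the diagonal.

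For the upper bound, $\gammatwo(A \oplus B \mid \{Y_j \oplus Z_j\}) \leq \max\{\gammatwo(A \mid Y), \gammatwo(B \mid Z)\}$, I would take near-optimal feasible families $(\ket{u^A_{xj}}, \ket{v^A_{yj}})$ in $\mathbb{C}^{d_A}$ for $\gammatwo(A \mid Y)$ and $(\ket{u^B_{xj}}, \ket{v^B_{yj}})$ in $\mathbb{C}^{d_B}$ for $\gammatwo(B \mid Z)$, and assemble a single family in $\mathbb{C}^{d_A + d_B}$ via $\ket{u_{xj}} \coloneqq \ket{u^A_{xj}} \oplus 0$ for $x \in C_1$ and $\ket{u_{xj}} \coloneqq 0 \oplus \ket{u^B_{xj}}$ for $x \in D_1$, and analogously for the $\ket{v_{yj}}$. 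The constraint in \Cref{eq:gamma2_last} on the diagonal blocks is inherited from the two original constraints, while on the off-diagonal blocks both $(Y_j \oplus Z_j)_{xy}$ and $\braket{u_{xj}}{v_{yj}}$ vanish by construction, matching the corresponding zero entry of $A \oplus B$. Since each $\ket{u_{xj}}$ carries only an $A$-part or only a $B$-part, the objective $\max_x \sum_j \|\ket{u_{xj}}\|^2$ factors as the maximum of the two separate objectives, and similarly for the $v$'s, giving the desired bound.

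For the lower bound, I would start from a near-optimal feasible family for $\gammatwo(A \oplus B \mid \{Y_j \oplus Z_j\})$ and restrict the index sets: keeping only $x \in C_1, y \in C_2$ yields a feasible family for $\gammatwo(A \mid Y)$, since on that block $(Y_j \oplus Z_j)_{xy} = (Y_j)_{xy}$ and $(A \oplus B)_{xy} = A_{xy}$, so \Cref{eq:gamma2_last} restricts cleanly. The restricted objective is at most the full objective, giving $\gammatwo(A \mid Y) \leq \gammatwo(A \oplus B \mid \{Y_j \oplus Z_j\})$; the analogous restriction to $D_1 \times D_2$ gives the same bound for $\gammatwo(B \mid Z)$, and taking the maximum completes the direction.

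The main care needed is in the direct-sum bookkeeping: \Cref{def:gamma} implicitly indexes rows and columns by tuples so that ``$x_j \neq y_j$'' is meaningful, so one must fix conventions for the disjoint union in a way compatible with this. The cleanest route is to observe that cross-block constraints are trivially satisfied on both sides because the filter entry and the target entry both vanish there, which makes the argument robust to the choice of embedding of $C_1 \sqcup D_1$ and $C_2 \sqcup D_2$ into an ambient product space. Beyond this setup, I do not expect any substantive obstacles, since both inequalities reduce to elementary manipulations of the variational program.
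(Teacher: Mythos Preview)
The paper does not supply its own proof of this statement: it is recorded as a \emph{Fact} and attributed to \cite[Lemma~A.2, item~12]{stateconversion_lee_2011}, with no argument given in the paper or its appendix. There is therefore nothing in the paper to compare your proposal against.

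That said, your plan is the standard and correct way to establish the direct-sum property directly from \Cref{def:gamma}. The upper bound via orthogonal embedding of two near-optimal feasible families is sound: on-diagonal constraints reduce to those of the summands, off-diagonal constraints are vacuous because both the filter entry and the target entry vanish, and the objective splits as a maximum since each row (resp.\ column) vector lives entirely in one summand. The lower bound via restriction is equally straightforward, since restricting a feasible family for the direct sum to one block gives a feasible family for that block with no larger objective. Your remark about the ``$x_j \neq y_j$'' bookkeeping is well taken---\Cref{def:gamma} as written presumes the index sets sit inside some $\Sigma^n$, and for the direct sum one must embed $C_1 \sqcup D_1$ and $C_2 \sqcup D_2$ consistently---but as you note, the cross-block constraints are trivially satisfied regardless of how this embedding is done, so nothing delicate arises.
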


\subsubsection{Proof of \texorpdfstring{\Cref{lem:and-or}}{Lemma \ref{lem:and-or}}}
\label{app:and-or}

\lemAdversaryAndOr*

\begin{proof}
It suffices to prove the lemma only for $g^\vee$ since  $g^\wedge(x,y) = f_1(x)\wedge f_2(y) = \lnot((\lnot f_1(x))\vee(\lnot f_2(x)))$ and $\adv(f) = \adv(\lnot f)$ for any Boolean function $f\colon \Sigma^n \to \{0,1\}$, where $\lnot$ denotes negation. For the rest of this proof, we write $g \coloneqq g^\vee$, $a_1 \coloneqq a$, and $a_2 \coloneqq b$ for notational convenience.

For $i\in \{1,2\}$, let $\bigl\{\ket{u_{xj}^i} \in \mathbb{C}^{d_i} \mid x\in \Sigma^{a_i}, j \in  [a_i]\bigr\}$ be a minimum solution to the minimization problem in \Cref{fact:adv-optimization} that specifies $\adv(f_i)$. Then
\begin{align}
    &\adv(f_i)^2 = A_0^i \cdot A_1^i, \quad \text{where $A_b^i \coloneqq \max_{x\in f_i^{-1}(b)} \sum_{j=1}^{a_i} \norm{\ket{u_{xj}^i}}^2$ for all $b\in \{0,1\}$,}
    \\
    &\ \text{$\forall x,y\in \Sigma^{a_i}$ with $f_i(x)\neq f_i(y)$,} \quad\sum_{j\in [a_i]:\, x_j\neq y_j}\braket{u_{xj}^i}{u_{yj}^i} = 1.
\end{align}
By mapping $\ket{u_{xj}^i} \mapsto \sqrt{A^i_1} \,  \ket{u_{xj}^i}$ for $x\in f_i^{-1}(0)$ and $\ket{u_{xj}^i} \mapsto \ket{u_{xj}^i}/ \sqrt{A_1^i}$ for $x\in f_i^{-1}(1)$, we can assume without loss of generality that
\begin{equation}\label{eq:rescale}
    A_0^i = \adv(f_i)^2 \quad \text{and} \quad  A_1^i = 1 \quad \text{for all $i\in \{1,2\}$}.
\end{equation}

Now, for $(x,x')\in \Sigma^{a_1} \times \Sigma^{a_2}$ and $k\in [a_1+a_2]$, let $k' \coloneqq k - a_1$ and define $\ket{\lambda_{(x,x')k}} \in \mathbb{C}^{d_1}\oplus \mathbb{C}^{d_2}$ according to the following table. For example, if $(f_1(x),f_2(x'))=(1,0)$ and $k\leq a_1$, then $\ket{\lambda_{(x,x')k}} \coloneqq \ket{u_{xk}^1} \oplus 0$.
\begin{equation}
\renewcommand{\arraystretch}{1.6}
\begin{tabular}{c|c|c}
     $(f_1(x),f_2(x'))$ & $k\leq a_1$ & $k>a_1$ \\[1pt]
    \hline
     $(0,0)$ & $\ket{u_{xk}^1}\oplus 0$  & $0 \oplus \ket{u_{x'k'}^2}$ \\
     $(0,1)$ & $0 \oplus 0$  & $0 \oplus \ket{u_{x'k'}^2}$ \\
     $(1,0)$ & $\ket{u_{xk}^1} \oplus 0$  & $0 \oplus 0$ \\
     $(1,1)$ & $\ket{u_{xk}^1} \oplus 0$  & $0 \oplus 0$
\end{tabular}
\end{equation}

Then it can be directly verified that
\begin{equation}\label{eq:constraint_satisfied}
    \text{$\forall (x,x'),(y,y')\in \Sigma^{a_1}\times \Sigma^{a_2}$ with $g(x,x')\neq g(y,y')$,} \quad \sum_{\substack{k \in [a_1 + a_2]: \\[1pt] (x,x')_k \neq (y,y')_k}} \braket{\lambda_{(x,x') k}}{\lambda_{(y,y') k}} = 1.
\end{equation}
Moreover, from the definitions of $g$ and $\ket{\lambda_{(x,x')k}}$, and \Cref{eq:rescale}, we find
\begin{equation}
\begin{aligned}
   A_0 &\coloneqq \max_{(x,x')\in g^{-1}(0)} \sum_{k=1}^{a_1+a_2} \norm{\ket{\lambda_{(x,x')k}}}^2  = \adv(f_1)^2 + \adv(f_2)^2,\\
   A_1 &\coloneqq \max_{(x,x')\in g^{-1}(1)} \sum_{k=1}^{a_1+a_2} \norm{\ket{\lambda_{(x,x')k}}}^2  = 1.
\end{aligned}
\end{equation}

But \Cref{eq:constraint_satisfied} means that $\bigl\{\ket{\lambda_{(x,x')k} \in \mathbb{C}^{d_1}\oplus \mathbb{C}^{d_2} \mid (x,x') \in \Sigma^{a_1}\times \Sigma^{a_2}, k \in [a_1+a_2]}\bigr\}$ is a feasible solution to the minimization problem in \Cref{fact:adv-optimization} that specifies $\adv(g)$. Therefore
\begin{equation}
    \adv(g)^2 \leq A_0\cdot A_1 = \adv(f_1)^2 + \adv(f_2)^2,
\end{equation}
which completes the proof.
\end{proof}

\subsubsection{Proof of \texorpdfstring{\Cref{lem:adv-switch}}{Lemma \ref{lem:adv-switch}}}
\label{app:adv-switch}

\lemAdversarySwitch*

\begin{proof}
Define $h'\colon \Sigma^a \to \Lambda \times \{0,1\}$ by $h'(x) = (f(x), h(x)) = (f(x), g_{f(x)}(x))$. Let $H'$, $F$, and $\{G_s \mid s\in \Lambda\}$ be the Gram matrices for $h'$, $f$, and $\{g_s \mid s\in \Lambda\}$, respectively, each having size $|\Sigma|^a \times |\Sigma|^a$. Define $\Delta \coloneqq \{\Delta_1,\ldots,\Delta_a\}$, where each $\Delta_j$ is a $|\Sigma|^a \times |\Sigma|^a$ matrix with entries $(\Delta_j)_{xy} \coloneqq 1-\delta_{x_j,y_j}$ for all $x,y \in \Sigma^a$. For $m\in \mathbb{N}$, define $J_{m}$ to be the all-ones matrix of size $m \times m$, and write $J \coloneqq J_{|\Sigma|^a}$. We also write $\gammatwo(f) \coloneqq \gammatwo(J-F|\Delta)$.

We have
\begin{align}
    \adv(h)
    &\leq \adv(h') &&(\text{\Cref{def:adv}})
    \\
    &\leq \gammatwo(J - H' \mid \Delta) &&(\text{\Cref{fact:gamma-bool}})
    \\
    &\leq \gammatwo(J - F \mid \Delta) + \gammatwo(F - H' \mid \Delta) &&\left(\text{\Cref{fact:gamma-triangle}}\right)
    \\
    &\leq O(\adv(f)) + \gammatwo(F - H' \mid \Delta)
    &&(\text{\Cref{fact:gamma-bool}}).
\end{align}

We proceed to bound $\gammatwo(F - H' \mid \Delta)$. For $s\in \Lambda$, let $b_s\coloneqq |f^{-1}(s)|$, where $f^{-1}(s)\coloneqq \{x\in \Sigma^a \mid f(x) = s\}$. Note $\sum_{s\in \Lambda} b_s = |\Sigma|^a$. For $s\in \Lambda$, define a matrix $\tilde{G}_s$ of size $b_s\times b_s$ by $(\tilde{G}_s)_{xy} = (G_s)_{xy}$ for all $x, y \in f^{-1}(s)$. For $s\in \Lambda$ and $j\in [a]$, define a matrix $\Delta_j^{(b_s)}$ of size $b_s\times b_s$ by $\bigl(\Delta_j^{(b_s)}\bigr)_{xy} = 1-\delta_{x_j,y_j}$ for all $x, y \in f^{-1}(s)$. Define $\Delta^{(b_s)}\coloneqq \{\Delta_1^{(b_s)},\ldots, \Delta_a^{(b_s)}\}$.

By inspection, we have
\begin{equation}\label{eq:block-decomposition}
    H' = \bigoplus_{s\in \Lambda} \tilde{G}_{s}, \quad F = \bigoplus_{s\in \Lambda} J_{b_s}, \quad \text{and} \quad  \Delta_j \circ F = \bigoplus_{s \in \Lambda} \Delta_j^{(b_s)}.
\end{equation}

Therefore, we have
\begin{align}
    \gammatwo(F-H' \mid \Delta)  
    &\leq \gammatwo(F-H' \mid \{\Delta_j \circ F \mid j \in [a]\}) \cdot \gammatwo(F) &&(\text{\Cref{fact:gamma-hadamard}})
    \\
    &\leq \gammatwo(F-H' \mid \{\Delta_j \circ F \mid j \in [a]\}) &&(\text{$\gammatwo(F) \leq  1$})
    \\
    &= \max_{s\in \Lambda} \gammatwo(J_{b_s} - \tilde{G}_s \mid  \Delta^{(b_s)})  &&(\text{\Cref{eq:block-decomposition} and \Cref{fact:gamma-sum}})
    \\
    &\leq \max_{s \in \Lambda} \gammatwo(J - G_s \mid \Delta) = \max_{s \in \Lambda} \adv(g_s).
\end{align}
The lemma follows.
\end{proof}

\subsection{Randomized search}\label{app:randomized-reduction}
In this appendix, we equate the totally ordered set $\Sigma$ with $\mathbb{Z}$ for convenience of presentation. It is clear that our randomized search can be generalized to any totally ordered set $\Sigma$.

We claim that $\mt_{j,n}$ and $\mh_{j,n}$ can be computed by a randomized search that uses $O(\log(n))$ computations of $\LIS_{j,n}$ and calls to Grover search. The claim follows from \Cref{lemma:randomized_search}, which describes the randomized search, with the oracles $\calR$ and $\calO$ in its statement instantiated as follows.
\begin{enumerate}
    \item \emph{Instantiating the oracle $\calR$.} This is Grover search with marking function $r_1 < \cdot < r_2$. Note that Grover search does indeed return a uniformly random marked item as required to apply \Cref{lemma:randomized_search}.
    
    \item \emph{Instantiating the oracle $\calO$.} Let $x \in \mathbb{Z}^n$ and $u \in \mathbb{Z}$. Recall that $\LIS_{j,n}(x)$ decides whether $x$ contains a $j$-IS*. \Cref{tab:mt-lis} and \Cref{tab:mh-lis} show how an algorithm for computing $\LIS_{j,n}$ can be used to instantiate $\calO$ for computing $\mt_{j,n}$ and $\mh_{j,n}$, respectively. In the table headings, ``remove $i$'' is shorthand for ``turn into $*$ when $x$ is queried at position $i$''. (Note that being able to do this removal operation is the reason why we considered $k$-IS* instead of $k$-IS, and why we say that $k$-IS* is more susceptible to recursion than $k$-IS.)

\begin{table}[ht]
    \centering
    \begin{tabular}{c|c|c}
      \thead{Remove all $i\in [n]$ such that\ $x[i] > u$ \\ and compute $\LIS_{j,n}(x)$} & \thead{Remove all $i\in [n]$ such that $x[i] \geq u$ \\ and compute $\LIS_{j,n}(x)$} & Conclusion
      \\
      \hline
      $0$ & $0$ & $\mt_{j,n}(x)>u$
      \\
      $0$ & $1$ & Impossible
      \\
      $1$ & $0$ & $\mt_{j,n}(x)=u$
      \\
      $1$ & $1$ & $\mt_{j,n}(x)<u$
    \end{tabular}
\caption{\label{tab:mt-lis}Instantiating $\calO$ for computing $\mt_{j,n}$ using $\LIS_{j,n}$.}
\end{table}

\begin{table}[ht]
    \centering
    \begin{tabular}{c|c|c}
      \thead{Remove all $i\in [n]$ such that $x[i] < u$ \\ and compute $\LIS_{j,n}(x)$}  & \thead{Remove all $i\in [n]$ such that $x[i] \leq u$ \\ and compute $\LIS_{j,n}(x)$} & Conclusion
      \\
      \hline
      $0$ & $0$ & $\mh_{j,n}(x)<u$
      \\
      $0$ & $1$ & Impossible
      \\
      $1$ & $0$ & $\mh_{j,n}(x)=u$
      \\
      $1$ & $1$ & $\mh_{j,n}(x)>u$
    \end{tabular}
\caption{\label{tab:mh-lis}Instantiating $\calO$ for computing $\mh_{j,n}$ using $\LIS_{j,n}$.}
\end{table}

\end{enumerate}

\begin{Remarks*}\
\begin{enumerate}[topsep=4pt]
    \item Our randomized search generalizes the technique of quantum minimum finding \cite{minimum_finding_durr_hoyer_1996} because in quantum minimum finding, $s$ is additionally promised to be the minimum value in $S$ in \Cref{lemma:randomized_search}.
    \item \Cref{lemma:randomized_search} assumes that the oracles $\calO$ and $\calR$ are noiseless. In fact, the way we instantiate them means they can be erroneous with bounded probability. Naively, one might expect this to introduce an additional $\log\log(n)$ factor in the complexity of randomized search due to the need for error reduction since $\calO$ and $\calR$ are called $O(\log(n))$ times. However, techniques in \cite{noisybinary_feige_1994} allow us to remove this $\log\log(n)$ factor. 
\end{enumerate}
\end{Remarks*}

\begin{Lemma}[Randomized search]\label{lemma:randomized_search}
Let $S$ be a multi-set of $n$ integers and let $s\in S$. Let $\calO$ be an oracle that decides for any input $r\in \mathbb{Z}$ whether $s<r$, $s=r$, or $s>r$. Let $\calR$ be an oracle that, for any inputs $r_1,r_2 \in \mathbb{R}\cup\{+\infty,-\infty\}$, selects a uniformly random element $u$ from the multi-set $\{a\in S \mid r_1<a<r_2\}$. Then there exists a constant $c>0$ and a randomized classical algorithm that calls each of $\calO$ and $\calR$ at most $c \cdot \log(n/\delta)$ times to output $s$ with probability at least $1-\delta$.
\end{Lemma}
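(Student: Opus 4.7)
The plan is to implement a randomized analogue of binary search that uses $\calR$ to sample pivots and $\calO$ to compare them to $s$. Maintain an open interval $(l,r)$, initialized to $(-\infty,+\infty)$, with the invariant that $s$ lies strictly between $l$ and $r$. At each iteration, set $u \leftarrow \calR(l,r)$ and query $\calO(u)$: if $\calO$ reports $u=s$, output $u$ and halt; if $u<s$, set $l\leftarrow u$; if $u>s$, set $r\leftarrow u$. Because every update tightens the window past a value $u\neq s$, the invariant is preserved, and the algorithm halts only by outputting the correct value.

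The analysis mirrors quickselect. Let $k_t$ denote the size of the active multi-set $\{a\in S : l<a<r\}$ at the start of iteration $t$, so $k_1\leq n$ and $s$ is always one of its elements. Order this multi-set non-decreasingly, breaking ties arbitrarily but consistently, and let $s$ occupy position $p$. If the sampled $u$ sits at position $q$, then either $q=p$ (so $\calO$ reports equality and we halt), or $q<p$ (and the new active size is $k_t-q$, any duplicates of $u$ also evicted by the strict inequality $l<a$), or $q>p$ (and the new active size is $q-1$). Whenever $q$ lies in the middle half $[\lceil k_t/4\rceil,\,\lfloor 3k_t/4\rfloor+1]$, the new size is at most $3k_t/4$ regardless of sign, so since $q$ is uniform in $\{1,\ldots,k_t\}$, the event ``the active set shrank by a factor of at least $3/4$, or the algorithm halted'' has probability at least $1/2$ in each iteration.

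To conclude, run the loop for $T=c\log(n/\delta)$ iterations and let $Y$ count the iterations in which this good event occurs. The indicators are independent and each has mean at least $1/2$, so a standard Chernoff bound gives $\Pr[Y\leq \log_{4/3}n + 1]\leq \delta$ for a sufficiently large absolute constant $c$. If $Y > \log_{4/3}n + 1$ then at least one good event must be a halt, since otherwise that many shrinkages would force $k$ strictly below $1$, contradicting that $s$ always remains in the active set; hence the algorithm has output $s$ by iteration $T$. Each iteration uses one call to each of $\calO$ and $\calR$, giving the claimed bound of $c\log(n/\delta)$ oracle calls.

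The only subtlety is the bookkeeping around duplicates in $S$: one must confirm that tightening with $l\leftarrow u$ or $r\leftarrow u$ evicts \emph{every} copy of $u$ via the strict inequality $l<a<r$, while leaving $s\neq u$ safely inside. Once this is noted, the quickselect pivot bound and the Chernoff step are entirely routine; there is no substantial obstacle beyond correctly stating the small-$k_t$ boundary cases.
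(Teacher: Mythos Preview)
Your algorithm is identical to the paper's, and your quickselect-style analysis (good pivot with probability $\geq 1/2$, then Chernoff) is a standard variant of the paper's argument (expected size shrinks by $3/4$, then Markov). One small quibble: the good-event indicators are not literally independent, since whether $q$ lands in the middle half depends on $k_t$, which is determined by the history; what you actually have is that each indicator is $\geq 1/2$ conditionally on the past, so the sequence stochastically dominates i.i.d.\ Bernoulli$(1/2)$ variables, and the Chernoff step goes through via that routine coupling.
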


\begin{proof}
We show that the following randomized classical algorithm satisfies the requirements of the lemma. The commands in square brackets are only used in the analysis.

\begin{defaultenumerate}[]
    \item[] Set $r_1 = -\infty$ and $r_2 = \infty$. [Set $t=0$ and $S_0 \coloneqq S$.]
    
    \item[] \textbf{Repeat the following until success:}
    
    Use $\calR$ with input $r_1$ and $r_2$ to select a uniformly random element $u$ from $\{a\in S \mid r_1<a<r_2\}$.
    
    Use $\calO$ to decide whether $s<u$, $s=u$, or $s>u$.
    \begin{defaultenumerate}
        \item If $s=u$, output $s$.
        \item If $s<u$, set $r_2 = u$. [Set $S_t \coloneqq \{a \in S_{t-1} \mid a<u\}$.]
        \item If $s>u$, set $r_1 = u$. [Set $S_t \coloneqq \{a \in S_{t-1} \mid a>u\}$.]
    \end{defaultenumerate}
    
    [Set $t=t+1$.]
\end{defaultenumerate}

For $t\geq 0$, write $X_t \coloneqq |S_t|$, which is a random variable. Write $x_t \coloneqq \mathbb{E}[X_t]$. Initially, $S_0=S$ and $x_0 = n$.

Suppose that the rank of $s$ in $S_t$ is $p$ (if there are multiple $s$s, then $p$ can be taken as the rank of any). Then we have
\begin{equation}\label{eq:shrink}
    \expect[X_{t+1}|X_t] \leq \frac{1}{X_t} \Big(\sum_{i=1}^{p-1}(X_t-i) + \sum_{i = p}^{X_t-1} i\Bigr) \leq \frac{3}{4} X_t,
\end{equation}
where the first inequality is due to the possibility of $S$ containing duplicate elements (otherwise it would be equality) and the second inequality follows by considering the maximization of a quadratic polynomial in $p$. Taking the expectation of \Cref{eq:shrink} gives $x_{t+1}\leq \frac{3}{4}x_t$. Together with $x_0=n$, this implies $x_t\leq (3/4)^t n$. 

Therefore, by Markov's inequality, we have $\text{prob}(X_t > 1) \leq (3/4)^t n \leq \delta$ for $t \geq 10 \log(n/\delta)$. Therefore, since $s\in S_t$ for all $t\geq 0$ before the algorithm outputs $s$, if we force the algorithm to abort after $10 \log(n/\delta)$ steps, we will obtain $s$ at some point with probability at least $1-\delta$.
\end{proof}

\end{document}